\def\showauthornotes{0}
\def\showkeys{0}
\def\showdraftbox{0}
\def\showcolorlinks{1}
\def\usemicrotype{1}
\def\showfixme{0}
\newcommand{\numvec}[1]{\mathbb{#1}}
\newcommand{\ovec}{\numvec{1}}
\newcommand{\indvec}[1]{\ovec_{#1}}
\newcommand{\mst}{\text{s.t. }}
\newcommand{\spec}[1]{\mathsf{Spec}\paren{#1}}
\newcommand{\what}[1]{\widehat{#1}}
\newcommand{\Iset}[2]{\mleft\{\, #1 \;\middle|\; #2 \,\mright\}}
\newcommand{\s}{\sigma}
\newcommand{\ldo}{\ldots}
\newcommand{\prodft}{\{\pm1\}}
\newcommand{\nt}{\notag}
\newcommand{\prin}[1]{\Pr\!\Brac{#1}}
\newcommand{\Ifrak}[1]{\mathfrak{#1}}
\newcommand{\dotdef}{\vcentcolon=}
\newcommand{\Isign}{\sign}
\newcommand{\Ibrace}[1]{\{#1\}}
\newcommand{\Icard}{c}
\providecommand{\RR}{\mathbb{R}}
\providecommand{\NN}{\mathbb{N}}
\newtheorem{theorem}{Theorem}[section]
\newtheorem*{theorem*}{Theorem}
\newtheorem*{proposition*}{Proposition}
\newtheorem{lemma}[theorem]{Lemma}
\newtheorem*{lemma*}{Lemma}
\newtheorem{corollary}[theorem]{Corollary}
\newtheorem*{corollary*}{Corollary}
\newtheorem*{conjecture*}{Conjecture}
\newtheorem{fact}[theorem]{Fact}
\newtheorem*{fact*}{Fact}
\newtheorem*{hypothesis*}{Hypothesis}
\theoremstyle{definition}
\newtheorem{definition}[theorem]{Definition}
\newtheorem*{definition*}{Definition}
\newtheorem{construction}[theorem]{Construction}
\theoremstyle{remark}
\newtheorem{claim}{Claim}[subsection] 
\newtheorem*{claim*}{Claim}
\newtheorem{remark}[theorem]{Remark}
\newtheorem*{remark*}{Remark}
\newtheorem{observation}[theorem]{Observation}
\newtheorem*{observation*}{Observation}
\newcommand{\savehyperref}[2]{\texorpdfstring{\hyperref[#1]{#2}}{#2}}
\newcommand{\Sref}[1]{\hyperref[#1]{\S\ref*{#1}}}
\newcommand{\Authornote}[2]{{\sffamily\small\color{red}{[#1: #2]}}}
\newcommand{\Authornotecolored}[3]{{\sffamily\small\color{#1}{[#2: #3]}}}
\newcommand{\Authorcomment}[2]{{\sffamily\small\color{gray}{[#1: #2]}}}
\newcommand{\Authorstartcomment}[1]{\sffamily\small\color{gray}[#1: }
\newcommand{\Authorfnote}[2]{\footnote{\color{red}{#1: #2}}}
\newcommand{\Authorfixme}[1]{\Authornote{#1}{\textbf{??}}}
\newcommand{\Authormarginmark}[1]{\marginpar{\textcolor{red}{\fbox{\Large #1:!}}}}
\newcommand{\Authornote}[2]{}
\newcommand{\Authornotecolored}[3]{}
\newcommand{\Authorcomment}[2]{}
\newcommand{\Authorstartcomment}[1]{}
\newcommand{\Authorfnote}[2]{}
\newcommand{\Authorfixme}[1]{}
\newcommand{\Authormarginmark}[1]{}
\newcommand{\paren}[1]{(#1)}
\newcommand{\Paren}[1]{\left(#1\right)}
\newcommand{\Brac}[1]{\left[#1\right]}
\newcommand{\abs}[1]{\lvert#1\rvert}
\newcommand{\Abs}[1]{\left\lvert#1\right\rvert}
\newcommand\sett[2]{\left\{ #1 \left| \; \vphantom{#1 #2} \right. #2  \right\}}
\newcommand{\set}[1]{\left\{#1\right\}}
\newcommand{\norm}[1]{\lVert#1\rVert}
\newcommand{\iprod}[1]{\langle#1\rangle}
\newcommand{\Iprod}[1]{\left\langle#1\right\rangle}
\def\dim{\mathrm{ dim}}
\newcommand{\Esymb}{\mathbb{E}}
\newcommand{\Psymb}{\mathbb{P}}
\DeclareMathOperator*{\E}{\Esymb}
\DeclareMathOperator*{\ProbOp}{\Psymb}
\renewcommand{\Pr}{\ProbOp}
\newcommand{\prob}[1]{\Pr \left[ {#1} \right] }
\newcommand{\Prob}[2][]{\Pr_{{#1}}\left[#2\right]} 
\newcommand{\ve}{\;\hbox{and}\;}
\newcommand{\textparen}[1]{\text{(#1)}}
\newcommand{\because}[1]{\textparen{because #1}}
\renewcommand{\because}[1]{\textparen{because #1}}
\newcommand\bdot\bullet
\DeclareMathOperator{\Tr}{Tr}
\DeclareMathOperator{\poly}{poly}
\DeclareMathOperator{\sign}{sign}
\DeclareMathOperator*{\EE}{\mathbb{E}}
\renewcommand{\leq}{\leqslant}
\renewcommand{\le}{\leqslant}
\renewcommand{\geq}{\geqslant}
\renewcommand{\ge}{\geqslant}
\let\epsilon=\varepsilon
\numberwithin{equation}{section}
\newcommand{\MYstore}[2]{%
  \global\expandafter \def \csname MYMEMORY #1 \endcsname{#2}%
}
\newcommand{\MYload}[1]{%
  \csname MYMEMORY #1 \endcsname%
}
\newcommand{\MYnewlabel}[1]{%
  \newcommand\MYcurrentlabel{#1}%
  \MYoldlabel{#1}%
}
\newcommand{\MYdummylabel}[1]{}
\newcommand{\torestate}[1]{%
  \let\MYoldlabel\label%
  \let\label\MYnewlabel%
  #1%
  \MYstore{\MYcurrentlabel}{#1}%
  \let\label\MYoldlabel%
}
\newcommand{\restatetheorem}[1]{%
  \let\MYoldlabel\label
  \let\label\MYdummylabel
  \begin{theorem*}[Restatement of \prettyref{#1}]
    \MYload{#1}
  \end{theorem*}
  \let\label\MYoldlabel
}
\newcommand{\restatelemma}[1]{%
  \let\MYoldlabel\label
  \let\label\MYdummylabel
  \begin{lemma*}[Restatement of \prettyref{#1}]
    \MYload{#1}
  \end{lemma*}
  \let\label\MYoldlabel
}
\newcommand{\restateprop}[1]{%
  \let\MYoldlabel\label
  \let\label\MYdummylabel
  \begin{proposition*}[Restatement of \prettyref{#1}]
    \MYload{#1}
  \end{proposition*}
  \let\label\MYoldlabel
}
\newcommand{\restateclaim}[1]{%
  \let\MYoldlabel\label
  \let\label\MYdummylabel
  \begin{claim*}[Restatement of \prettyref{#1}]
    \MYload{#1}
  \end{claim*}
  \let\label\MYoldlabel
}
\newcommand{\restatecorollary}[1]{%
  \let\MYoldlabel\label
  \let\label\MYdummylabel
  \begin{corollary*}[Restatement of \prettyref{#1}]
    \MYload{#1}
  \end{corollary*}
  \let\label\MYoldlabel
}
\newcommand{\restatefact}[1]{%
  \let\MYoldlabel\label
  \let\label\MYdummylabel
  \begin{fact*}[Restatement of \prettyref{#1}]
    \MYload{#1}
  \end{fact*}
  \let\label\MYoldlabel
}
\newcommand{\restatedefinition}[1]{
\let\MYoldlabel\label 
\let\label\MYdummylabel 
\begin{definition*}[Restatement of \prettyref{#1}] 
    \MYload{#1} 
\end{definition*} 
\let\label\MYoldlabel 
} 
\newcommand{\restate}[1]{%
  \let\MYoldlabel\label
  \let\label\MYdummylabel
  \MYload{#1}
  \let\label\MYoldlabel
}
\let\origparagraph\paragraph
\renewcommand{\paragraph}[1]{\origparagraph{#1.}}
\let\pref=\prettyref
\newcommand{\bproof}[1]{\begin{proof}[Proof of \pref{#1}]}
\newcommand{\eproof}{\end{proof}}
\def\d_1{d}
\newcommand{\remove}[1]{}
\title{Sparse High Dimensional Expanders via Local Lifts}
\author{
    Inbar Ben Yaacov\thanks{Department of Computer Science, Weizmann Institute of Science, Israel. email: inbar-ben.yaacov@weizmann.ac.il. 
    This project has received funding from the European Research Council (ERC) under the European Union’s Horizon 2020 research and innovation programme (grant agreement No. 819702) and from the Simons Foundation Collaboration on the Theory of Algorithmic Fairness.}
    \and
    Yotam Dikstein\thanks{Institute for Advanced Study, USA. email: yotam.dikstein@gmail.com. This material is based upon work supported by the National Science Foundation under Grant No. DMS-1926686.}
    \and
    Gal Maor\thanks{Tel Aviv University, Israel. email: galmaor@mail.tau.ac.il. Supported by Gil Cohen’s ERC grant
949499.}}
\date{\today}
\begin{document}

\maketitle

\begin{abstract}
    High dimensional expanders (HDXs) are a hypergraph generalization of expander graphs. They are extensively studied in the math and TCS communities due to their many applications. Like expander graphs, HDXs are especially interesting for applications when they are bounded degree, namely, if the number of edges adjacent to every vertex is bounded. However, only a handful of constructions are known to have this property, all of which rely on algebraic techniques. In particular, no random or combinatorial construction of bounded degree high dimensional expanders is known. As a result, our understanding of these objects is limited.

    The degree of an \(i\)-face in an HDX is the number of \((i+1)\)-faces that contain it.
    In this work we construct complexes whose higher dimensional faces have bounded degree.
    This is done by giving
    an elementary and deterministic algorithm that takes as input a regular \(k\)-dimensional HDX \(X\) and outputs another regular \(k\)-dimensional HDX \(\what{X}\) with twice as many vertices. While the degree of vertices in \(\what{X}\) grows, the degree of the \((k-1)\)-faces in \(\what{X}\) stays the same. As a result, we obtain a new `algebra-free' construction of HDXs whose \((k-1)\)-face degree is bounded.
    
    Our construction algorithm is based on a simple and natural generalization of the expander graph construction by Bilu and Linial \cite{BiluL2006}, which build
    expander graphs using lifts coming from edge signings.  
    Our construction is based on \emph{local lifts} of high dimensional expanders, where a local lift is a new complex whose top-level links are lifts of the links of the original complex. We demonstrate that a local lift of an HDX is also an HDX in many cases.

    In addition, combining local lifts with existing bounded degree constructions creates new families of bounded degree HDXs with significantly different links than before. For every large enough \(D\), we use this technique to construct families of bounded degree HDXs with links that have diameter \(\geq D\).
\end{abstract}

\section{Introduction} \label{sec:intro}

Expander graphs are graphs that are well connected. These objects are studied extensively in computer science and mathematics \cite{HooryLW2006}, and since their discovery they have found numerous applications in complexity \cite{Dinur07,Reingold2008}, coding theory \cite{SipserS1996,TaShma2017,DinurELLM2022}, derandomization \cite{HooryLW2006,Goldreich2011} and more. Most of these applications rely on families of expander graphs that have a bounded degree. It is well known that random regular graphs are expanders, and many explicit bounded degree constructions are also in hand \cite{Margulis1973,LubotzkyPS1988,ReingoldVW2001,BiluL2006,MarcusSS2015}.

Recently, the study of \emph{high dimensional expanders} (HDXs) emerged (see surveys \cite{lubotzky2018high, gotlib2023no}). These are hypergraph analogues of expander graphs. 
While the full potential of high dimensional expanders is yet to be discovered, they are already important objects of study. High dimensional expanders, and especially bounded degree high dimensional expanders\footnote{ 
A family of HDXs is called \emph{bounded degree} if there is some \(M>0\) 
so that all vertices in every HDX in the family have degree at most \(M\). 
} 
have already yielded exciting applications in various areas such as locally testable codes \cite{DinurELLM2022,PanteleevK22,DinurLZ2023}, quantum complexity \cite{AnshuBN2023}, sampling and Markov chains \cite{DinurK2017,KaufmanO-RW20}, agreement testing \cite{DinurK2017,DiksteinDL2024,BafnaLM2024}, high dimensional geometry and topology \cite{Gromov2010,FoxGLNP2012}, pseudorandomness \cite{CRT23} and random (hyper)graph theory \cite{LinialM2006,MeshulamW09}.

The specific family of high dimensional expanders used in many of the aforementioned applications is tailor-made to satisfy other desired properties, in addition to high dimensional expansion. For example, the local neighborhoods in the high dimensional expanders used in \cite{DinurELLM2022,PanteleevK22,DinurLZ2023} are tailored so that one can define a small locally testable code on them; the high dimensional expanders in \cite{Gromov2010,DiksteinDL2024,BafnaLM2024} also have a vanishing \(1\)-cohomology over certain group coefficients.

However, constructing bounded degree high dimensional expanders (for arbitrarily small spectral expansion of the links) is still a serious challenge. No random model for bounded degree high dimensional expanders is known, and all deterministic constructions known use non-trivial algebraic techniques. The fact that we have only a handful of bounded degree constructions to choose from, makes these objects difficult to understand and to work with.
We believe that many further applications await us once we learn how to diversify these constructions, in the same way that many of the above-mentioned applications of expander graphs grew out of more varied expander constructions that were discovered.

Nowadays, 
all known constructions of HDXs rely on algebraic techniques, including quotients of the Bruhat Tits buildings \cite{Ballantine2000,CartwrightSZ2003,Li2004,LubotzkySV2005b,Dikstein2022,DiksteinDL2024,BafnaLM2024} and coset complexes \cite{KaufmanO18,FriedgutI2020,ODonnellP2022} (see also \cite{HarshaS2019} for a more elementary analysis of some of these HDXs). There have been attempts at constructing bounded degree HDXs with combinatorial tools, but all these constructions fall short either in bounded degreeness \cite{Golowich2023,LiuMSY2023} or in their local spectral expansion 
\cite{Conlon2019,ConlonTZ2020,ChapmanLP2020,LiuMY2020,Golowich2021}.

In particular, it is an important open question whether an algorithm à la Zig-Zag product \cite{ReingoldVW2001} exists for bounded degree high dimensional expanders. That is, an algorithm that given a bounded degree high dimensional expander as input, outputs another high dimensional expander with more vertices and the same bound on the degree and spectral expansion.

As an intermediate result, in this work we develop an algorithm that takes a high dimensional expander as input, and outputs another high dimensional expander with more vertices, the same bound on spectral expansion and the same bound on \emph{the degree of high dimensional faces} (but not on the degree vertices). This algorithm is entirely combinatorial, relying only on the theory of graph covers initiated by \cite{AmitL2002,BiluL2006}. While families of complexes constructed via such an algorithm are not sufficient for applications that require the vertex degree to be bounded, we view this as a stepping stone towards an `algebra free' construction of bounded degree HDXs. One exception to this is the recent work by \cite{AP23}, which analyzes a variant of the well known Glauber dynamics (or up-down) random walk on HDXs. The walk that \cite{AP23} analyzes has bounded degree if and only if the underlying HDX is \((k-1)\)-bounded degree.

\subsection{Preliminaries on High Dimensional Expanders}

To better understand our results, 
let us introduce some standard definitions and notation on simplicial complexes (see \pref{sbsc:hdx-prelims} 
for a more elaborated definitions). 
A simplicial complex is a hypergraph that is downwards closed to containment. A simplicial complex is \(k\)-dimensional if the largest hyperedge in the complex is of size \((k+1)\). We denote by \(X(\ell)\) the sets (aka faces) of size \(\ell+1\). Let \(X\) be a \(k\)-dimensional simplicial complex.  

The degree of a face \(\sigma \in X(\ell)\), is \(d(\sigma) \dotdef \abs{\sett{\tau \in X(\ell+1)}{\tau \supseteq \s}}\).
A family of complexes \(\set{X_i}_{i=0}^\infty\) is \emph{\(\ell\)-bounded degree} if there exists an \(M > 0\) that bounds the degrees of \emph{all \(\ell\)-faces} across \emph{all the complexes simultaneously}. That is, for every \(i\) and any \(\s \in X_i(\ell)\), \(d(\s) \leq M\). We say that a family of complexes are bounded degree if they are \(0\)-bounded degree.
We say that a complex is \((d_0,d_1,\dots,d_{k-1})\)-regular if for every \(\s \in X(\ell)\), \(d(\s)=d_{\ell}\). 

In this paper we are mainly interested in the local spectral expansion definition of high dimensional expanders 
(see \cite{lubotzky2018high} for a survey on other definitions). For this we need to define `links', the generalization of vertex neighborhoods in graphs. For a face \(\sigma \in X\), the link 
of $\s$ is the simplicial complex \(X_\sigma = \sett{\tau \setminus \sigma}{\sigma \subseteq \tau \in X}\). We will be interested in the graph structure underlying the complex and its links. The \(1\)-skeleton of \(X\) is the graph whose vertices are \(X(0)\) and whose edges are \(X(1)\).

\begin{definition}[High dimensional expander]
    For \(\lambda > 0\) we say that \(X\) is a \(\lambda\)-two sided (one sided) high dimensional expander if for every \(\ell \leq k-2\) and \(\sigma \in X(\ell)\), the \(1\)-skeleton of \(X_\sigma\) is a \(\lambda\)-two sided (one sided) spectral expander.
\end{definition} 

\subsection{Our Results}
Our results are based on the notion of a graph lifts. We say a graph \(\what{G} = (\what{V},\what{E})\) is a lift of a graph \(G = (V,E)\) if there exists a graph homomorphism \(\phi:\what{V} \to V\) such that for every \(\hat{v} \in \what{V}\), the mapping \(\phi\) is a bijection on the neighborhood of \(\hat{v}\). Intuitively, a lift of a graph \(G\) is a large graph \(\what{G}\) that locally looks the same as \(G\). Graph lifts are essential in many constructions of expander graphs \cite{AmitL2002,BiluL2006,MarcusSS2015}, and our construction builds on the beautiful work of \cite{BiluL2006}. We elaborate more on this below.

Our main result is a construction algorithm that maintains both expansion and the degree of the \((k-1)\)-faces of a regular complex. 
This algorithm takes as input a \((d_0,d_1,\dots,d_{k-1})\)-regular \(\lambda\)-high dimensional expander \(X\), and outputs a \((2d_0,2d_1,\dots,2d_{k-2},d_{k-1})\)-regular \(\what{X}\) with twice as many vertices that is also a \(\lambda\)-HDX (even though the number of intermediate faces grows like \(|\what{X}(i)| = 2^i |X(i)|\) for \(i \leq k-1\)). This algorithm uses the notion of random lifts \cite{BiluL2006}, and in particular, it requires no algebraic machinery for the construction nor the analysis. More formally, this is the theorem we prove.

\begin{theorem}[See \pref{thm:rand-alg-lll-full} for a more precise statement]\label{thm:main-one-step}
    There exists a randomized algorithm \(\mathcal{A}\) that takes as input a \(k\)-dimensional complex \(X_0\) and an integer \(i \geq 1\), runs in expected time \(\poly((2^i|X_0(0)|)^k)\) at most, and outputs a \(k\)-dimensional complex \(X_i\) with \(2^i |X_0(0)|\) vertices. The algorithm has the following guarantees.
    \begin{enumerate}
        \item If \(X_0\) is a \((d_0,d_1,\ldo,d_{k-1})\)-regular \(\lambda\)-two sided high dimensional expander,
        then \(X_i\) is a \((2^i d_0,\dots,2^i d_{k-2},d_{k-1})\)-regular \(\lambda'\)-two sided high dimensional expander where 
        \[\lambda' = O \left ( \max  \set{\lambda\Paren{1+\log \frac{1}{\lambda}}, \sqrt{\frac{\log^3 d_{k-1}}{d_{k-1}}}}\right ).\]      
        \item For every \(\hat{\s} \in X_i(k-2)\), the link \((X_i)_{\hat{\s}}\) is a lift of $(X_{i-1})_\s$ for some $\s\in X_{i-1}(k-2)$.
    \end{enumerate}
    For every \(j \leq k-2\), \(|X_i(j)|=2^{(j+1)i}|X_0(j)|\).
\end{theorem}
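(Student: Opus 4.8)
The plan is to iterate a single "one-step" lifting operation $i$ times, so the bulk of the work is to analyze one step. In one step, given a $(d_0,\dots,d_{k-1})$-regular complex $X$ on $n$ vertices, I would build $\what{X}$ on $2n$ vertices as follows. For each top-dimensional link — that is, for each $\s\in X(k-2)$, whose link $X_\s$ is a graph on $d_{k-2}$ vertices with $d_{k-1}$-regular $1$-skeleton — I would choose a $2$-lift of that graph via an edge signing in the sense of Bilu--Linial~\cite{BiluL2006}. The key structural point is that these local $2$-lifts can be glued consistently: doubling every vertex set $X(0)\to X(0)\times\{0,1\}$ and specifying, for each face, which of the two "sheets" its vertices lie on, is exactly the data of an edge-signing-type lift on each link. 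I would first set up this combinatorial framework carefully (defining $\what{X}$ as a simplicial complex, verifying downward closure, and checking that $\what{X}$ is $(2d_0,\dots,2d_{k-2},d_{k-1})$-regular — note the top degree is \emph{unchanged} because a $2$-lift of a $d_{k-1}$-regular graph is again $d_{k-1}$-regular). This yields part~(2) of the statement essentially by construction: each link of $\what{X}$ at a $(k-2)$-face is one of the chosen $2$-lifts of the corresponding link of $X$.

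For part~(1), I need $\what{X}$ to remain a $\lambda'$-two-sided HDX. By the local-to-global / Garland-type machinery (the trickling-down phenomenon, which I would invoke as a black box from the HDX preliminaries), it suffices to control the second eigenvalue of the $1$-skeletons of the \emph{top} links $\what{X}_{\hat\s}$ for $\hat\s\in\what{X}(k-2)$, together with connectivity of the lower links — the lower links are themselves lifts (hence still connected and still expanding, by induction on the step and by the same trickling-down), so the new content is purely about the top links. Here is where Bilu--Linial enters: each top link of $\what{X}$ is a $2$-lift of a $d_{k-1}$-regular graph $G=X_\s$ that is already a $\lambda$-expander, and the new eigenvalues of the lift are the eigenvalues of the signed adjacency matrix $A_s$. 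Choosing the signing at random (or derandomizing), the expected spectral radius of $A_s$ is $O(\sqrt{\log^3 d_{k-1}/d_{k-1}})$ by the Bilu--Linial bound, while the "old" eigenvalues of the lift are those of $G$ itself, bounded by $\lambda$ (up to the $\lambda\log(1/\lambda)$ loss coming from trickling-down when we push expansion up from dimension $k-2$). Taking the max gives the stated $\lambda'$.

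The order I would carry this out: (i) define the one-step lift and prove regularity and the link-is-a-lift property; (ii) recall trickling-down and reduce two-sided expansion of $\what{X}$ to (a) expansion of top links and (b) connectivity of all links; (iii) handle (b) by noting lifts of connected expanders are connected; (iv) handle (a) by the spectral decomposition of $2$-lifts plus the Bilu--Linial probabilistic bound on the signed adjacency matrix, applied link-by-link; (v) union-bound over the $\poly(n^k)$ many links and over the $i$ iterations to control the total failure probability and running time, resampling any bad signing (this gives the expected $\poly((2^i n)^k)$ runtime); (vi) assemble parts (1) and (2) and track how $\lambda$ evolves across $i$ steps — crucially it does \emph{not} blow up, because after the first step the relevant $d_{k-1}$ is fixed and the $\sqrt{\log^3 d_{k-1}/d_{k-1}}$ term dominates and is stable, while the $\lambda(1+\log(1/\lambda))$ term is only paid once and then absorbed.

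The main obstacle I expect is step~(ii)--(iv): making the reduction to top links fully rigorous. Trickling-down usually requires the links \emph{one dimension down} to be good expanders (not merely connected), so there is a genuine bootstrapping issue — the lower-dimensional links of $\what{X}$ are lifts, and I must argue their expansion is inherited, which likely forces an induction on the codimension and is the source of the $\lambda(1+\log(1/\lambda))$ loss. Getting the quantitative dependence right there, rather than something like $\lambda^{1/2}$ or a worse polynomial loss, is the delicate part; the Bilu--Linial estimate on the signed adjacency matrix, by contrast, can be cited essentially verbatim.
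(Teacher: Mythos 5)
Your high-level picture --- one sign per $k$-face inducing Bilu--Linial $2$-lifts on every $(k-2)$-link, iterated $i$ times --- is the paper's construction, and your treatment of regularity and of part (2) is fine. The serious gap is in steps (iv)--(v). You propose to choose a good $2$-lift \emph{for each link} and then ``glue consistently,'' controlling failures by a union bound over links plus resampling. But the only randomness available is one bit per $k$-face, each $k$-face lies in $\binom{k+1}{2}$ different $(k-2)$-links, and so the induced edge-signings of distinct links are correlated: you cannot pick them independently, and the tuple of link-signings you can realize is a strict subset of all tuples. Worse, the union bound does not close: the per-link failure probability from the Bilu--Linial estimate is only about $d_{k-2}^{-O(\log d_{k-1})}$ (see \pref{lem:expander-is-sparse}), while the number of $(k-2)$-faces is $\Theta(|X(0)|)$ for fixed degrees and is not assumed to be bounded in terms of $d_{k-2}$ in the theorem you are proving (that extra hypothesis appears only in the deterministic variant, \pref{thm:main-construction-deterministic}). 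The paper's resolution is the Lov\'asz Local Lemma: the event attached to $\s$ depends only on the $k$-faces containing $\s$, its dependency degree is $O(k^2 d_{k-2}d_{k-1})$, and the technical core is characterizing and counting when events attached to distinct faces $\s\ne\s'$ share a $k$-face (\pref{claim:exist-caracterizing-dependent-events}, \pref{claim:exist-bounding-number-dependent-events}); the algorithm then comes from Moser--Tardos (\pref{lem:Lovasz-Local-Lemma-alg}), not from a union bound.

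The second issue is your reliance on trickling-down in steps (ii)--(iii), which you flag as the delicate point --- but it is a non-issue here, and invoking it would actually weaken the result. In the local lift, links of faces of dimension $<k-2$ are not lifts at all: they are the original links tensored with the two-vertex complete graph with self-loops, so their spectra are \emph{identical} to those of the corresponding links of $X$ (\pref{lem:const-structure-of-links}, \pref{cor:consts-on-the-spectra-of-the-links}). No induction on codimension is needed, and the $\lambda(1+\log\frac{1}{\lambda})$ term does not come from trickling down but from the converse-mixing-lemma step (\pref{lem:father-of-inverse-mixing-lemma}) converting the bilinear bound on $A^{f}_\s$ into a spectral bound. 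Finally, to make step (vi) rigorous you must carry the $(\beta,\log d_{k-2})$-sparseness of the links as an invariant across iterations (\pref{def:sparse}, \pref{lem:expander-is-sparse}): it is $\beta$, not $\lambda'$, that is preserved under lifting, and this is what prevents the map $\lambda\mapsto O(\lambda(1+\log\frac{1}{\lambda}))$ from being applied $i$ times.
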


There are various complexes in hand that one can use as the input to this algorithm. These include the complete complex, the complexes from \cite{LubotzkyLR2019}, and even complexes from bounded degree families that are regular, such as those constructed by \cite{FriedgutI2020}. 

We give two proofs to \pref{thm:main-one-step}, building on the techniques developed by \cite{BiluL2006} to analyze lifts in expander graphs, and extend them to high dimensional expanders.

While most of the work in \cite{BiluL2006} regards random lifts of graphs, they also show how to deterministically find expander graphs using lifts. Building on their method, we also give a deterministic algorithm for finding the complexes in \pref{thm:main-one-step}, albeit under some more assumptions on the input \(X_0\). This provides a \emph{deterministic, polynomial time and elementary} construction of a family of \((k-1)\)-bounded \(k\)-dimensional high dimensional expanders.

\begin{theorem}[See \cref{cor:explicit_hdx}] \label{thm:main-construction-deterministic}
There exists a deterministic algorithm \(\mathcal{B}\) that takes as input a \(k\)-dimensional complex \(X_0\) and an integer \(i \geq 1\), runs in time \(\poly((2^i|X_0(0)|)^k)\) at most, and outputs a \(k\)-dimensional complex \(X_i\) with \(2^i |X_0(0)|\) vertices. The algorithm has the following guarantees.
    \begin{enumerate}
        \item If \(X_0\) is a \((d_0,d_1\ldo,d_{k-1})\)-regular \(\lambda\)-two sided high dimensional expander, such that \(d_{k-1}>2^{10k}\) and  $|X_0(k-2)|\le \left(d_{k-2}\right)^{10k}$,
        then \(X_i\) is a \((2^i d_0,\dots,2^i d_{k-2},d_{k-1})\)-regular \(\lambda'\)-two sided high dimensional expander where 
        \[\lambda' = O \left ( 2^{5k} \max  \set{\lambda\Paren{1+\log \frac{1}{\lambda}}, \sqrt{\frac{\log^3 d_{k-1}}{d_{k-1}}}}\right ).\]
        \item For every \(\hat{\s} \in X_i(k-2)\), the link \((X_i)_{\hat{\s}}\) is a lift of $(X_{i-1})_\s$ for some $\s\in X_{i-1}(k-2)$.
\end{enumerate}
For every \(j \leq k-2\), \(|X_i(j)|=2^{(j+1)i}|X_0(j)|\).
\end{theorem}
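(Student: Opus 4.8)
The plan is to obtain the statement by iterating a deterministic \emph{one-step local lift} and bookkeeping the parameters. The one-step primitive is: given a $(d_0,\dots,d_{k-1})$-regular $\lambda$-two-sided HDX $X$ with $d_{k-1}>2^{10k}$ and $\abs{X(k-2)}\le (d_{k-2})^{10k}$, find — deterministically, in time $\poly(\abs{X(0)}^k)$ — a signing $s\colon X(k)\to\{\pm1\}$ such that the local lift $\what X$ it induces (double the vertex set, and let a $k$-face $\{v_0,\dots,v_k\}$ lift to the faces $\{(v_0,\eta_0),\dots,(v_k,\eta_k)\}$ with $\prod_i\eta_i=s(\{v_0,\dots,v_k\})$) is $(2d_0,\dots,2d_{k-2},d_{k-1})$-regular and a $\lambda'$-two-sided HDX with $\lambda'$ as stated. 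Granting this, the statement that each top link of $X_i$ is a lift is immediate from the definition of a local lift; the degree sequence is the observation that in $\what X$ a $(k-1)$-face keeps its $d_{k-1}$ cofaces while every lower face gains a full extra sheet; and the expansion statement follows by applying the primitive $i$ times, after checking that the hypotheses persist: $d_{k-1}$ is never touched, a double count gives $\abs{X_j(k-2)}\le 2^{jk}\abs{X_0(k-2)}\le 2^{jk}(d_{k-2})^{10k}\le(2^jd_{k-2})^{10k}=(d_{k-2}(X_j))^{10k}$, and — crucially — the expansion bound does \emph{not} grow with $j$ (see below). I would also lean on the local-to-global (``trickling down'') arguments already used for \pref{thm:main-one-step}: since every $(k-2)$-link of $\what X$ is a $2$-lift of a $\lambda$-expander, controlling those links makes all lower-dimensional links good expanders automatically, so it suffices to control the top links.

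For the one-step primitive I would follow Bilu--Linial's derandomization. For $\s\in X(k-2)$, the link $\what X_{\hat\s}$ is exactly the $2$-lift of the graph $X_\s$ along the edge-signing $s|_\s$ (the restriction of $s$ to the $k$-faces through $\s$), so its non-trivial eigenvalues are those of $X_\s$ together with those of the signed adjacency matrix $A_{s|_\s}$; the analysis underlying \pref{thm:main-one-step} shows that $\what X$ is the claimed HDX once $\norm{A_{s|_\s}}$ is small for \emph{every} $\s\in X(k-2)$, and that a uniformly random $s$ achieves this. To make this deterministic I would run the method of conditional expectations over the bits $s(\tau)$, $\tau\in X(k)$, revealed one at a time, always choosing the value that keeps the pessimistic estimator
\[
\Phi(s)\;=\;\sum_{\s\in X(k-2)}\Tr\bigl(A_{s|_\s}^{\,2t}\bigr)
\]
from increasing; each summand is nonnegative (an even power) and dominates $\norm{A_{s|_\s}}^{2t}$, and the conditional expectation of $\Phi$ under a partial signing is a signed count of closed walks of length $2t$ in the link graphs in which every not-yet-revealed $k$-face is used an even number of times — a sum of $\poly(\abs{X(0)}^k)$ terms for an appropriate $t$, hence efficiently evaluable. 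This produces an explicit $s$ with $\Phi(s)\le\E_s[\Phi(s)]\le\abs{X(k-2)}\cdot\max_\s\E_s[\Tr(A_{s|_\s}^{2t})]$, so for every $\s$ we get $\norm{A_{s|_\s}}\le\abs{X(k-2)}^{1/2t}\cdot O(\sqrt{d_{k-1}\log^3 d_{k-1}})$: the second factor is the Bilu--Linial bound, and the hypothesis $\abs{X(k-2)}\le(d_{k-2})^{10k}$ forces the first factor to be at most $2^{5k}$, which is exactly the stated overhead; the hypothesis $d_{k-1}>2^{10k}$ is what makes this bound genuinely below $d_{k-1}$, i.e.\ an honest spectral gap.

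The expansion does not degrade under iteration because a $2$-lift only adjoins the eigenvalues of the signing matrix: the spectral radius of the top links of $X_j$ equals $\max\{\lambda_2(\text{top links of }X_{j-1}),\ \norm{A_{s_j|_\s}}\}$, so by induction it never exceeds $\max\{\lambda d_{k-1},\ 2^{5k}\cdot O(\sqrt{d_{k-1}\log^3 d_{k-1}})\}$ for every $j$; dividing by $d_{k-1}$ gives the bound on $\lambda'$ uniformly in $i$.

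The main difficulty is that a single \emph{global} signing of the $k$-faces must control \emph{many overlapping} $(k-2)$-links at once, so one cannot derandomize link by link: the estimator has to be the aggregate $\Phi$ above, and one must argue both (a) that its conditional expectation stays computable in time polynomial in the output size — this is where $\abs{X(k-2)}\le(d_{k-2})^{10k}$, together with $d_{k-1}$ being a fixed constant of the family, keeps the relevant walk enumeration polynomial — and (b) that folding all links into one estimator costs only a $2^{O(k)}$ factor in the final spectral bound rather than a factor growing with the size of the complex. These two points are exactly why the mild extra hypotheses on $X_0$ are needed and why the deterministic guarantee carries the $2^{5k}$ that is absent from the randomized \pref{thm:main-one-step}; what then remains is the routine-but-careful bookkeeping of the iteration and the appeal to trickling-down for the lower links.
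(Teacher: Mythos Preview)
Your overall plan — iterate a one-step derandomized local lift, use the method of conditional expectations with a trace-based pessimistic estimator, and control the $|X(k-2)|^{1/2t}$ overhead via the hypothesis $|X(k-2)|\le d_{k-2}^{10k}$ — matches the paper's approach closely, and your bookkeeping that the size hypothesis persists under iteration is correct. But there is a genuine gap in the way you control the estimator across iterations.

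Your estimator is $\Phi(s)=\sum_\sigma \Tr(A_{s|_\sigma}^{2t})$; to conclude $\|A_{s|_\sigma}\|\le 2^{5k}\cdot O(\sqrt{\log^3 d_{k-1}/d_{k-1}})$ you need $\E_s[\Tr(A_{s|_\sigma}^{2t})]^{1/2t}$ to be $O(\sqrt{\log^3 d_{k-1}/d_{k-1}})$. You call this ``the Bilu--Linial bound,'' but that bound (their Lemma~3.4, the paper's \pref{lem:expander-is-sparse}) does \emph{not} hold for arbitrary $d_{k-1}$-regular link graphs: it requires the link to be $(\beta,\log d_{k-2})$-\emph{sparse} for some fixed $\beta$. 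At step $j=0$ you get this for free from $\lambda$-expansion (with $\beta=2\lambda$), but at step $j\ge 1$ the only thing your estimator certifies about $X_j$'s links is $\lambda'$-expansion with $\lambda'=O(2^{5k}\lambda(1+\log\tfrac{1}{\lambda}))$, hence only $(2\lambda',\cdot)$-sparsity. Feeding $\beta=2\lambda'$ back into the trace bound at step $j+1$ produces $\|A_{s_{j+1}|_\sigma}\|\le O(2^{5k}\cdot 2\lambda'(1+\log\tfrac{1}{2\lambda'}))>\lambda'$, so the expansion degrades by a factor $\Omega(2^{5k})$ at every step, contradicting your claim that the bound is uniform in $j$. (A direct combinatorial bound on $\E[\Tr(A_s^{2t})]$ that depends only on $d_{k-1}$ and not on the structure of the link would repair this, but no such bound follows from Bilu--Linial's argument, and you do not supply one.)

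The paper closes exactly this gap by augmenting the trace estimator with a second family of terms $Z^{(\sigma)}$ that indicate violations of $(\beta,\log d_{k-2}+1)$-sparsity in the lift, weighted so that any sparsity violation already exceeds $\E[Q]$. Because sparsity with the \emph{same} parameter $\beta$ (not $2\lambda'$) is then certified by the conditional-expectations output, and because $(\beta,t)$-sparsity is preserved under \emph{any} local lift (their \pref{clm:sparse_preserved}), the next step's trace bound is again in terms of the original $\beta=2\lambda$, and the expansion stays at $O(2^{5k}\lambda(1+\log\tfrac{1}{\lambda}))$ for all $j$. A smaller point: the lower-dimensional links of $\what X$ are handled in the paper not by trickling down but by the direct structural fact (\pref{cor:consts-on-the-spectra-of-the-links}) that they are tensor products with the complete-with-loops graph on two vertices and hence inherit the expansion of $X$ exactly.
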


Not only is our construction deterministic, but it is also simple and versatile; one can apply it to various kinds of high dimensional expanders, and the family of HDXs obtained by doing so is changes according to the initiating HDX given at the beginning of the process.

\subsection{Comparing to Random Constructions of HDXs}
In the graph case the configuration model yields regular and bounded degree expanders. In contrast, there is no immediate generalization of this model to higher dimensions, that leads to bounded degree HDXs, even if one only wishes to bound the degrees of higher dimensional faces. If one settles for logarithmic degree, then one could use the \cite{LinialM2006} random model to construct random HDXs.
The degree of the top-level faces of these complexes is \(O(\log n)\), where \(n\) is the number of vertices, and the degree of the lower dimensional faces is polynomial in \(n\). For \(2\)-dimensional complexes, the random geometric model in \cite{LiuMSY2023} offers an improvement to the vertex degree that one gets from \cite{LinialM2006}, but it is still polynomial.

It is tempting to try and adapt the \cite{LinialM2006} model for constructing $(k-1)$-bounded degree HDXs, but doing so in a straightforward manner falls short of achieving that. The work by \cite{LubotzkyLR2019} found an appropriate generalization of the random model that gives \((k-1)\)-bounded degree HDXs, utilizing the breakthrough work of \cite{Keevash2014} on Steiner systems. In their model, one takes a complete \((k-1)\)-skeleton and samples \(k\)-faces by sampling random Steiner systems on this complex.

Our construction sidesteps this difficulty by taking a different approach; it uses random \emph{local lifts} of HDXs (presented in the following subsection) instead of trying to construct random ones from scratch. 
In this setting, the high dimensional case behaves more similar to the \(1\)-dimensional one - our work shows that random local lifts of HDXs are HDXs with high probability. Of course, this requires an appropriate modification of the lift notion to local lifts.

\subsection{The Construction}
We now dive into the heart of \pref{thm:main-one-step}. Our construction builds upon the work of random lifts of graphs studied in \cite{BiluL2006}. Random lifts of expanders have been subject to extensive research (e.g., \cite{AmitLMR2001,BiluL2006,AgarwalKM13,MarcusSS2015,Bor20}). However, in this work, we do not try to lift the complex itself. Instead, we construct another complex where the \emph{$(k-2)$-links} are lifts of links in the original complex. We call such a complex a \emph{local lift}.

Let us first explain the idea behind the work of \cite{BiluL2006}. 
Their work suggests a construction of bounded degree family of
expander graphs 
$\set{G_i}_{i=0}^\infty$,
where for every $i$, \(G_{i+1}\) is a lift of \(G_i\). 
The fact that the maximal degree of \(G_{i+1}\) is equal to the maximal degree of \(G_i\) promises that the sequence is bounded degree. Therefore, one only needs to worry about expansion.

The work \cite{BiluL2006} studies random lifts sampled using signings on the edges of a graph $G=(V,E)$, that is, functions \(f:E \to \set{\pm 1}\). 
Given such a signing \(f\), one can construct the following lift \(\what{G} = (\what{V}, \what{E})\) 
by setting
\(\what{V} = V \times \set{\pm 1}\) and \(\set{(v,i),(u,j)} \in \what{E}\) if \(\set{v,u} \in E\) and \(i\cdot j = f(\set{u,v})\).

The work of \cite{BiluL2006} analyzes when a lift \(\what{G}\) obtained by random signing \(f\) is an expander graph. 
They give a proof (based on the Lov\'asz Local Lemma) that every expander \(G\) has
such a ``good'' signing. 
They also provide a deterministic algorithm to construct such a lift using the conditional probabilities method \cite{alon2016probabilistic}.

Our construction generalizes this idea, only instead of lifts coming from edge 
signings, we define \emph{local lifts} coming from \emph{face}-signings.

Let \(X\) be a \(k\)-dimensional simplicial complex and let \(f:X(k) \to \set{\pm 1}\). Define the \(k\)-dimensional complex \(\what{X}\) (where \(f\) is implicit in the notation) as a complex whose vertices are \(\what{X}(0) = X(0) \times \set{\pm 1}\), and whose \(k\)-faces are
\[
    \what{X}(k) = \sett{\left\{v_0^{j_0},v_1^{j_1},\dots,v_k^{j_k}\right\}}{ \set{v_0,v_1,\dots,v_k} \in X(k) \ve \prod_{i=0}^k j_i = f(\set{v_0,v_1,\dots,v_k})}.
\]
For \(1\leq \ell \leq k-1\) the \(\ell\)-faces are independent of the second coordinate, that is,
    \[ \what{X}(\ell) = \sett{\left\{v_0^{j_0},v_1^{j_1},\dots,v_\ell^{j_\ell}\right\}}{\set{v_0,v_1,\dots,v_\ell} \in X(\ell)}.\]

Obviously, the underlying graph of \(\what{X}\) is \emph{not} a lift of the underlying graph of \(X\). Indeed, the degree of each vertex is doubled. However, for every \(\hat{\sigma} \in \what{X}(k-2)\), we show that \(\what{X}_{\hat{\sigma}}\) is isomorphic to a lift of \(X_\sigma\) (where \(\sigma = \sett{v}{v^j \in \hat{\sigma}}\)).

Indeed, let us assume for simplicity that \(k=2\). Consider the link of a vertex \(v^j \in \what{X}(0)\) and define the function \(g:X_v(1) \to \set{\pm 1}\) by \(g(uw)= j \cdot f(uw)\). We claim that \(\what{X}_{v^j}\) is the cover \(g\) induces on \(X_v\). It is easy to see that its vertices are \(\what{X}_{v^j}(0) = X_v(0) \times \set{\pm 1}\), since the vertices in \(\what{X}_{v^j}\) correspond to edges 
in
\(\what{X}(1)\). These are precisely all \(u^1,u^{-1}\) where \(u \in X_v(0)\).

The edges are more delicate. Edges \(\set{u^{j'},w^{j''}} \in \what{X}_{v^j}(1)\) correspond to triangles \(\set{v^j,u^{j'},w^{j''}} \in \what{X}\). Indeed, such a triangle is in \(\what{X}\) if and only if:
\begin{enumerate}
    \item \(\set{u,w} \in X_v(1)\) and,
    \item \(j \cdot j' \cdot j'' = f(\set{v,u,w})\).
\end{enumerate}
The second item occurs if and only if \(j \cdot j' \cdot j'' = g(uw)\). Hence,  \(\what{X}_{v^j}\) is the cover \(g\) induces on \(X_v\).

As mentioned above, we give two proofs that signings \(f\) so that \(\what{X}\) is a high dimensional expander exist. 
The first proof is based on the Lov\'asz Local Lemma and follows the argument in \cite[Lemma 3.3]{BiluL2006}, and generalizes it so that multiple links may be taken into account simultaneously. The second proof is based on a different way to use the Lov\'asz Local Lemma (together with other results from \cite{BiluL2006}) which we find simpler, to deduce high dimensional expansion. This proof, while more restrictive on the link sizes, can be combined with the algorithmic version of the Lov\'asz Local Lemma \cite{MoserT2010}, to prove \pref{thm:main-one-step}. Afterwards, we show that if the links of the complex \(X\) are already dense, then the derandomization technique in \cite{BiluL2006} works for high dimensional expanders, and we can obtain a deterministic construction for \((k-1)\)-bounded \(k\)-dimensional high dimensional expanders, proving \pref{thm:main-construction-deterministic}.

\subsection{Understanding Vertex vs. Edge Degree in Bounded-Degree Constructions}
We can use \pref{thm:main-one-step} to diversifying links in other existing bounded degree constructions, and thus gain more understanding on how possible high dimensional expanders may look like. 
For simplicity, let us stick to the \(2\)-dimensional case, and consider the question \emph{how small could \(d_1\) be given \(d_0\) in a \((d_0,d_1)\)-regular high dimensional expander?} 

Let us consider the behavior of \(d_0\) and \(d_1\) in the known bounded degree constructions \cite{Ballantine2000,CartwrightSZ2003,Li2004,LubotzkySV2005b,KaufmanO18,FriedgutI2020,Dikstein2022,DiksteinDL2024,BafnaLM2024}. 
In all 
the above, \(d_0\) grows to infinity as \(\lambda\) goes to \(0\), and \(d_1 = \poly(d_0)\)\footnote{Technically most of the constructions above are not \emph{regular}, only bounded degree, so \(d_0\) and \(d_1\) should be average values, but we ignore this point for the sake of presentation.}. In other words, the links themselves are ``locally'' dense. 
A natural question to ask is whether the lower bound of $d_1 \geq d_0^{\Omega(1)}$ is necessary for \emph{bounded degree} constructions. In expander \emph{graphs} it is well known that one can increase the size of the graph without increasing the bound on the degree, but this is not the behavior in the known bounded-degree HDX constructions.  

We note that if one allows \(d_0\) to tend to infinity with \(n\), rather than staying constant, then works such as \cite{LubotzkyLR2019} (and also infinite families of complexes constructed by iteratively applying \pref{thm:main-construction-deterministic}) show that this is false. But this question is more interesting when its bounded degree.

\pref{thm:main-construction-deterministic} gives a negative answer to this question in the \(2\)-dimensional case, by proving the following.

\begin{theorem}\torestate{ \label{thm:diameter-intro}
    For every \(\lambda > 0\) and any sufficiently large \(M > 0\), there exists an infinite family of \(2\)-dimensional \(\lambda\)-two sided high dimensional expanders that are \((d_0,\exp(\poly(\frac{1}{\lambda})))\)-regular, for \(M \leq d_0 \leq 2M\).
    
    In particular, for every large enough \(D>0\), there exists an infinite family of \(2\)-dimensional \(\lambda\)-two sided HDXs such that the diameter in every link \(X_v\), is at least \(D\).}
\end{theorem}

We stress that \(d_1=\exp(\poly(\frac{1}{\tilde{\lambda}}))\) depends only on the spectral expansion and not on the number of vertices or \(d_0\).

\subsection{Related Work}
\paragraph{Bounded degree HDX} As discussed above, all known constructions of bounded degree high dimensional expanders use algebraic techniques. The first bounded degree high dimensional expanders for arbitrarily small \(\lambda > 0\) was by \cite{Ballantine2000}. This was followed by many other works that aimed to construct the high dimensional equivalent to Ramanujan graphs \cite{CartwrightSZ2003,Li2004,LubotzkySV2005a,LubotzkySV2005b}. All these constructions are quotients of \(\tilde{A}\)-type Bruhat Tits buildings. The work by \cite{Dikstein2022} used this building together with complex lifts to construct other high dimensional expanders. Recently, high dimensional expanders that come from \(\tilde{C}\) Bruhat Tits buildings were also constructed and studied \cite{ChapmanL2023,DiksteinDL2024,BafnaLM2024}. A second type of constructions come from coset complexes, first studied by \cite{KaufmanO18}. More complexes of this type were constructed by \cite{FriedgutI2020,ODonnellP2022}. We mention that the work by \cite{HarshaS2019} simplified the analysis of these coset complexes, and gave a description of the complexes in \cite{KaufmanO18} in relatively elementary means (albeit still relying on algebraic methods).

Interestingly, \cite{LubotzkyLR2019} give a randomized construction of a \((k-1)\)-bounded degree family of \(\lambda\)-HDXs for arbitrarily small \(\lambda > 0\). This construction is based on random Steiner systems and given in the breakthrough result of \cite{Keevash2014}. The underlying \((k-1)\)-skeletons of the complexes in that family are complete.

There are other bounded degree constructions \cite{Conlon2019,ConlonTZ2020,ChapmanLP2020,LiuMY2020,Golowich2021}. These constructions have various mixing properties, but none of them are \(\lambda\)-HDXs for \(\lambda < \frac{1}{2}\) (where \(\lambda\) is normalized between \(0\) and \(1\)). There are other constructions of \(\lambda\)-HDXs for \(\lambda < \frac{1}{2}\), which are not bounded degree, but are still non-trivially sparse. These include \cite{Golowich2023} - based on Grassmann posets, and \cite{LiuMSY2023} - based on random geometric graphs of the sphere. 

Finally, we comment that previous works also considered the possible degrees \((d_0,d_1)\) possible in a high dimensional expanders. The work by \cite{FriedgutI2020} used irregular algebraic constructions of bounded degree \(\lambda\)-HDXs and `regularized' them, thus showing that there exists bounded degree HDXs that are regular for arbitrarily small \(\lambda\). The work by \cite{ChapmanLP2020} gives a lower bound on the expansion of the underlying graph of the complex in terms of \((d_0,d_1)\), but this lower bound does not rule out (or construct) such HDXs with \(d_1 \ll d_0\). 

\paragraph{Graph and HDX lifts}The study of random graph lifting was initiated in \cite{AmitL2002}. Random lifts from signings of expanders were studied in \cite{BiluL2006} where it was proven that with high probability they are also expanders. This was extended to larger lifts as well \cite{Oliveira2009,AddarioG2010}. Friedman showed that random lifts of Ramanujan graphs are nearly Ramanujan \cite{Friedman2008} (see also \cite{Bor20}). In the seminal paper by \cite{MarcusSS2015}, Ramanujan bipartite graphs were constructed by using graph lifts. One can also define lifts of simplicial complexes. Most known bounded degree high dimensional expanders are constructed using a dual notion of lifts - that is, taking quotients of an infinite object \cite{Ballantine2000,CartwrightSZ2003,Li2004,LubotzkySV2005a,LubotzkySV2005b,KaufmanO18}, in a way such that the infinite object is a lift of the complex that is constructed. \cite{Dikstein2022} studied taking random lifts of simplicial complexes as in \cite{BiluL2006}, but the construction there needed the use of algebraic techniques as well.

\subsection{Open Questions}
\paragraph{Combinatorial constructions}As we mention 
earlier, there is no construction of bounded degree high dimensional expanders that does not rely on non-trivial algebraic techniques. As an intermediate step towards such a construction, can one give a construction of \(k\)-dimensional simplicial complexes that are \((k-2)\)-bounded degree (or \(i\)-bounded degree for any \(i < k-1\))?

\paragraph{Links with other properties}Fix a vertex set \([n]\) and graphs \(\set{G_i}_{i=1}^n\) (one graph for every vertex \(i \in [n]\)). It is interesting to understand whether there exists a graph whose vertex set is \(V=[n]\), and such that the neighborhood of every vertex \(i \in V\) is (isomorphic to) \(G_i\). The structure of such graphs is an extensive topic of study, especially in the case where all \(G_i\)'s are equal (see, e.g., \ \cite{BrownC1975,BlokhuisB1989,MarquezdMNR2003}). One of the major components in the works \cite{DinurELLM2022,PanteleevK22} that constructed asymptotically good locally testable codes and quantum codes, is a construction of graphs that locally look like a neighborhood of a graph product, but globally have improved expansion properties.

In this work, we propose a technique that addresses a related problem. Given a graph \(G\) (which is the one skeleton of a regular \(2\)-dimensional complex \(X\)), we find a graph \(\what{G}\) where every neighborhood in \(\what{G}\) is a random (or deterministic) \(2\)-lift of a corresponding neighborhood in \(G\). Is there a technique that allows us to do so for \emph{any} set of \(2\)-lifts of the respective vertex neighborhoods?

\paragraph{Other notions of expansion}In this paper we mainly deal with local spectral expansion, but other definitions of high dimensional expansion also exist. Most notable is the notion of coboundary expansion defined independently in \cite{LinialM2006} and \cite{Gromov2010}. This notion is important for many applications of high dimensional expanders such as code construction \cite{DinurELLM2022}, topological expansion \cite{Gromov2010} and property testing \cite{KaufmanL2014,GotlibK2022,DiksteinD2023agr,BafnaM2023}. Does a local lift maintain coboundary expansion? If not, is it maintained in interesting special cases?

\paragraph{Better local spectral expansion} Works following \cite{BiluL2006} such as \cite{MarcusSS2015,Bor20} improved the bounds on the spectrum of lifts of regular graphs. Can one construct local lifts of regular high dimensional expanders that are also \emph{Ramanujan}?

\subsubsection{Organization of This Paper}
The necessary preliminaries are given in \pref{sec:prelims}. We describe local lifts in \pref{sec:construction} and describe some of their basic properties. In \pref{sec:existence} we show existence of good local lifts by modifying a Lov\'asz Local Lemma argument by \cite{BiluL2006}. In \pref{sec:rand-alg} we prove \pref{thm:main-one-step} using the algorithmic Lov\'asz Local Lemma \cite{MoserT2010} and derive \pref{thm:diameter-intro}. In \pref{sec:deterministic} we show that the method of derandomization in \cite{BiluL2006} could be generalized to our case as well and prove \pref{thm:main-construction-deterministic}. 
\section{Preliminaries} \label{sec:prelims}
Unless explicitly stated, all logarithms are with base \(2\). The \(\ln\) function is a logarithm with base \(e\). We write \(A \sqcup B\) to denote a disjoint union of sets $A,B$. The For \(n \geq 0\) we write \([n] = \set{0,1,\dots,n}\). For a square matrix (or equivalently, a linear operator on a finite vector space), we write \(\norm{A}\) to denote the operator norm.

\subsection{Graphs}
Let $G=(V,E)$ be a graph. For $u,v\in V$ we write $\Gamma(v)$ for the set of $v$'s neighbors in $G$ and $u\sim v$ if $u$ and $v$ are neighbors. The \emph{indicator vector} of a set $S\subseteq V$, denoted by $\indvec{S}$, is \(\indvec{S}:V \to \set{0,1}\) with $\indvec{S}(v)=1 \iff v\in V$. For two sets $S,T\subseteq V$ we write $E_G(S,T)$ for the set of edges in $G$ between $S$ and $T$. The \emph{graph induced by $S$ and $T$} is  $G'=(S\cup T,E_G(S,T))$. 

For a \(d\)-regular graph we denote \(\ell_2(V) = \set{f:V \to \RR}\) endowed with the inner product
\[\iprod{f,g} = \sum_{v \in V}f(v)g(v).\]

\subsubsection{Expander Graphs}\label{sbsc:prelim-expanders}
Expander graphs are graphs with good connectivity properties. There are many equivalent ways to define expanders \cite{HooryLW2006}. In this manuscript we focus on \emph{spectral expansion}.

Let $G=(V,E)$ be a $d$-regular graph. The \emph{random walk matrix} of $G$ is a matrix $A \in \mathbb{R}^{V \times V}$ defined by $A(u,v)=\frac{1}{d}$ if $u\in\Gamma(v)$ and otherwise $0$. Equivalently, it corresponds to the random walk operator \(A : \ell_2(V) \to \ell_2(V)\) with \(Af(v) = \frac{1}{d}\sum_{u \in \Gamma(v)}f(u)\). We abuse notation and use \(A\) for both the matrix and the random walk operator it represents. We sometimes denote this operator by \(A_G\) when \(G\) is unclear from the context.

The operator \(A\) is self adjoint with respect to the inner product. Therefore, it has an orthonormal basis of real-valued eigenvectors, where the eigenvalues are denoted by 
    $1=\lambda_1 \ge \lambda_2 \ge \cdots \ge \lambda_n$.
We elaborate and write $\lambda_i(G)$ when the graph in question is unclear from the context.
The \emph{spectrum of $G$} is the spectrum of its random walk matrix and is denoted by $\spec{G}$.

\begin{definition}[spectral expander]
    For $\lambda\in[0,1]$ we say that $G$ is \emph{$\lambda$-two sided (resp.\ one sided) spectral expander} (or \emph{expander} for short) if $\lambda\ge \max\{\lambda_2,|\lambda_n|\}$ (resp.\ $\lambda\ge \lambda_2$).
\end{definition}

\subsubsection{Tensor Product}
Let $G, H$ be any graphs. The \emph{tensor product} of $G$ and $H$, denoted by $G \otimes H$, is the graph with vertices $V(G)\times V(H)$, and edges $E(G \otimes H)=\sett{(a,b)(a',b')}{ \set{a,a'}\in E(G) \ve \set{b,b'}\in E(H)}$.

The following fact is well known.
\begin{fact}\label{fact:eigenvalues-of-tensor-product}
    Let \(G,H\) be graphs. If \(H,G\) are \(\lambda,\lambda'\)-two sided spectral expanders respectively, then \(G \otimes H\) is a \(\max\{\lambda,\lambda'\}\)-two sided spectral expander. Moreover, if \(H\) is a \(\lambda\)-two sided spectral expander and \(G\) is a \emph{\(\lambda'\)-one sided} spectral expander, then \(G \otimes H\) is a \(\max\{\lambda,\lambda'\}\)-one sided spectral expander.  
\end{fact}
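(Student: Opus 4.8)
The plan is to reduce the statement to the standard description of the spectrum of a Kronecker product. Note first that the hypotheses implicitly make $G$ and $H$ regular, since in this paper spectral expanders are $d$-regular; say $G$ is $d_G$-regular and $H$ is $d_H$-regular. The key observation is that $G \otimes H$ is $d_G d_H$-regular and its random walk operator factors as $A_{G \otimes H} = A_G \otimes A_H$, the Kronecker product of the two random walk operators. This is immediate from comparing matrix entries: $(a,b) \sim (a',b')$ in $G \otimes H$ exactly when $a \sim a'$ in $G$ and $b \sim b'$ in $H$, so $A_{G\otimes H}\big((a,b),(a',b')\big) = \tfrac{1}{d_G d_H}\Ind[a\sim a']\,\Ind[b\sim b'] = A_G(a,a')\,A_H(b,b')$.

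Given this, I would invoke the textbook fact that $A_G \otimes A_H$ has an orthonormal eigenbasis consisting of the tensors of eigenvectors of $A_G$ and $A_H$, with eigenvalues $\set{\mu\nu : \mu \in \spec{G},\; \nu \in \spec{H}}$ (as a multiset), and then just bound these products. I would first dispose of the degenerate case $\lambda = 1$ or $\lambda' = 1$: then $\max\{\lambda,\lambda'\} = 1$ and the conclusion is trivial, since the random walk matrix of any regular graph is symmetric and doubly stochastic, hence has all eigenvalues in $[-1,1]$. So assume $\lambda, \lambda' < 1$; this forces $G$ and $H$ to be connected (otherwise $\lambda_2 = 1$ contradicts the expansion bound), so $\mu_1 = \nu_1 = 1$ are simple eigenvalues of $A_G$ and $A_H$, realized by the (normalized) all-ones vectors. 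Moreover the eigenvalue $1$ of $A_{G\otimes H}$ is simple: $\mu\nu = 1$ with $|\mu|,|\nu|\le 1$ forces $|\mu|=|\nu|=1$, hence $\mu=\nu=1$, since $H$ — a two-sided $\lambda$-expander with $\lambda<1$ — has no eigenvalue equal to $-1$. Every remaining eigenvalue of $A_H$ lies in $[-\lambda,\lambda]$, and every remaining eigenvalue of $A_G$ lies in $[-\lambda',\lambda']$ in the two-sided case, or is at most $\lambda'$ in the one-sided case.

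For the two-sided statement I would check, for every eigenvalue pair $(\mu,\nu) \neq (1,1)$, that $|\mu\nu| \le \max\{\lambda,\lambda'\}$: if exactly one of $\mu,\nu$ equals $1$ this is the absolute value of the other factor, which is at most $\max\{\lambda,\lambda'\}$; if neither equals $1$ then $|\mu\nu| \le \lambda\lambda' \le \max\{\lambda,\lambda'\}$ as both factors are at most $1$. Hence every eigenvalue of $A_{G\otimes H}$ other than the simple top eigenvalue has absolute value at most $\max\{\lambda,\lambda'\}$, which is the two-sided conclusion. For the one-sided statement I would run the same case split but track the signed value $\mu\nu$ rather than $|\mu\nu|$: if $\nu = 1$ (so $\mu \neq 1$, hence $\mu \le \lambda'$) then $\mu\nu = \mu \le \lambda'$; if $\nu \neq 1$ then $\mu\nu \le |\mu|\,|\nu| \le 1\cdot\lambda = \lambda$. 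So every eigenvalue other than the simple top one is at most $\max\{\lambda,\lambda'\}$, i.e.\ $\lambda_2(G\otimes H) \le \max\{\lambda,\lambda'\}$, giving the one-sided conclusion.

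There is no substantive obstacle here — the statement is a routine consequence of the Kronecker-product spectrum. The only points needing a little care are (i) handling the degenerate case $\lambda = 1$ or $\lambda' = 1$ up front, so that the assertion "the top eigenvalue of $A_{G\otimes H}$ is simple" is legitimate, and (ii) in the one-sided part, remembering to bound the \emph{signed} product $\mu\nu$ rather than $|\mu\nu|$, since $A_G$ may have eigenvalues near $-1$ while $A_H$ does not.
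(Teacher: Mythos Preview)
Your proof is correct and is the standard argument via the Kronecker-product spectrum. The paper does not supply a proof of this fact at all; it is stated as ``well known'' and left without proof, so there is nothing to compare against beyond noting that your write-up is exactly the routine justification one would expect.
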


\subsection{Graph Lifts}
Graph lifts are an important notion, studied first by \cite{AmitL2002,AmitLMR2001} (although the notion of lifts themselves is a classical notion in algebraic topology known for about a century).
\begin{definition}[lift]
    For finite, connected and simple graphs $G$ and $\what{G}$, a \emph{lift} (also known as a covering map) $\phi: \what{G} \to G$ is a graph homomorphism with the property that for all $\hat{v}\in V(\what{G})$, \(\phi\) maps the neighborhood of \(\hat{v}\) in \(\what{G}\) onto the neighborhood of \(\phi(\hat{v})\) in \(G\). Finally, we say that $\what{G}$ is an \emph{$\ell$-lift of $G$} if there exists an \(\ell\)-to-\(1\) covering map $\phi:\what{G} \to G$.
\end{definition}

One way to construct a \(2\)-lift is to use a signing function on the edges as follows.
\begin{definition}[Function induced lift]\label{def:function-induced-lift}
    Let \(G=(V,E)\) be a graph and let \(f:E \to \set{\pm 1}\) be a signing. The \emph{\(f\)-induced lift} \(\what{G} = \what{G}^f\) is the graph whose vertices are
    \[\what{V} = V \times \set{ \pm 1} = \sett{v^j}{v \in V, j \in \set{\pm 1}}\]
    and whose edges are
    \[ \what{E} = \sett{\set{v^j,u^i}}{\set{v,u} \in E, ij=f(\set{u,v})}.\]
    The lift map is \(\phi(v^j) = v\).
\end{definition}
It is elementary to prove this construction is indeed a lift, so we omit this proof. It is also easy to show that any \(2\)-lift is an induced lift for some signing \(f:V(G) \to \set{\pm 1}\). See \cite{Surowski1984} for a more general statement and proof.

In \cref{sec:construction} we generalize the notion of graph lifts to \emph{local lifts} of simplicial complexes. 

\subsubsection{Signing Functions and Lift Expansion}
Fix a graph \(G\) and a signing \(f:E(G) \to \set{\pm 1}\). In this subsection, we characterize the eigenvalues of an \(f\)-induced lift. For this, we need to define the \(f\)-signing of an adjacency operator. For a \(d\)-regular graph the \(f\)-signing of the adjacency operator is the matrix \(A^f(u,v) = f(u,v)\cdot A(u,v)\) for \(\set{u,v} \in E\) and \(A^f(u,v)=0\) if \(\set{u,v} \notin E\). 

This signing matrix is closely related to the random walk operator of the lift. In particular, the following is by now classical.

\begin{lemma} \label{lem:bl_union}
    Let \(G\) be a \(d\)-regular graph and let \(\what{G}\) be an \(f\)-induced \(2\)-lift. Then the eigenvalues of \(A_{\what{G}}\) are the union (with multiplicities) of the eigenvalues of \(A\) and those of \(A^f\). In particular, if \(\norm{A^f} \leq \lambda\) and \(G\) is a \(\lambda'\)-two sided (resp.\ one sided) spectral expander, then \(\what{G}\) is a \(\max\set{\lambda, \lambda'}\)-two sided (resp.\ one sided) spectral expander.
\end{lemma}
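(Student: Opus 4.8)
The plan is to exhibit an explicit orthogonal decomposition $\ell_2(\what V) = V^+ \oplus V^-$ (where $\what V = V \times \set{\pm 1}$) into two $A_{\what G}$-invariant subspaces on which $A_{\what G}$ acts, up to unitary conjugation, as $A$ and as $A^f$ respectively. First I would set up two lifting maps: for $g \in \ell_2(V)$ let $g^+, g^- \in \ell_2(\what V)$ be given by $g^+(v^j) = g(v)$ and $g^-(v^j) = j\cdot g(v)$. Both are injective with $\Snorm{g^\pm} = \sum_{v}\sum_{j\in\set{\pm1}} g(v)^2 = 2\Snorm{g}$, so $g \mapsto \tfrac{1}{\sqrt2}g^+$ and $g \mapsto \tfrac{1}{\sqrt2}g^-$ are linear isometries onto their images $V^+$ and $V^-$. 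Since $\iprod{g^+,h^-} = \sum_v g(v)h(v)\big(1+(-1)\big) = 0$ for all $g,h$ we get $V^+ \perp V^-$, and as $\dim V^+ + \dim V^- = 2\card V = \dim\ell_2(\what V)$ this is an orthogonal direct sum decomposition.

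The key step is a short computation that uses \pref{def:function-induced-lift} directly: for a neighbor $u$ of $v$ in $G$ there is exactly one sign $i$ with $u^i \sim v^j$ in $\what G$, namely $i = j f(\set{u,v})$. Hence, for every vertex $v^j$,
\[
A_{\what G}\,g^+(v^j) = \frac1d\sum_{u \sim v} g^+\Paren{u^{\,jf(uv)}} = \frac1d\sum_{u\sim v} g(u) = (Ag)(v) = (Ag)^+(v^j),
\]
\[
A_{\what G}\,g^-(v^j) = \frac1d\sum_{u\sim v} j\,f(uv)\,g(u) = j\cdot(A^fg)(v) = (A^fg)^-(v^j).
\]
So $V^+$ and $V^-$ are $A_{\what G}$-invariant, $\rest{A_{\what G}}{V^+}$ is isometrically conjugate to $A$, and $\rest{A_{\what G}}{V^-}$ is isometrically conjugate to $A^f$ (a genuine real symmetric operator, since $f$ is defined on unordered edges, hence with real spectrum). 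Therefore $\spec{A_{\what G}}$ equals the union, with multiplicities, of $\spec A$ and $\spec{A^f}$, which is the first assertion.

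For the ``in particular'' part I may assume $\lambda < 1$ (for $\lambda = 1$ the conclusion is automatic, since all eigenvalues of a random-walk operator lie in $[-1,1]$). Then $\norm{A^f} \le \lambda < 1$ forces $1 \notin \spec{A^f}$, so the eigenvalue $1$ of $A_{\what G}$ has multiplicity one and is exactly the one contributed by the constant function in $V^+$. Every other eigenvalue of $A_{\what G}$ is either $\lambda_i(G)$ for some $i \ge 2$ — hence $\le \lambda'$, and $\ge -\lambda'$ in the two-sided case — or an eigenvalue of $A^f$, hence of absolute value $\le \norm{A^f} \le \lambda$. Taking maxima gives $\lambda_2(\what G) \le \max\set{\lambda,\lambda'}$, and additionally $\abs{\lambda_{\min}(\what G)} \le \max\set{\lambda,\lambda'}$ in the two-sided case, which is what is claimed. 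I do not expect a genuine obstacle here; the only points that need care are the neighbor-counting identity read off from \pref{def:function-induced-lift}, carrying the $\sqrt 2$ normalization so that multiplicities match exactly, and the small amount of bookkeeping that pins the Perron eigenvalue $1$ to $A$ rather than to $A^f$.
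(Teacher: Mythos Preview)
Your argument is correct and is exactly the standard proof of this fact (essentially the argument in Bilu--Linial). Note, however, that the paper does not actually prove \pref{lem:bl_union}: it introduces the lemma with ``the following is by now classical'' and moves on, treating it as a known result from the literature. So there is no paper proof to compare against; you have supplied what the paper omits, and your decomposition $\ell_2(\what V)=V^+\oplus V^-$ with the intertwining identities $A_{\what G}\,g^\pm=(A^{(\pm)}g)^\pm$ is the canonical route. One tiny bookkeeping remark: when you assert that the eigenvalue $1$ of $A_{\what G}$ has multiplicity one, you are implicitly also using that $1$ has multiplicity one in $\spec A$, i.e.\ that $\lambda' < 1$; this is harmless since, as with your treatment of $\lambda$, the case $\lambda' \ge 1$ makes the conclusion trivial.
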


Using this lemma, \cite{BiluL2006} gave a criterion for the expansion of the lift graph.
\begin{lemma}[{\cite[Lemma 3.3]{BiluL2006}}]\label{lem:father-of-inverse-mixing-lemma}
Let \(G,f\) and \(A^f\) be as above and assume that \(G\) is a \(\lambda\)-two sided (resp.\ one sided) spectral expander with no self loops. Assume that for any pair of disjoint $S,T\subseteq V(G)$
\begin{equation*}
\Abs{\Iprod{\ovec_S,A^f\ovec_T}} \le \alpha\sqrt{|S||T|},
\end{equation*}
    then \(\what{G}^f\) is a \(\lambda'\)-two sided (resp.\ one-sided) spectral expander where \(\lambda' = \max \set{ \lambda, O\Paren{\alpha \Paren{1+\log\frac{1}{\alpha}}}}\).
\end{lemma}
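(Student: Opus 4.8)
The plan is to bound $\norm{A^f}$ and then invoke \pref{lem:bl_union}. So the entire task reduces to showing that the hypothesis $\Abs{\Iprod{\ovec_S, A^f \ovec_T}} \le \alpha\sqrt{|S||T|}$ for all disjoint $S,T$ implies $\norm{A^f} = O(\alpha(1+\log\frac1\alpha))$. Since $A^f$ is a symmetric real matrix, $\norm{A^f} = \max_{x}\frac{\abs{\iprod{x, A^f x}}}{\iprod{x,x}}$, and it suffices to bound $\abs{\iprod{x, A^f x}}$ for an arbitrary unit vector $x \in \ell_2(V)$.

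The key step is a dyadic decomposition of $x$ by magnitude of coordinates. Split $V$ into $V^+ = \{v : x(v) > 0\}$ and $V^- = \{v : x(v) < 0\}$, and then within each sign class partition the vertices into buckets $B_i = \{v : 2^{-i-1} < \abs{x(v)} \le 2^{-i}\}$ for $i \ge 0$ (coordinates smaller than, say, $1/\sqrt n \cdot n^{-1}$ or simply below some threshold $2^{-L}$ with $L = O(\log(n/\alpha))$ can be collected into a tail that contributes negligibly, since there are at most $n$ of them and $\norm{A^f}_{\mathrm{entrywise}} \le 1/d$). Then $\abs{\iprod{x, A^fx}} \le \sum_{i,j} \abs{\iprod{x|_{B_i}, A^f x|_{B_j}}}$, and on each pair of buckets we replace $x$ by its essentially-constant value: $\abs{\iprod{x|_{B_i}, A^f x|_{B_j}}} \le 2^{-i}2^{-j}\abs{\iprod{\ovec_{B_i}, A^f \ovec_{B_j}}}$ up to a constant factor. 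Now there is a subtlety: the hypothesis applies to \emph{disjoint} sets, so when $B_i = B_j$ (a diagonal block within one sign class) one must split $B_i$ arbitrarily into two halves and apply the bound to the halves, or note that $A^f$ has zero diagonal (no self loops) and handle the within-bucket term directly; either way this costs only a constant. Applying the hypothesis, each term is at most (a constant times) $\alpha\, 2^{-i}2^{-j}\sqrt{|B_i||B_j|} = \alpha \sqrt{s_i s_j}$ where $s_i := 2^{-2i}|B_i|$, and crucially $\sum_i s_i \le 4\sum_v x(v)^2 = 4$ (each $v \in B_i$ has $x(v)^2 \ge 2^{-2i-2}$).

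The final step is the summation. We must bound $\sum_{i,j \le L} \sqrt{s_i s_j}$ subject to $\sum_i s_i \le 4$ and $0 \le i,j \le L$ with $L = O(\log\frac{1}{\alpha}) + O(\log n)$ — but here one has to be careful to get $\log\frac1\alpha$ and not $\log n$. The trick (this is exactly the point of \cite[Lemma 3.3]{BiluL2006}) is to choose the truncation threshold appropriately: coordinates with $\abs{x(v)} \le \alpha/\sqrt n$ (say) contribute, in total over all of $V$, at most $n \cdot (\alpha/\sqrt n)^2 \cdot \frac1d \cdot n = O(\alpha^2 \cdot \text{something small})$ — more carefully, one bounds the tail interaction by $\sum_{v,w}\abs{x(v)}\abs{x(w)}\abs{A^f(v,w)} \le \frac1d \sum_v \abs{x(v)}\sum_{w\sim v}\abs{x(w)}$ and uses Cauchy–Schwarz plus the smallness of the tail coordinates, so that effectively only $i \le L = 2\log\frac1\alpha + O(1)$ buckets matter. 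Then by Cauchy–Schwarz, $\sum_{i,j\le L}\sqrt{s_is_j} = \bigl(\sum_{i \le L}\sqrt{s_i}\bigr)^2 \le (L+1)\sum_{i\le L} s_i \le 4(L+1) = O\bigl(\log\frac1\alpha\bigr)$, giving $\abs{\iprod{x,A^fx}} = O\bigl(\alpha(1+\log\frac1\alpha)\bigr)$ after multiplying back by $\alpha$ and accounting for the constants from the bucketing. This yields $\norm{A^f} = O\bigl(\alpha(1+\log\frac1\alpha)\bigr)$, and \pref{lem:bl_union} finishes the proof: since $G$ is a $\lambda$-spectral expander and $\norm{A^f} = O(\alpha(1+\log\frac1\alpha))$, the lift $\what{G}^f$ is a $\max\{\lambda, O(\alpha(1+\log\frac1\alpha))\}$-spectral expander, one-sided or two-sided according to the hypothesis on $G$.

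The main obstacle is the bookkeeping in the truncation argument: getting the dependence to be $\log\frac1\alpha$ rather than $\log n$ requires choosing the tail threshold in terms of $\alpha$ (and $n$) and showing the discarded mass is genuinely negligible, and one must also handle the diagonal blocks (equal buckets, and within the same sign class) without the disjointness hypothesis — splitting each bucket in two and paying a factor of $2$, or using the no-self-loops assumption. Everything else (symmetry of $A^f$, the variational characterization of the norm, Cauchy–Schwarz) is routine; this is essentially a transcription of the Bilu–Linial argument and we will follow \cite[Lemma 3.3]{BiluL2006} closely.
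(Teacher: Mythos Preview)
The paper does not prove this lemma; it is quoted directly from \cite[Lemma~3.3]{BiluL2006}. Your overall plan---bound $\norm{A^f}$ by a dyadic level-set decomposition of a unit vector and then invoke \pref{lem:bl_union}---is exactly the Bilu--Linial argument, and the reduction to controlling $\sum_{i,j}\alpha\sqrt{s_is_j}$ with $s_i = 2^{-2i}|B_i|$ and $\sum_i s_i \le 4$ is correct. The handling of the diagonal blocks by splitting a bucket in two is also fine.

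The gap is in your truncation step, which you correctly flag as the main obstacle but then resolve incorrectly. Cutting off coordinates below $\alpha/\sqrt{n}$ does make the tail contribute $O(\alpha)$, but the surviving head still occupies $\frac{1}{2}\log n + \log\frac{1}{\alpha}$ dyadic levels (since $|x(v)|\in(\alpha/\sqrt n,\,1]$), not $2\log\frac{1}{\alpha}+O(1)$ as you assert; plugging that into Cauchy--Schwarz gives only $\norm{A^f}=O(\alpha\log n)$. No truncation threshold fixes this: a smaller threshold leaves more buckets, a larger one makes the tail too heavy. The actual Bilu--Linial device is to keep all buckets and combine two bounds on $\abs{\iprod{\ovec_{B_i},A^f\ovec_{B_j}}}$: the hypothesis bound $\alpha\sqrt{|B_i||B_j|}$, and the trivial bound $\min(|B_i|,|B_j|)$ coming from the fact that the absolute row sums of $A^f$ equal $1$. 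For $|i-j|\le L$ use the first, getting $\sum_{|i-j|\le L}\alpha\sqrt{s_is_j}\le\alpha(2L+1)\sum_i s_i=O(\alpha L)$ via AM--GM; for $j>i+L$ use the second, getting a contribution $2^{-i-j}|B_i|=2^{i-j}s_i$, which sums over $j>i+L$ and then over $i$ to $O(2^{-L})$. Choosing $L=\lceil\log(1/\alpha)\rceil$ yields $\norm{A^f}=O\Paren{\alpha(1+\log\frac{1}{\alpha})}$ with no $n$-dependence, after which \pref{lem:bl_union} finishes the proof exactly as you say.
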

We note that there is a nice formula for this inner product, which is \[\Iprod{\ovec_S,A^f\ovec_T}= \frac{1}{d}\sum_{(v,u): \set{v,u} \in E(G)}f(u,v)\indvec{S}(v) \indvec{T}(u).\]

\subsection{High Dimensional Expanders} \label{sbsc:hdx-prelims}

\begin{definition}[simplicial complex] A \emph{$k$-dimensional simplicial complex} is a finite hypergraph \(X\) that is downwards closed to containment. That is, if \(\tau \in X\) and \(\s \subseteq \tau\) then \(\s \in X\).
\end{definition}
We write $X=X(-1) \sqcup X(0) \sqcup X(1)\sqcup \cdots \sqcup X(k)$, where \(X(\ell) = \sett{\s \in X}{|\s| = \ell+1}\) (here \(X(-1)=\set{\emptyset}\) is mainly a formality) and the maximal size of a set \(\s \in X\) is \(k+1\). We call elements \(\s \in X(\ell)\) \emph{\(\ell\)-faces}, and in this case we say that \(X\) is \(k\)-dimensional. In this paper we will always assume the simplicial complex in question is \emph{pure}, that is, that every \(\s \in X(\ell)\) contained in some \(\tau \in X(k)\). 
In addition, we assume it has no self-loops or multifaces. Namely, every vertex appears at most once in each face and any face appears at most once in $X$.

\paragraph{Degrees and regularity}The degree of a face \(\s \in X(i)\) is \(d(\s) = \abs{\sett{\tau \in X(i+1)}{\tau \supseteq \s}}\). We say that a family of simplicial complexes \(\set{X_i}_{i=0}^\infty\) is \emph{\(j\)-bounded degree} if there is an integer \(M > 0\) so that for all \(X_i\) and all \(\s \in X_i(j)\), \(d(\s) \leq M\). If the family is \(0\)-bounded degree, we sometimes just say bounded degree (without the zero).

\begin{definition}[hyper-regularity]
Let $d_0 \ge d_{1} \ge \cdots \ge d_{k-1}$ be positive integers. A $k$-dimensional simplicial complex $X$ is \emph{$(d_0,d_1,\ldo,d_{k-1})$-regular} if for any $i\in\{0,\ldo,k-1\}$ and any $i$-face $\s$, \(d(\s)=d_i\). 
\end{definition}
We say that \(X\) is regular if there exists such a tuple so that \(X\) is $(d_0,d_1,\ldo,d_{k-1})$-regular. In this case we denote by \(d_i(X)=d_i\).

\paragraph{Skeletons and links}The \emph{$j$-skeleton} of a simplicial complex $X$ is the simplicial complex obtained by taking all the $i$-faces of $X$, for all $i\le j$. The \(1\)-skeleton of a complex is also called an \emph{underlying graph}. 

A link is a generalization of a graph neighborhood.
\begin{definition}[link]
    For a $k$-dimensional simplicial complex $X$ and a face $\s\in X$, the \emph{link of $\s$} is the $(k-1-|\s|)$-dimensional simplicial complex 
    \[X_\s = \sett{\tau \setminus \s}{ \tau \in X , \tau \supset \s}.\]
\end{definition}
For \(\ell \leq k-2\) and \(\s \in X(\ell)\) we denote by \(A_\s\) the random walk operator of the \(1\)-skeleton of \(X_\s\).
We often abuse of notation and for a face $\s = \{v_0,\ldo,v_\ell\}\in X(\ell)$ write $\s = v_0\ldo v_\ell$.

Natural analogues of expander graphs to higher dimensions are simplicial complexes where the neighborhoods of the faces are themselves expander graphs. See \pref{sec:intro} for more context on this important definition.

\begin{definition}[$\lambda$-high dimensional expander]
    Let $\lambda\in[0,1]$. A $k$-dimensional simplicial complex $X$ is a \emph{$\lambda$-two sided (resp.\ one sided) high dimensional expander} if for all \(i \leq k-2\) and all $\s \in X(i)$, the 1-skeleton of $X_\s$ is a $\lambda$-two sided (resp.\ one sided) spectral expander.
\end{definition}

\subsection{The Lov\'asz Local Lemma}
The Lov\'asz Local Lemma is a classical result in the probabilistic method.
\begin{lemma}[Lov\'asz Local Lemma \cite{ErdosLovasz75}]\label{lem:Lovasz-Local-Lemma}
    Let $\Ifrak{B}=\{B_1,\ldo,B_n\}$ be a finite set of events in some arbitrary probability space.  
    The \emph{dependency graph of $\Ifrak{B}$} is a digraph $G_\Ifrak{B}=(\Ifrak{B},E)$ so that any event $B_i\in\Ifrak{B}$ is mutually independent of all the events $\Ifrak{B}\setminus \Gamma(B_i)$, where $\Gamma(B_i)$ is the neighborhood of $B_i$ in $G_{\Ifrak{B}}$.

    If there exists a real function $\rho: \Ifrak{B}\to [0,1)$ so that
    \begin{equation}\label{eq:prelim-bound-in-lovasz-local-lemma}
        \prin{B_i} \le \rho(B_i)\prod_{B_j\sim  B_i}\Paren{1-\rho(B_j)}  
    \end{equation}
    for any $B_i\in\Ifrak{B}$, then
    with strictly positive probability,
    none of the events $B_i$ occur.

\end{lemma}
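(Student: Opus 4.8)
The plan is to reduce the statement to the auxiliary bound
\[
\prin{B_i \given \bigcap_{B_j \in S} \overline{B_j}} \le \rho(B_i),
\]
valid for every $B_i \in \mathfrak{B}$ and every $S \subseteq \mathfrak{B} \setminus \{B_i\}$, which I would prove by induction on $|S|$. (The same induction simultaneously shows that every conditioning event has positive probability, so all conditional probabilities below are well defined.) Granting the bound, order the events $B_1,\ldots,B_n$ arbitrarily and apply the chain rule:
\[
\prin{\bigcap_{i=1}^n \overline{B_i}} = \prod_{i=1}^n \prin{\overline{B_i} \given \bigcap_{j < i}\overline{B_j}} = \prod_{i=1}^n \Paren{1 - \prin{B_i \given \bigcap_{j<i}\overline{B_j}}} \ge \prod_{i=1}^n \Paren{1 - \rho(B_i)} > 0,
\]
where the last inequality uses $\rho(B_i) \in [0,1)$; this is exactly the assertion that none of the $B_i$ occur, with strictly positive probability.

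For the auxiliary bound, the base case $S = \emptyset$ is immediate from hypothesis \eqref{eq:prelim-bound-in-lovasz-local-lemma}, since every factor $1 - \rho(B_j)$ is at most $1$. For the inductive step I would split $S = S_1 \sqcup S_2$, where $S_1 = \{B_j \in S : B_j \sim B_i\}$ collects the dependency-graph neighbors of $B_i$ lying in $S$ and $S_2 = S \setminus S_1$ the non-neighbors, and write
\[
\prin{B_i \given \bigcap_{B_j \in S}\overline{B_j}} = \frac{\prin{B_i \cap \bigcap_{B_j \in S_1}\overline{B_j} \given \bigcap_{B_j \in S_2}\overline{B_j}}}{\prin{\bigcap_{B_j \in S_1}\overline{B_j} \given \bigcap_{B_j \in S_2}\overline{B_j}}}.
\]
The numerator is at most $\prin{B_i \given \bigcap_{B_j \in S_2}\overline{B_j}}$, which equals $\prin{B_i}$ because $B_i$ is mutually independent of $\mathfrak{B} \setminus \Gamma(B_i) \supseteq S_2$; hence the numerator is at most $\rho(B_i)\prod_{B_j \sim B_i}(1-\rho(B_j))$. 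For the denominator, enumerate $S_1 = \{B_{j_1},\ldots,B_{j_r}\}$, expand it as $\prod_{\ell=1}^r \prin{\overline{B_{j_\ell}} \given \bigcap_{m<\ell}\overline{B_{j_m}} \cap \bigcap_{B_j \in S_2}\overline{B_j}}$, and bound each factor below by $1 - \rho(B_{j_\ell})$ using the induction hypothesis (legitimate since the conditioning set there involves fewer than $|S|$ events). Thus the denominator is at least $\prod_{\ell=1}^r (1 - \rho(B_{j_\ell})) \ge \prod_{B_j \sim B_i}(1 - \rho(B_j))$, the extra factors in the larger product all being at most $1$. Dividing, the common quantity $\prod_{B_j \sim B_i}(1-\rho(B_j))$ cancels and we obtain $\prin{B_i \given \bigcap_{B_j \in S}\overline{B_j}} \le \rho(B_i)$, completing the induction.

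The only delicate point, and the one I would be most careful about, is the bookkeeping in the inductive step: one must check that every invocation of the induction hypothesis is to a strictly smaller conditioning set — which holds because in the denominator expansion the $\ell$-th conditioning set has size $(\ell - 1) + |S_2| \le |S| - 1$ — and that no probability-zero event is ever conditioned on, which follows inductively from $\rho < 1$ as indicated above. Beyond this, the argument is a routine manipulation of conditional probabilities together with the mutual-independence property guaranteed by the dependency (di)graph, so I do not anticipate any genuine obstacle.
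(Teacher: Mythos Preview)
The paper does not prove this lemma; it is stated in the preliminaries as the classical Lov\'asz Local Lemma and attributed to \cite{ErdosLovasz75}, with no proof given. Your argument is the standard textbook proof (induction on the size of the conditioning set, splitting into neighbors and non-neighbors in the dependency graph) and is correct as written, including the bookkeeping about strictly smaller conditioning sets and positivity of the conditioning events.
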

This lemma also has an algorithmic version, first given in the seminal work of \cite{MoserT2010}. We give below a slightly less general version than the one in \cite{MoserT2010}.
\begin{lemma}[\cite{MoserT2010}] \label{lem:Lovasz-Local-Lemma-alg}
    Let \(\Omega\) be a finite set and let \(\Ifrak{P} = (P_1,P_2,\dots,P_m)\) be a tuple of independent random variables supported on \(\Omega^m\). Let \(\Ifrak{B} = \set{B_1,B_2,\dots,B_n}\) be a finite set of events in the sigma algebra of \(\Ifrak{P}\). Let the dependency graph and the assignment \(\rho: \Ifrak{B} \to [0,1)\) be as in \pref{lem:Lovasz-Local-Lemma}. Then there exists a randomized algorithm that finds an assignment \(p \in \Omega^m\) such that \(p \notin \bigcup_{i=1}^n B_i\). If one can verify whether \(B_i\) holds in time \(t\), then the randomized algorithm runs in \(tn \sum_{i=1}^n \frac{\rho(B_i)}{1-\rho(B_i)}\) expected time.

\end{lemma}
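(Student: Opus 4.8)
The plan is to reproduce the witness-tree analysis of Moser and Tardos. I work in the standard variable framework: each event $B_i$ is a subset of $\Omega^m$ determined by a set of coordinates $\mathrm{vbl}(B_i)\subseteq\{1,\dots,m\}$, and I take the dependency graph of \pref{lem:Lovasz-Local-Lemma} to be the intersection graph, $B_i\sim B_j\iff\mathrm{vbl}(B_i)\cap\mathrm{vbl}(B_j)\neq\emptyset$ (events with disjoint variable sets are genuinely mutually independent, so this is an admissible dependency graph). The algorithm $\mathcal{A}$ is the obvious resampling procedure: draw $P_1,\dots,P_m$ from their product distribution; while the current point of $\Omega^m$ lies in some $B_i$, pick such an $i$ (say, the smallest) and independently resample from their marginals exactly the variables in $\mathrm{vbl}(B_i)$; output the current point once it lies in no $B_i$. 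Whenever $\mathcal{A}$ halts it returns a valid $p\notin\bigcup_{i=1}^n B_i$, and a single iteration costs at most $tn$ (scan the $n$ events, each checkable in time $t$), so it suffices to bound by $\sum_{i=1}^n\frac{\rho(B_i)}{1-\rho(B_i)}$ the expected number $N$ of resampling iterations.

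To do this, I would record the execution log $C(1),C(2),\dots$, where $C(s)\in\Ifrak{B}$ is the event resampled in iteration $s$, and for each $s$ build a \emph{witness tree} $\tau_C(s)$: its root is labelled $C(s)$, and then, reading $r=s-1,s-2,\dots,1$, one attaches a fresh leaf labelled $C(r)$ to a \emph{deepest} node of the tree-so-far whose label lies in $\Gamma(C(r))\cup\{C(r)\}$, if such a node exists. Two ingredients are needed. The first is structural, proved by induction on this construction: every witness tree is \emph{proper}, meaning that labels of distinct nodes at equal depth are distinct and non-adjacent in the dependency graph --- this holds because a leaf is always hung as deep as the rule allows. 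The second, which is the heart of the matter, is the \emph{witness-tree lemma}: for any fixed proper tree $\tau$ with labels $(B_{[v]})_{v\in\tau}$, the probability that $\tau$ arises as $\tau_C(s)$ for some iteration $s$ of the execution is at most $\prod_{v\in\tau}\prin{B_{[v]}}$. I would prove this by coupling $\mathcal{A}$ with a fixed table of pre-sampled values --- one infinite \iid stack per variable, where each resampling pops the next entry of the relevant stacks --- and then processing the nodes of $\tau$ in order of decreasing depth: when node $v$ is processed, the stack entries currently consulted for the variables in $\mathrm{vbl}(B_{[v]})$ are distributed exactly as $(P_j)_{j\in\mathrm{vbl}(B_{[v]})}$; the event ``$\tau$ occurred'' forces each such block to witness $B_{[v]}$; and properness, together with the decreasing-depth order, guarantees that the blocks consulted for different nodes sit at pairwise distinct stack positions and are therefore mutually independent, yielding the product bound. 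I expect this coupling step --- faithfully simulating the abstract run on the table and harvesting independence from properness --- to be the main obstacle.

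It remains to assemble the pieces. Using the (standard) fact that distinct iterations produce distinct witness trees, a union bound over $\mathcal{T}_B$, the set of proper trees rooted at label $B$, together with the witness-tree lemma gives $\E\big[\#\{s:C(s)=B\}\big]\le\sum_{\tau\in\mathcal{T}_B}\prod_{v\in\tau}\prin{B_{[v]}}$. Now set $x'(B)\dotdef\rho(B)\prod_{B'\sim B}(1-\rho(B'))$, which is precisely the right-hand side of the hypothesis of \pref{lem:Lovasz-Local-Lemma}, so $\prin{B_{[v]}}\le x'(B_{[v]})$ and it suffices to bound $Q_B\dotdef\sum_{\tau\in\mathcal{T}_B}\prod_{v\in\tau}x'(B_{[v]})$. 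Decomposing a witness tree rooted at $B$ into its root and, for each label $w\in\Gamma(B)\cup\{B\}$, an optional subtree from $\mathcal{T}_w$ hanging below it, yields the recursion $Q_B=x'(B)\prod_{w\in\Gamma(B)\cup\{B\}}(1+Q_w)$; since $x'$ is calibrated exactly so that substituting the inductive bound $1+Q_w\le\frac{1}{1-\rho(w)}$ collapses the product to $\frac{\rho(B)}{1-\rho(B)}$, an induction on the depth of the trees being summed shows $Q_B\le\frac{\rho(B)}{1-\rho(B)}$ for every $B$. Summing over $B\in\{B_1,\dots,B_n\}$ bounds $\E[N]$ by $\sum_{i=1}^n\frac{\rho(B_i)}{1-\rho(B_i)}$, and multiplying by the per-iteration cost $tn$ yields the claimed expected running time.
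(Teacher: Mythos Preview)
The paper does not actually prove this lemma: it is quoted from \cite{MoserT2010}, and the only content the paper adds is the one-paragraph description of the resampling algorithm immediately following the statement, together with the remark that Moser and Tardos bound the expected number of resamplings of each $B_i$ by $\frac{\rho(B_i)}{1-\rho(B_i)}$. So there is no ``paper's own proof'' to compare against.

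What you have written is exactly the standard Moser--Tardos witness-tree argument, and it is correct. Two small points worth tightening. First, your recursion $Q_B=x'(B)\prod_{w\in\Gamma(B)\cup\{B\}}(1+Q_w)$ is really an inequality, not an equality: allowing an arbitrary subset of $\Gamma(B)\cup\{B\}$ as children overcounts proper trees (children at equal depth must also be pairwise non-adjacent), but overcounting is harmless for the upper bound you need. Second, you silently replace the abstract dependency graph of \pref{lem:Lovasz-Local-Lemma} by the variable-intersection graph; this is the setting in which Moser--Tardos is proved and is exactly how the paper applies the lemma in \pref{sec:rand-alg}, so no harm done, but it is worth saying explicitly that the hypothesis \eqref{eq:prelim-bound-in-lovasz-local-lemma} is assumed for \emph{that} graph.
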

The algorithm described in this lemma is simple. The algorithm starts with randomly sampling some \(p \in \Omega^m\). While there exists some \(B_i\) such that \(p \in B_i\), the algorithm takes an arbitrary such \(B_i\), and resamples all the coordinates \(P_j\) that \(B_i\) depends upon. Of course, if the algorithm halts, then \(p \notin \bigcup_{i=1}^n B_i\). The paper \cite{MoserT2010} shows that the expected number of times an event \(B_i\) is resampled is at most \(\frac{\rho(B_i)}{1-\rho(B_i)}\) which explains the runtime of this algorithm.

\section{Local lifts}\label{sec:construction}

This section presents our basic construction, the \emph{local lift} of a complex. We will define this construction formally and describe some of its properties.

\begin{construction}[Local Lift]\label{const:base-construction}
    Let \(X\) be a $k$-dimensional simplicial complex and let $f:X(k)\to \{\pm 1\}$. The \emph{$f$-local lift} of \(X\) denoted by $\what{X} = \what{X}^f$, is the following \(k\)-dimensional simplicial complex: 
    \begin{itemize}
        \item \(\what{X}(0) = X(0) \times \set{\pm 1}\) and we denote the vertices by \(\what{X}(0) = \sett{v^j}{v \in X(0), j \in \set{\pm 1}}\).
        \item For any $1 \leq \ell \leq k-1$,
        \begin{equation*}
            \what{X}(\ell) = \Iset{\{v_0^{j_0},v_1^{j_1},\ldots,v_{\ell}^{j_{\ell}}\}}{\{v_0,v_1\ldots,v_{\ell}\}\in X(\ell),\  j_0,\ldo,j_{\ell}\in\{\pm1\}}.
        \end{equation*}
        \item Finally, $\what{X}(k)$ is the set of all faces \(\s=\set{v_0^{j_0},v_1^{j_1},\dots,v_d^{j_k}}\) so that the face without the signs is \(\set{v_0,v_1,\dots,v_k} \in X(k)\), and the product of the \(j_i\)'s are equal to \(f(\s)\). Namely,
        \begin{equation*}
            \what{X}(k) = \Iset{\left\{v_0^{j_0},v_1^{j_1},\ldo,v_{k}^{j_{k}}\right\}}{\{v_0,v_1,\ldo,v_{k}\}\in X(k) , \;\; f(\{v_0,v_1,\ldo,v_{k}\})=\prod_{i=0}^k j_i}.
        \end{equation*}
    \end{itemize}
\end{construction}
One can already see that the \((k-1)\)-skeleton of \(\what{X}\) doesn't depend on \(f\) and is just some inflation of the original complex. The dependence on \(f\) is only in the top-level faces. Thus, in particular, \(\what{X}\) is not a lift of (the underlying graph) of \(X\), except when \(k=1\). However, the links of \((k-2)\)-faces in \(\what{X}\) are lifts of links in \(X\), which is why we named this complex a local lift. We will see this in the next subsection.

\subsection{Local Properties of Local Lifts} \label{sbsc:local-lift-properties}
For the rest of this subsection, we fix $X$ to be a $k$-dimensional pure simplicial complex, $f:X(k)\to\prodft$ to be a signing function, and  $\what{X}$ to be the $f$-local lift of $X$. We also need the following three pieces of notation:
\begin{enumerate}
    \item Let $\pi:\what{X}(0)\to X(0)$ be the projection map \(\pi(v^j) = v\), and we extend it to higher dimensional faces as well by \(\pi(\set{v_0^{j_0},v_1^{j_1},\dots,v_i^{j_i}}) = \set{v_0,v_1,\dots,v_i}\).
    \item Let $\Isign : \what{X} \to \prodft$ be 
$\Isign(\hat{\s}) \to \prod_{v^j\in \hat{\s}}j$.\label{item:sign-function}
    \item For any \(\hat{\s} \in \what{X}(k-2)\) with \(\s = \pi(\hat{\s})\) we denote by \(f_{\s}:X_\s(1) \to \set{\pm 1}\) the function \(f_{\s}(e) = f(\s \sqcup e)\) and by \(f_{\hat{\s}}:X_\sigma(1) \to \set{\pm 1}\) the function \(f_{\hat{\s}}(e)=\sign(\hat{\s})\cdot f_\sigma(e)\).\label{item:const-induced-signing-on-edged}
\end{enumerate} 

The first observation is that the degrees of \(\what{X}\) are twice the degrees of \(X\), except for \(d_{k-1}\), which stays the same.
\begin{observation}
    \label{obs:mult_2}
    If $X$ is $(d_0,\ldo,d_{k-1})$-regular then $\what{X}$ is $(2d_0,2d_1\ldots,2d_{k-2},d_{k-1})$-regular. \(\qed\)
\end{observation}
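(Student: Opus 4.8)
The plan is a straightforward face-count directly from \pref{const:base-construction}: I would compute $d(\hat\sigma)$ for an arbitrary $\hat\sigma \in \what{X}(\ell)$ and check that it depends only on $\ell$. The computation naturally splits into two regimes, according to whether adding a vertex to $\hat\sigma$ lands in a layer $\what{X}(\ell+1)$ with $\ell+1 \le k-1$ (where all sign patterns are allowed) or in the top layer $\what{X}(k)$ (where the sign patterns are constrained by $f$). Note that we may use throughout that the map $\pi$ on faces commutes with inclusion, so cofaces of $\hat\sigma$ project to cofaces of $\pi(\hat\sigma)$.

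For $0 \le \ell \le k-2$: I would fix $\hat\sigma = \{v_0^{j_0},\ldots,v_\ell^{j_\ell}\}$ with $\sigma = \pi(\hat\sigma) \in X(\ell)$, and observe that every coface $\hat\tau \in \what{X}(\ell+1)$ of $\hat\sigma$ is of the form $\hat\sigma \cup \{w^i\}$ where $\sigma \cup \{w\} \in X(\ell+1)$ and $i \in \{\pm1\}$ is arbitrary (this is exactly the definition of $\what{X}(\ell+1)$, valid since $\ell+1 \le k-1$). Thus the cofaces of $\hat\sigma$ biject with $\{w : \sigma\cup\{w\} \in X(\ell+1)\} \times \{\pm1\}$, and $(d_0,\ldots,d_{k-1})$-regularity of $X$ gives $\abs{\{w : \sigma \cup \{w\} \in X(\ell+1)\}} = d_\ell$; hence $d(\hat\sigma) = 2d_\ell$, independent of $\hat\sigma$.

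For $\ell = k-1$: the same setup applies, but now a coface $\hat\sigma \cup \{w^i\}$ lies in $\what{X}(k)$ if and only if $\sigma \cup \{w\} \in X(k)$ \emph{and} the product constraint $i\cdot\prod_{m=0}^{k-1} j_m = f(\sigma \cup \{w\})$ holds; since both sides are $\pm 1$, the latter pins $i$ down to the single value $f(\sigma\cup\{w\})\cdot\prod_{m} j_m$. So the cofaces of $\hat\sigma$ biject with $\{w : \sigma \cup \{w\} \in X(k)\}$, which has size $d_{k-1}$, giving $d(\hat\sigma) = d_{k-1}$. Combining the two regimes (and using $d_{k-2}\ge d_{k-1}$ to check the monotonicity of the resulting tuple) yields that $\what{X}$ is $(2d_0,\ldots,2d_{k-2},d_{k-1})$-regular. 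When $k=1$ the first regime is vacuous and this reduces to the familiar fact that an $f$-induced $2$-lift of a $d_0$-regular graph is $d_0$-regular.

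There is no real obstacle here: the statement is essentially bookkeeping. The one place to stay attentive is precisely the contrast between the two regimes. Below the top dimension the sign of a freshly added vertex is unconstrained, so the degree doubles, whereas at the top dimension the product constraint in the definition of $\what{X}(k)$ forces that sign, so $d_{k-1}$ is unchanged; keeping this distinction straight is all that the argument really requires.
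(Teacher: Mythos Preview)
Your proposal is correct and matches the paper's approach: the paper omits the proof as a direct calculation and only comments on the $\ell=k-1$ case, giving exactly the argument you spell out (for each $w \in X_{\pi(\hat\sigma)}(0)$ there is a unique sign $j$ with $\hat\sigma \cup \{w^j\} \in \what{X}(k)$). Your writeup simply fills in the equally routine $\ell \le k-2$ case that the paper leaves implicit.
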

It is a direct calculation, so we have omitted its proof. We just comment that the reason that \(d_{k-1}\) remains the same is that for every \(\hat{\s} \in \what{X}(k-1)\) and \(v \in X_{\pi(\hat{\s})}(0)\) there is exactly one \(j \in \set{\pm 1}\) such that \(\hat{\s} \cup \set{v^j} \in \what{X}(k)\). Therefore, \(d_{k-1}(X) = |X_{\pi(\hat{\s})}(0)| = |\what{X}_{\hat{\s}}(0)| = d_{k-1}(\what{X})\).

The next lemma gives a complete description of the links of \(\what{X}\).
\begin{lemma}[on the structure of the links]\label{lem:const-structure-of-links}
Let $\hat{\s}\in \what{X}$ and denote by $\s=\pi(\hat{\s})$. 
\begin{enumerate}
    \item If $\dim(\hat{\s}) < k-2$, then the 1-skeleton of $\what{X}_{\hat{\s}}$ (the $\hat{\s}$-link of $\what{X}$), is isomorphic to the 1-skeleton of $X_{\s}$ tensored with the complete graph on two vertices with self loops\footnote{Note that this is also true for the link of \(\sigma = \emptyset\), i.e.\ \(\what{X}\) itself.}.

    \item If $\dim(\hat{\s}) = k-2$, then 
    $\what{X}_{\hat{\s}}$ 
    is isomorphic to a lift of $X_\s$ induced by \(f_{\hat{\s}}\).
     \end{enumerate}
    
\end{lemma}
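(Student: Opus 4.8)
The plan is to prove both statements by directly unwinding the definitions of link, of the local lift (\pref{const:base-construction}), of the tensor product, and of the function-induced lift (\pref{def:function-induced-lift}). No genuinely new idea is needed; the content is bookkeeping of face dimensions --- whether the faces of $\what X$ that witness vertices and edges of $\what X_{\hat\sigma}$ lie below the top dimension (so they are sign-oblivious) or are $k$-faces (so $f$ enters) --- together with a single sign cancellation. Throughout I write $\sigma=\pi(\hat\sigma)$.

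First I would fix the vertex set of $\what X_{\hat\sigma}$, which is the same in both parts. For every $0\le\ell\le k-1$, \pref{const:base-construction} makes $\what X(\ell)$ the sign-oblivious inflation $\{\{v_0^{j_0},\dots,v_\ell^{j_\ell}\}:\{v_0,\dots,v_\ell\}\in X(\ell)\}$ of $X(\ell)$ (for $\ell=0$ this is just $\what X(0)=X(0)\times\{\pm1\}$). Now $v^j$ is a vertex of $\what X_{\hat\sigma}$ exactly when $\hat\sigma\cup\{v^j\}\in\what X$; this set has dimension $\dim(\hat\sigma)+1\le k-1$ in both parts, so it belongs to $\what X$ iff its projection $\sigma\cup\{v\}$ belongs to $X$, that is iff $v\in X_\sigma(0)$, for either value of $j$. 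Hence $\what X_{\hat\sigma}(0)=\{v^j:v\in X_\sigma(0),\ j\in\{\pm1\}\}$, which I identify with $X_\sigma(0)\times\{\pm1\}$ through $v^j\leftrightarrow(v,j)$.

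For Part 1, where $\dim(\hat\sigma)<k-2$, the set $\hat\sigma\cup\{u^{j'},w^{j''}\}$ has dimension at most $k-1$ and so is again in the inflation regime: it belongs to $\what X$ iff $\sigma\cup\{u,w\}\in X$, that is iff $\{u,w\}\in X_\sigma(1)$, independently of $j',j''$. Thus $\{u^{j'},w^{j''}\}$ is an edge of $\what X_{\hat\sigma}$, for every choice of signs, exactly when $\{u,w\}$ is an edge of $X_\sigma$; by the definition of the tensor product this is precisely the edge set of the tensor product of the $1$-skeleton of $X_\sigma$ with the complete graph on two vertices with self-loops. Together with the vertex identification above, this yields the asserted isomorphism of $1$-skeletons.

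For Part 2, where $\dim(\hat\sigma)=k-2$, the set $\hat\sigma\cup\{u^{j'},w^{j''}\}$ is a $k$-face, so now the top-dimensional clause of \pref{const:base-construction} governs it: it belongs to $\what X$ iff $\sigma\cup\{u,w\}\in X(k)$ and $f(\sigma\cup\{u,w\})=\prod_{v^a\in\hat\sigma\cup\{u^{j'},w^{j''}\}}a=\sign(\hat\sigma)\cdot j'j''$, where I used the definition of $\sign$ from \pref{item:sign-function}. Since $u,w\notin\sigma$ we have $f(\sigma\cup\{u,w\})=f_\sigma(\{u,w\})$ by \pref{item:const-induced-signing-on-edged}; multiplying that identity through by $\sign(\hat\sigma)$ and using $\sign(\hat\sigma)^2=1$ rewrites it as $j'j''=\sign(\hat\sigma)\,f_\sigma(\{u,w\})=f_{\hat\sigma}(\{u,w\})$. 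So the edges of $\what X_{\hat\sigma}$ are exactly the pairs $\{u^{j'},w^{j''}\}$ with $\{u,w\}\in X_\sigma(1)$ and $j'j''=f_{\hat\sigma}(\{u,w\})$, which is verbatim the edge set that \pref{def:function-induced-lift} assigns to the $f_{\hat\sigma}$-induced lift of $X_\sigma$ (here both $X_\sigma$ and $\what X_{\hat\sigma}$ are $1$-dimensional, so the lift is a graph lift); with the vertex identification this is the claimed isomorphism. The only step that calls for genuine care is this one --- invoking exactly the top-dimensional clause of the construction and carrying out the sign cancellation correctly --- and the degenerate cases ($\hat\sigma=\emptyset$, and $k=1$, where Part 2 just restates that $\what X$ is the $f$-induced lift of $X$) are covered by the same computations, so I do not anticipate any real obstacle.
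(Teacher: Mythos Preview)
Your proof is correct and follows essentially the same approach as the paper's own proof: both arguments directly unwind the definitions of the local lift, the link, the tensor product, and the function-induced lift, observing that faces of dimension at most $k-1$ are sign-oblivious while $k$-faces carry the constraint $\sign(\hat\sigma)\cdot j'j''=f(\sigma\cup\{u,w\})$, which rearranges to $j'j''=f_{\hat\sigma}(\{u,w\})$. The paper's version is terser (it says the first item ``directly follows from the definition of a tensor product'' and gives the same one-line edge computation for the second), whereas you spell out the vertex-set identification and the dimension bookkeeping explicitly, but there is no substantive difference.
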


\begin{proof}
    The first item directly follows from the definition of a tensor product.
    For the second item, suppose $\dim(\hat{\s})=k-2$. Both the vertices of \(\what{X}_{\hat{\s}}\) and of the \(f_{\hat{\s}}\)-induced lift of \(X_\s\) are \(X_\s(0) \times \set{\pm 1}\). 
    As for the edges, \(\set{u^i,v^j} \in \what{X}_{\hat{\s}}(1)\) if and only if \(\set{u,v} \in X_\s(1)\) and \(ij \cdot \sign(\hat{\s}) = f(\s \sqcup \set{u,v})\) (or equivalently \(ij=f_{\hat{\s}}(\set{u,v})\)). This is precisely the relation that defines edges in the \(f_{\hat{\s}}\)-induced lift of \(X_\s\).
\end{proof}

The following corollary that bounds the spectrum of the links is direct.
\begin{corollary}\label{cor:consts-on-the-spectra-of-the-links}
    Let $\hat{\s}\in \what{X}$ and denote $\s=\pi(\hat{\s})$.
    \begin{enumerate}
    \item If $\dim(\hat{\s}) < k-2$ then $\lambda(\what{X}_{\hat{\s}})=\lambda(X_{\s})$.
        
    \item If $\dim(\hat{\s})=k-2$
        then $\spec{\what{X}_{\hat{\s}}}= \spec{A_\s}\cup\spec{A^{f_{\hat{\s}}}_\s}$, where $A_\s$ is the normalized adjacency matrices of $X_\s$ and $A_\s^{f_{\hat{\s}}}$ is its signed normalized adjacency matrix with respect to $f_{\hat{\s}}$.
    \end{enumerate}
\end{corollary}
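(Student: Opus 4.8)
The plan is to derive both items directly from \pref{lem:const-structure-of-links} together with the two spectral facts already recorded: \pref{fact:eigenvalues-of-tensor-product} (refined to the statement that the random-walk eigenvalues of a tensor product are the pairwise products of the factors' eigenvalues) for the first item, and \pref{lem:bl_union} for the second. Concretely, I would write $\s = \pi(\hat{\s})$, split into the two cases $\dim(\hat{\s}) < k-2$ and $\dim(\hat{\s}) = k-2$, and in each case translate the graph-isomorphism supplied by \pref{lem:const-structure-of-links} into an identity of spectra.

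For the first item, \pref{lem:const-structure-of-links} says the $1$-skeleton of $\what{X}_{\hat{\s}}$ is isomorphic to the $1$-skeleton of $X_{\s}$ tensored with the two-vertex graph with a self loop at each vertex; the random walk matrix of the latter is $\tfrac12$ times the $2\times 2$ all-ones matrix, with spectrum $\set{1,0}$. Since the random walk operator of a tensor product is the tensor product of the random walk operators, the spectrum of the $1$-skeleton of $\what{X}_{\hat{\s}}$ is $\sett{\mu\nu}{\mu\in\spec{A_\s},\ \nu\in\set{1,0}} = \spec{A_\s}\cup\set{0}$ as a multiset. Its top eigenvalue is $1$, and the largest absolute value among the remaining eigenvalues is $\max\set{\lambda(X_\s),0}=\lambda(X_\s)$; hence $\lambda(\what{X}_{\hat{\s}})=\lambda(X_\s)$.

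For the second item, \pref{lem:const-structure-of-links} gives that $\what{X}_{\hat{\s}}$ is isomorphic to the lift of $X_\s$ induced by the signing $f_{\hat{\s}}\colon X_\s(1)\to\set{\pm 1}$. Applying \pref{lem:bl_union} to the $1$-skeleton of $X_\s$ with adjacency operator $A_\s$ and signing $f_{\hat{\s}}$, the eigenvalues of the random walk operator of this $2$-lift are, with multiplicities, the union of the eigenvalues of $A_\s$ and those of $A^{f_{\hat{\s}}}_\s$; since $\spec{\what{X}_{\hat{\s}}}$ is by definition the spectrum of the random walk operator of the $1$-skeleton of $\what{X}_{\hat{\s}}$, this is exactly $\spec{\what{X}_{\hat{\s}}} = \spec{A_\s}\cup\spec{A^{f_{\hat{\s}}}_\s}$. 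I do not expect any real obstacle here: the work is done in \pref{lem:const-structure-of-links} and \pref{lem:bl_union}, and the only care needed is bookkeeping — namely (i) noting that the spurious zero eigenvalues contributed by the self-loop factor do not change $\lambda(\cdot)=\max\set{\lambda_2,\abs{\lambda_n}}$, and (ii) checking that the normalization matches, i.e.\ $A_\s$ is precisely the random walk operator that both \pref{lem:bl_union} and the definition of $\spec{\cdot}$ refer to, so no rescaling is required.
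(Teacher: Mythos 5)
Your proposal is correct and follows essentially the same route as the paper: both items are read off from \pref{lem:const-structure-of-links}, with the tensor-product spectral fact handling the low-dimensional links and \pref{lem:bl_union} handling the $(k-2)$-links. Your explicit computation that the self-loop factor contributes spectrum $\set{1,0}$, so the extra zeros do not affect $\lambda(\cdot)$, is a worthwhile elaboration of the paper's terse citation of \pref{fact:eigenvalues-of-tensor-product} (which as stated only gives an inequality, not the claimed equality), but it is the same argument.
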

\begin{proof}
    The first item follows from the first item in \cref{lem:const-structure-of-links} that shows the link of \(\what{X}_{\what{\sigma}}\) is isomorphic to the link of \(X_\sigma\) tensored with a complete graph, and \cref{fact:eigenvalues-of-tensor-product} that bounds the expansion of such a graph. The second item follows from \pref{lem:bl_union} and the fact that the link is the \(f_{\hat{\s}}\)-induced lift of \(X_\s\) as we saw in \pref{lem:const-structure-of-links}.
\end{proof}

\section{Families of HDXs via Random Local Lifts}\label{sec:existence}

This section is dedicated to existential proofs of high dimensional expanders based on our local lifts from \pref{const:base-construction}. 
We start by stating the main theorem of this section which asserts that given an arbitrary HDX $X$, there exists a family of HDXs with parameters comparable to those of $X$ so that any member of the family is a local lift of the former. Formally, 

\begin{theorem}
    \label{thm:exist-family-of-HDXs}
    Let \(X_0\) be a \((d_0,d_1,\dots,d_{k-1})\)-regular \(\lambda\)-two sided (resp.\ one sided) HDX over $n$ vertices, for $\lambda\in[0,1]$. Then there exists a family of $\max\set{\lambda,O\Paren{\sqrt{\frac{k^2\log^3 d_{k-1}}{d_{k-1}}}}}$-two sided (resp.\ one sided) high dimensional expanders $\{X_i\}_{i=0}^\infty$ 
    so that  
    $X_i$ is a $(2^id_0,\ldo,2^id_{k-2},d_{k-1})$-regular complex over $2^i n$ vertices and \(X_{i+1}\) is a local lift of \(X_{i}\). 
\end{theorem}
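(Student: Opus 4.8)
The natural approach is induction on $i$, where the inductive step produces $X_{i+1}$ as an $f$-local lift of $X_i$ for a well-chosen signing $f: X_i(k) \to \{\pm 1\}$, and the heart of the matter is showing that a good signing always exists. By \pref{obs:mult_2}, any local lift already has the right degree sequence $(2^{i+1}d_0,\ldots,2^{i+1}d_{k-2},d_{k-1})$ over $2^{i+1}n$ vertices, so the only thing to control is two-sided (resp.\ one-sided) spectral expansion of all links. By \pref{cor:consts-on-the-spectra-of-the-links}, links of faces of dimension $<k-2$ inherit their spectral gap from $X_i$ unchanged, so it suffices to control the links of the $(k-2)$-faces $\hat\s \in \what X(k-2)$. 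For each such $\hat\s$, \pref{cor:consts-on-the-spectra-of-the-links} tells us $\spec{\what X_{\hat\s}} = \spec{A_\s} \cup \spec{A_\s^{f_{\hat\s}}}$ where $\s = \pi(\hat\s)$; since $\|A_\s\| \le \lambda \le \max\{\lambda, O(\sqrt{k^2\log^3 d_{k-1}/d_{k-1}})\}$ by hypothesis, I only need the signed adjacency operators $A_\s^{f_{\hat\s}}$ to have small norm.

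\textbf{Setting up the Lov\'asz Local Lemma.} For each $\s \in X_i(k-2)$ there are two relevant signings on $X_\s(1)$, namely $f_\s$ and $-f_\s$ (corresponding to $\Isign(\hat\s) = \pm 1$), so it is enough to bound $\|A_\s^{f_\s}\|$ for every $\s$ — the other case is identical since negation does not change the operator norm. Note $X_\s$ is a $d_{k-1}$-regular $\lambda$-expander graph, and $f_\s$ is the restriction of the global signing $f$ to the edges $\{e : \s \sqcup e \in X_i(k)\}$. The plan is to sample $f : X_i(k) \to \{\pm1\}$ uniformly at random and apply the Lov\'asz Local Lemma (\pref{lem:Lovasz-Local-Lemma}) with one ``bad event'' $B_\s$ for each $\s \in X_i(k-2)$, where $B_\s$ is the event that $\|A_\s^{f_\s}\|$ is too large. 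To invoke \pref{lem:father-of-inverse-mixing-lemma} (the Bilu–Linial criterion) it suffices to instead take $B_\s$ to be the event that there exist disjoint $S,T \subseteq X_\s(0)$ with $|\Iprod{\ovec_S, A_\s^{f_\s}\ovec_T}| > \alpha\sqrt{|S||T|}$ for $\alpha = O(\sqrt{\log^3 d_{k-1}/d_{k-1}})$; on the complement of $B_\s$, \pref{lem:father-of-inverse-mixing-lemma} gives $\|A_\s^{f_\s}\| \le O(\alpha(1+\log\frac1\alpha)) = O(\sqrt{k^2\log^3 d_{k-1}/d_{k-1}})$ — one should double-check the $k^2$ versus the constant here, but morally the $\alpha\log\frac1\alpha$ factor is absorbed. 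The probability $\prin{B_\s}$ is bounded, via a union bound over the $\le 4^{d_{k-1}}$ pairs $(S,T)$ together with a Chernoff/Hoeffding bound on the signed sum $\sum_{(u,v): \{u,v\}\in X_\s(1)} f_\s(\{u,v\})\ovec_S(v)\ovec_T(u)$ (a sum of at most $d_{k-1}|S|$ independent $\pm1$ variables), exactly as in \cite[Lemma 3.3]{BiluL2006}; this yields $\prin{B_\s} \le 2^{-\Omega(d_{k-1})}$, say $\le 4^{-d_{k-1}}$, for the stated choice of $\alpha$.

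\textbf{The dependency graph.} Two events $B_\s, B_{\s'}$ are dependent only if the edge-sets $\{e : \s\sqcup e \in X_i(k)\}$ and $\{e : \s'\sqcup e \in X_i(k)\}$ share a common $k$-face of $X_i$, i.e.\ only if $\s \cup \s'$ is contained in some face of $X_i$; since $X_i$ is $(2^id_0,\ldots,d_{k-1})$-regular, each $(k-2)$-face $\s$ lies in at most $d_{k-2}\cdot d_{k-1} = O(d_{k-1}^2)$ faces (using $d_{k-2}\ge d_{k-1}$ and a crude bound, or more carefully $d_{k-2}d_{k-1}/\binom{k}{2}$), and through each such $k$-face there pass $\binom{k+1}{k-1} = O(k^2)$ many $(k-2)$-faces, so the maximum degree $D$ of the dependency graph is at most $O(k^2 d_{k-1}^2)$ — in any case polynomial in $d_{k-1}$ and $k$. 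Then the LLL condition $\prin{B_\s} \le \rho(1-\rho)^D$ with the constant choice $\rho = 1/2$ (or $\rho = 1/(D+1)$) is satisfied because $4^{-d_{k-1}} \ll (D+1)^{-(D+1)}$... which is \emph{not} true for $D = \poly(d_{k-1})$; so instead one takes $\rho(B_\s)$ on the order of $e^{-d_{k-1}/2}$ and checks $e^{-d_{k-1}/2}\cdot(1-e^{-d_{k-1}/2})^{O(k^2 d_{k-1}^2)} \ge (1+o(1))e^{-d_{k-1}/2}\cdot e^{-O(k^2 d_{k-1}^2)e^{-d_{k-1}/2}} \ge 4^{-d_{k-1}}$, which holds once $d_{k-1}$ is a large enough absolute constant times $\log k$ — and when $d_{k-1}$ is below that threshold the claimed bound $\sqrt{k^2\log^3 d_{k-1}/d_{k-1}} \ge 1$ is vacuous. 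This is exactly the kind of bookkeeping I expect to be \textbf{the main obstacle}: getting the quantitative LLL inequality to close with the right dependence on both $d_{k-1}$ and $k$, and making sure the $\alpha(1+\log\frac1\alpha)$ blow-up from \pref{lem:father-of-inverse-mixing-lemma} still lands inside the target bound. Once the LLL applies, there exists a signing $f$ with all $B_\s$ avoided; setting $X_{i+1} = \what{X_i}^f$ and combining \pref{cor:consts-on-the-spectra-of-the-links} (items 1 and 2), \pref{lem:father-of-inverse-mixing-lemma}, and the inductive hypothesis $\|A_\s\|\le\lambda$ shows every link of $X_{i+1}$ is a $\max\{\lambda, O(\sqrt{k^2\log^3 d_{k-1}/d_{k-1}})\}$-expander; the one-sided case is identical using the one-sided clauses of the cited lemmas. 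Induction then builds the whole family $\{X_i\}_{i=0}^\infty$.
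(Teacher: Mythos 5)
Your high-level skeleton (induct on $i$; reduce to the $(k-2)$-links via \pref{obs:mult_2} and \pref{cor:consts-on-the-spectra-of-the-links}; control the signed operators with \pref{lem:father-of-inverse-mixing-lemma}; find the signing with the Lov\'asz Local Lemma) is the same as the paper's. But the quantitative core of your LLL setup is wrong, and it cannot be repaired by bookkeeping. Your bad event $B_\s$ is a union over all disjoint pairs $S,T\subseteq X_\s(0)$, and $X_\s(0)$ has $d_{k-2}$ vertices, not $d_{k-1}$: the union is over $3^{d_{k-2}}$ pairs, not ``$\le 4^{d_{k-1}}$''. For a pair with $|S\sqcup T|=c$, Hoeffding gives failure probability only about $d_{k-1}^{-O(k^2c)}$, while the number of pairs of that size is about $d_{k-2}^{\,c}$; since $d_{k-2}(X_i)=2^id_{k-2}$ grows without bound along the family while $d_{k-1}$ stays fixed, the union bound fails outright after finitely many iterations, so $\prin{B_\s}\le 4^{-d_{k-1}}$ is unobtainable this way. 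Your dependency-degree estimate is also backwards: the number of $k$-faces containing $\s$ is $|X_\s(1)|=d_{k-2}d_{k-1}/2$, and $d_{k-2}\ge d_{k-1}$ gives a \emph{lower} bound, not $O(d_{k-1}^2)$; so $D=\Theta(k^2d_{k-2}d_{k-1})$ grows with $i$ as well.

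The paper closes the argument by making each triple $(\s,S,T)$ with $S\cup T$ inducing a \emph{connected} subgraph of $X_\s$ a separate LLL event (arbitrary pairs reduce to connected ones by component decomposition). The number of events with $|S'\sqcup T'|=c'$ that can depend on a fixed $B_\s^{S,T}$ is then counted by rooted subtrees of the $d_{k-1}$-regular links, giving about $3k^2c\,d_{k-1}^{c'-1}$ (\pref{claim:exist-bounding-number-dependent-events}) --- a quantity depending on $d_{k-1}$ rather than on $d_{k-2}$ --- which the Hoeffding bound $d_{k-1}^{-10ck^2}$ beats with weights $\rho=d_{k-1}^{-3ck^2}$. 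The genuinely new ingredient relative to \cite{BiluL2006}, which your proposal does not engage with, is the cross-link dependency: $B_\s^{S,T}$ and $B_{\s'}^{S',T'}$ with $\s\ne\s'$ interact exactly when some $\tau\in X(k)$ contains both $\s$ and $\s'$ with $\tau\setminus\s\in E_{X_\s}(S,T)$ and $\tau\setminus\s'\in E_{X_{\s'}}(S',T')$; a naive count here introduces a factor of $d_{k-2}$ and ruins the bound, so the paper characterizes these intersections precisely (\pref{claim:exist-caracterizing-dependent-events}) and counts them again by subtrees rooted in $\s$. If you insist on one event per face, you are implicitly following \pref{thm:rand-alg-lll-one-step}, which only works after adding a $(\beta,\log d_{k-2})$-sparsity invariant maintained through the iterations (so that \pref{lem:expander-is-sparse} gives $\prin{C_\s}\le 2d_{k-2}^{-4\log d_{k-1}}$) together with the degree condition \eqref{eq:satisfy-degrees} to beat the dependency degree $\Theta(k^2d_{k-2}d_{k-1})$; neither appears in your write-up.
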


The proof of \pref{thm:exist-family-of-HDXs} is based on proving the single-step version of it, given in \pref{thm:exist-HDX-single-iteration}, and applying it iteratively. 

\begin{theorem}\label{thm:exist-HDX-single-iteration}
    Let $\lambda\in[0,1]$. 
    For any $k$-dimensional, $(d_0,\ldots,d_{k-1})$-regular, $\lambda$-two sided (resp.\ one sided) high dimensional expander $X$ over $n$ vertices, there exists a signing $f:X(k)\to\prodft$ so that $\what{X}$ is a  $\max\left\{\lambda,O\Paren{\sqrt{\frac{k^2 \log^3 d_{k-1}}{d_{k-1}}}}\right\}$-two sided (resp.\ one sided) high dimensional expander with regularity $(2d_0,\ldo,2d_{k-2},d_{k-1})$ and $2n$ vertices.
\end{theorem}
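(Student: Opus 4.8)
The plan is to apply the Lov\'asz Local Lemma (\pref{lem:Lovasz-Local-Lemma}) to a random signing $f:X(k)\to\prodft$, where each value $f(\tau)$ is chosen uniformly and independently in $\prodft$. By \pref{cor:consts-on-the-spectra-of-the-links}, the only links of $\what X$ whose spectrum is not automatically controlled (the lower-dimensional links inherit $\lambda$ via the tensor structure) are the links of $(k-2)$-faces $\hat\s$, and for each such face the spectrum of $\what X_{\hat\s}$ is $\spec{A_\s}\cup\spec{A_\s^{f_{\hat\s}}}$. Since $X_\s$ is already a $\lambda$-expander, it suffices to control $\norm{A_\s^{f_{\hat\s}}}$ for every $\hat\s\in\what X(k-2)$. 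For a fixed $\s\in X(k-1)$ there are two faces $\hat\s$ above it, and $f_{\hat\s}$ differs from $f_\s$ only by a global sign, which does not change the operator norm; so really I need to control $\norm{A_\s^{f_\s}}$ for every $\s\in X(k-1)$ simultaneously, where $f_\s:X_\s(1)\to\prodft$ is $f_\s(e)=f(\s\sqcup e)$.

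Next I would set up the bad events. Following \cite[Lemma 3.3]{BiluL2006} as packaged in \pref{lem:father-of-inverse-mixing-lemma}, the norm bound $\norm{A_\s^{f_\s}}\le\lambda'$ with $\lambda'=O(\alpha(1+\log\tfrac1\alpha))$ follows once, for every pair of disjoint $S,T\subseteq X_\s(0)$, one has $\abs{\iprod{\ovec_S,A_\s^{f_\s}\ovec_T}}\le\alpha\sqrt{|S||T|}$ with $\alpha = O(\sqrt{\log^3 d_{k-1}/d_{k-1}})$ (recall $|X_\s(0)|=d_{k-1}$, the degree of the $(k-1)$-faces). So for each $\s\in X(k-1)$ and each disjoint pair $S,T$ I define the bad event $B_{\s,S,T}$ that this inequality fails. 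Using the formula $\iprod{\ovec_S,A_\s^{f_\s}\ovec_T}=\frac1{d_{k-2}}\sum_{\{u,v\}\in E}f_\s(uv)\indvec S(v)\indvec T(u)$, this is a sum of independent $\pm1$ random variables, so a Chernoff/Hoeffding bound gives $\prin{B_{\s,S,T}}\le 2\exp(-\Omega(\alpha^2 |S||T| d_{k-2}^{\,?}))$; one needs to be slightly careful about normalization here (the edge-count between $S,T$ versus $|S||T|$), but this is exactly the estimate carried out in \cite{BiluL2006}, and choosing $\alpha$ a large-enough multiple of $\sqrt{\log^3 d_{k-1}/d_{k-1}}$ makes $\prin{B_{\s,S,T}}$ at most, say, $d_{k-1}^{-c(|S|+|T|)}$ for a suitable constant $c$.

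Then I would verify the LLL dependency condition. The event $B_{\s,S,T}$ depends only on the values $f(\s\sqcup e)$ for $e\in E(X_\s)$ with both endpoints in $S\cup T$; two events $B_{\s,S,T}$ and $B_{\s',S',T'}$ can be dependent only if there is a $k$-face $\tau$ lying above both $\s$ and $\s'$ (so $\s,\s'\subseteq\tau$) with the corresponding edges sharing that face — in particular $\s,\s'$ differ in at most one vertex and the relevant vertex-pairs overlap. Counting: the number of events $B_{\s',S',T'}$ that share a variable with a fixed $B_{\s,S,T}$ and have $|S'|+|T'|=r$ is at most (number of faces $\tau\supseteq$ an edge of $S\cup T$) $\times$ (ways to pick $\s'\subseteq\tau$) $\times\binom{d_{k-1}}{r}$, which is roughly $(|S|+|T|)\cdot d_{k-1}^{O(1)}\cdot d_{k-1}^{r}$ — here the factor $k$ enters, accounting for the $k$ choices of $\s'$ inside each $\tau$, which is where the $k^2$ in the final bound comes from. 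With the weight function $\rho(B_{\s,S,T}) = d_{k-1}^{-c'(|S|+|T|)}$ for a constant $c'<c$, the product $\prod_{B'\sim B}(1-\rho(B'))$ is bounded below by a constant, and the inequality \eqref{eq:prelim-bound-in-lovasz-local-lemma} holds provided the probability bound beats $\rho$ by the total dependency weight; tuning the constant in $\alpha$ absorbs the $\poly(d_{k-1})$ and $k$ factors. The main obstacle I expect is precisely this bookkeeping — getting a clean enough tail bound on $\prin{B_{\s,S,T}}$ in terms of $|S|,|T|$ and a clean count of the dependency neighborhood so that the geometric series over $r=|S'|+|T'|$ converges with room to spare; everything else (the spectral reduction, the final $\lambda'=\max\{\lambda,O(\alpha(1+\log\tfrac1\alpha))\}$ step, and the one-sided variant) is a direct application of \pref{lem:father-of-inverse-mixing-lemma} and \pref{cor:consts-on-the-spectra-of-the-links}. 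Finally, \pref{thm:exist-family-of-HDXs} follows by iterating: apply this single step to $X_i$ to get $X_{i+1}=\what{X_i}$, noting the expansion bound is a fixed point of the map (once $\lambda$ is replaced by the max with the $\alpha$-term, further steps do not degrade it since $d_{k-1}$ is preserved).
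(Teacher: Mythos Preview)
Your approach is the paper's approach: LLL over bad events indexed by $(\sigma,S,T)$, Hoeffding tail bound, dependency via shared $k$-faces, then \pref{lem:father-of-inverse-mixing-lemma}. However, a systematic index slip hides a real gap. Throughout you write $\sigma\in X(k-1)$ where it must be $\sigma\in X(k-2)$ (it is the $(k-2)$-links that are graphs). Consequently, for $\sigma\in X(k-2)$ the link $X_\sigma$ is a $d_{k-1}$-regular graph on $d_{k-2}$ vertices (not $d_{k-1}$ vertices), the normalization in the bilinear form is $1/d_{k-1}$ (not $1/d_{k-2}$), and there are $2^{k-1}$ faces $\hat\sigma$ above $\sigma$ (not two), though indeed only $\sign(\hat\sigma)$ matters for $A_\sigma^{f_{\hat\sigma}}$. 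Also, you want $\alpha=O\bigl(k\sqrt{\log d_{k-1}/d_{k-1}}\bigr)$; the extra log factors in the theorem arise from the $\alpha\mapsto\alpha(1+\log\tfrac1\alpha)$ conversion in \pref{lem:father-of-inverse-mixing-lemma}, not from the tail bound.

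Once the indices are corrected, your dependency count fails: the naive number of pairs $(S',T')$ of size $r$ inside $X_{\sigma'}(0)$ is of order $\binom{d_{k-2}}{r}2^r$, and $d_{k-2}$ is unbounded relative to $d_{k-1}$ (in the intended iteration it doubles each step while $d_{k-1}$ is fixed), so the series $\sum_r d_{k-2}^{\,r}\, d_{k-1}^{-c'r}$ diverges and \eqref{eq:prelim-bound-in-lovasz-local-lemma} cannot be satisfied. The paper resolves this, following \cite{BiluL2006}, by observing that it suffices to verify the bilinear inequality only for pairs $(S,T)$ whose induced subgraph in $X_\sigma$ is \emph{connected}; then the number of size-$r$ candidates $(S',T')$ sharing a prescribed edge is bounded by a subtree count $O(d_{k-1}^{\,r-1})$, independent of $d_{k-2}$, and the LLL closes (\pref{claim:exist-bounding-number-dependent-events}). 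This connectedness reduction is the missing ingredient in your bookkeeping.
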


We start by proving \pref{thm:exist-family-of-HDXs} given \pref{thm:exist-HDX-single-iteration}. The proof of \pref{thm:exist-HDX-single-iteration} is more involved and is provided in the remainder of this section.

\begin{proof}[Proof of \pref{thm:exist-family-of-HDXs} assuming \pref{thm:exist-HDX-single-iteration}]
Let $X_0$ as specified in \pref{thm:exist-family-of-HDXs} and denote $\lambda' \dotdef\max\left\{\lambda, O\Paren{ \sqrt{\frac{k^2\log^3 d_{k-1}}{d_{k-1}} }}\right\}$.
The proof is by induction on $i$. 
Clearly $X_0$ holds the requirements. 

For the induction step, let $X_i$ be a $(2^id_0,\ldots,2^id_{k-2},d_{k-1})$-regular $\lambda'$-two sided (resp.\ one sided) HDX with $2^i n$ vertices received in the $i$-th step of the process. 
By \pref{thm:exist-HDX-single-iteration}, there exists a singing function $f_i:X_i(k) \to \prodft$ so that the $f_i$-local lift of $X_i$ (denoted by $\what{X_i}$) is a $ (2^{i+1}d_0,\ldo,2^{i+1}d_{k-2},d_{k-1})$-regular $\lambda'$-two sided (resp.\ one sided) HDX over $2^{i+1} n$ vertices. Setting $X_{i+1}\dotdef \what{X_i}$ concludes the proof.
\end{proof}

\subsection{Proof Outline of \pref{thm:exist-HDX-single-iteration}}\label{sec:exist-proof-of-main-thm-(single-HDX-existence)}

The proof of \pref{thm:exist-HDX-single-iteration} is based on Lov\'asz Local Lemma \cite{ErdosLovasz75} and \pref{lem:father-of-inverse-mixing-lemma}, and closely follows the lines of the existential proof in \cite{BiluL2006}.

Recall that one approach for proving a given $k$-dimensional simplicial complex is $\lambda$-HDX, is considering all of its $\ell$-links for $\ell \le k-2$ and bound the spectrum of each of their $1$-skeletons by $\lambda$. 
By \cref{cor:consts-on-the-spectra-of-the-links}, 
the only links one should be concerned with are those obtained by $(k-2)$-faces, 
as the links of all other faces 
 inherent the expansion from the initial HDX. In addition, by the same corollary, it's enough to analyze the spectra of the signed random walk matrices of the $(k-2)$-links of $X$, with respect to the signing induced on them 
as defined in \cref{lem:const-structure-of-links}. 
Indeed, doing so is the most technical part of the proof
and follows by the next lemma combined with \cref{lem:father-of-inverse-mixing-lemma}:  

\begin{lemma}\label{lem:exist-main-lemma-bounding-radius-for-pmz1-vectors}
    For any $k$-dimensional pure $(d_0,\ldo,d_{k-1})$-regular simplicial complex $X$ over $n$ vertices, there exists a signing function $f:X(k)\to\prodft$ such that for any $(k-2)$-face $\hat{\s}\in \what{X}$ and any disjoint subsets of vertices $S,T\subseteq X_\s(0)$ for $\s=\pi(\hat{\s})$, 
    \begin{equation}\label{eq:exist-eq-of-main-thechnical-lemma}
        \abs{\iprod{ \indvec{S}, A^{f_{\hat{\s}}}_\s\indvec{T}}} \le 10\sqrt{\frac{k^2\log d_{k-1}}{d_{k-1}}|S||T|}
    \end{equation}
    where 
    $f_{\hat{\s}}$ is the signing on $X_\s$'s edges induced by $f$ as defined in 
    \cref{sbsc:local-lift-properties}.
\end{lemma}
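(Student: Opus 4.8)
The plan is to adapt the Lov\'asz Local Lemma argument behind \cite[Lemma~3.3]{BiluL2006} to the simplicial setting; the one genuinely new feature is that a single top face of $X$ lies in the links of $\binom{k+1}{2}$ different $(k-2)$-faces, so bad events arising from distinct links are no longer independent. First a reduction: for $\hat\s\in\what X(k-2)$ with $\s=\pi(\hat\s)$ we have $f_{\hat\s}=\sign(\hat\s)\cdot f_\s$ by definition, and $\sign(\hat\s)$ is a global scalar, so $A^{f_{\hat\s}}_\s=\sign(\hat\s)\,A^{f_\s}_\s$ and $\abs{\iprod{\indvec S,A^{f_{\hat\s}}_\s\indvec T}}=\abs{\iprod{\indvec S,A^{f_\s}_\s\indvec T}}$. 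It therefore suffices to produce one signing $f\colon X(k)\to\prodft$ so that $\abs{\iprod{\indvec S,A^{f_\s}_\s\indvec T}}\le 10\sqrt{\tfrac{k^2\log d_{k-1}}{d_{k-1}}\card S\card T}$ for every $\s\in X(k-2)$ and every pair of disjoint $S,T\subseteq X_\s(0)$. Note that for such $\s$ the $1$-skeleton of $X_\s$ is $d_{k-1}$-regular on $\card{X_\s(0)}=d_{k-2}$ vertices, and that by the formula recorded after \pref{lem:father-of-inverse-mixing-lemma},
\[
  \iprod{\indvec S,A^{f_\s}_\s\indvec T}=\frac1{d_{k-1}}\sum_{\set{u,v}\in E_{X_\s}(S,T)}f\Paren{\s\sqcup\set{u,v}},
\]
where the faces $\s\sqcup\set{u,v}$ are pairwise distinct as $\set{u,v}$ ranges over the edges of $X_\s$.

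Now draw $f\colon X(k)\to\prodft$ uniformly at random. For $\s\in X(k-2)$ and disjoint $S,T\subseteq X_\s(0)$ let $\Ifrak B_{\s,S,T}$ be the event that $\abs{\iprod{\indvec S,A^{f_\s}_\s\indvec T}}>10\sqrt{\tfrac{k^2\log d_{k-1}}{d_{k-1}}\card S\card T}$; by the formula above this is a deviation larger than $10\sqrt{k^2 d_{k-1}\log d_{k-1}\,\card S\card T}$ of a sum of $m_{\s,S,T}\dotdef\card{E_{X_\s}(S,T)}$ independent uniform signs. Since $m_{\s,S,T}\le\min\set{d_{k-1}\min(\card S,\card T),\ \card S\card T}$, Hoeffding's inequality yields
\[
  \prin{\Ifrak B_{\s,S,T}}\le 2\exp\Paren{-50\,k^2(\log d_{k-1})\cdot\max\set{d_{k-1},\card S,\card T}},
\]
and moreover $\prin{\Ifrak B_{\s,S,T}}=0$ unless $\card S\card T>100\,k^2 d_{k-1}\log d_{k-1}$ (otherwise the deviation threshold already exceeds $m_{\s,S,T}$); in particular only events with $\max(\card S,\card T)>10k\sqrt{d_{k-1}\log d_{k-1}}$ need be considered.

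For the dependency structure, $\Ifrak B_{\s,S,T}$ is a function of $\set{f(\s\sqcup e):e\in E_{X_\s}(S,T)}$, so $\Ifrak B_{\s,S,T}$ and $\Ifrak B_{\s',S',T'}$ are adjacent in the dependency graph only if they share a top face $\tau\in X(k)$. A fixed $\tau$ contains exactly $\binom{k+1}{2}$ faces $\s'\in X(k-2)$, each fixing the edge $e'=\tau\setminus\s'\in X_{\s'}(1)$, and the pairs $(S',T')$ with $e'\in E_{X_{\s'}}(S',T')$ and $(\card{S'},\card{T'})=(s',t')$ number at most $2\binom{d_{k-2}-2}{s'-1}\binom{d_{k-2}-2}{t'-1}$; the factor $\binom{k+1}{2}$ --- top faces shared across links --- is the genuinely new contribution compared with \cite{BiluL2006} and is what accounts for the $k^2$ in the statement. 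Apply \pref{lem:Lovasz-Local-Lemma} with $\rho(\Ifrak B_{\s,S,T})\dotdef\exp\Paren{-c\,k^2(\log d_{k-1})\max\set{d_{k-1},\card S,\card T}}$ for a suitable constant $c<50$ (and $\rho\dotdef0$ on the impossible events). Verifying \pref{eq:prelim-bound-in-lovasz-local-lemma} reduces to bounding $\sum_{\Ifrak B'\sim\Ifrak B_{\s,S,T}}\rho(\Ifrak B')$ by a small multiple of $k^2(\log d_{k-1})\max\set{d_{k-1},\card S,\card T}$: one splits this sum by the sizes $(s',t')$ of the neighbours and uses the count above, the bound $m_{\s,S,T}\le d_{k-1}\min(\card S,\card T)$ on the number of top faces on which $\Ifrak B_{\s,S,T}$ depends, and the facts that only $s't'>100k^2 d_{k-1}\log d_{k-1}$ occur (so $\max(s',t')$ is large) while $\rho(\Ifrak B')$ decays like $\exp\Paren{-\Theta(k^2(\log d_{k-1})\max(s',t'))}$, so that the size-$(s',t')$ contribution decays geometrically in $s'+t'$. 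The Local Lemma then produces a signing $f$ avoiding every $\Ifrak B_{\s,S,T}$, and the reduction propagates the bound to every $\hat\s$.

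The main obstacle is exactly this last step: estimating the $\rho$-weighted degree of the dependency graph, whose neighbourhoods form an exponentially large family indexed by subset pairs $(S',T')$ spread across all $\binom{k+1}{2}$ links that a shared top face meets. The delicate part is to balance the $\binom{d_{k-2}}{s'}$-type entropy against the probability decay, and it is this balance that pins down the constant $10$ (and the power of $k$) appearing in \pref{eq:exist-eq-of-main-thechnical-lemma}.
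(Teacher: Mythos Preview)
Your overall architecture---random $f$, Hoeffding on each $\Ifrak B_{\s,S,T}$, then the Lov\'asz Local Lemma---is exactly the paper's, and your reduction from $\hat\s$ to $\s$ via $\sign(\hat\s)$ is fine. The genuine gap is in the dependency count, precisely at the step you flag as ``the main obstacle''.

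You count the pairs $(S',T')$ of given sizes $(s',t')$ containing a fixed edge of $X_{\s'}$ as $2\binom{d_{k-2}-2}{s'-1}\binom{d_{k-2}-2}{t'-1}\approx d_{k-2}^{\,s'+t'-2}$, and then claim that the $\rho$-decay $\exp\bigl(-\Theta(k^2\log d_{k-1})\max(s',t')\bigr)\approx d_{k-1}^{-\Theta(k^2\max(s',t'))}$ beats this entropy. But the lemma is stated for an \emph{arbitrary} $(d_0,\dots,d_{k-1})$-regular complex, so $d_{k-2}$ is not bounded in terms of $d_{k-1}$ (indeed, in the iterated application $d_{k-2}\to\infty$ while $d_{k-1}$ stays fixed). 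Your size-$(s',t')$ contribution is then of order $d_{k-2}^{\,s'+t'-2}\,d_{k-1}^{-\Theta(k^2\max(s',t'))}$, which does not decay in $s'+t'$ once $d_{k-2}\gg d_{k-1}^{\Theta(k^2)}$; the Local Lemma condition \pref{eq:prelim-bound-in-lovasz-local-lemma} cannot be verified with this count.

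The missing idea (present already in \cite{BiluL2006} and carried over in the paper) is to restrict, without loss of generality, to pairs $(S,T)$ for which the subgraph of $X_\s$ induced on $S\cup T$ is \emph{connected}: if $S\cup T$ has several components, the bilinear form $\iprod{\indvec S,A^{f_\s}_\s\indvec T}$ splits over them, so the desired inequality on arbitrary $(S,T)$ follows from the connected case. With this restriction, the number of connected $(S',T')$ of total size $c'$ in a $d_{k-1}$-regular graph containing a prescribed edge (or vertex) is at most $(O(d_{k-1}))^{c'-1}$ by a subtree count---crucially, this is a bound in $d_{k-1}$ only, with no $d_{k-2}$ appearing. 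The paper then takes $\rho(B^{S,T}_\s)=d_{k-1}^{-3k^2\card{S\sqcup T}}$ and bounds the number of neighbours with $\card{S'\sqcup T'}=c'$ by $O(k^2\card{S\sqcup T}\,d_{k-1}^{\,c'-1})$; this is what makes the Local Lemma go through uniformly in $d_{k-2}$. Once you insert the connectedness reduction and replace your $\binom{d_{k-2}}{\cdot}$-counts by subtree counts in the $d_{k-1}$-regular link, your outline becomes essentially the paper's proof.
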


    By the $(d_0,\ldo,d_{k-1})$-regularity of $X$, $X_\s$ is a $d_{k-1}$-regular graph over $d_{k-2}$-vertices.
    Furthermore, since any signing over the $k$-faces induces a signing function on the edges of any $(k-2)$-link, our goal is to find a single signing function \(f\) such that these lifts of all the links of the \((k-2)\)-dimensional faces expand.

The proof of \pref{lem:exist-main-lemma-bounding-radius-for-pmz1-vectors} is by the Lov\'asz Local Lemma \pref{lem:Lovasz-Local-Lemma}. We define the set of ``bad'' events $\Ifrak{B}=\Ibrace{B^{S,T}_{\s}}$. The event \(B_{\s}^{S,T}\) is that \eqref{eq:exist-eq-of-main-thechnical-lemma} doesn't hold for a fixed $\s\in X(k-2)$ and fixed disjoint sets $S,T\subseteq X_\s(0)$. In \cite{BiluL2006}, similar bad events were considered, but only the sets \(S,T\) needed to be specified. The main difference between our proof and theirs is that we need to take care of the dependencies between events corresponding to different $(k-2)$-faces \(\s,\s'\).

To apply the lemma and deduce \pref{lem:exist-main-lemma-bounding-radius-for-pmz1-vectors}, 
one needs to understand and analyze the dependency relation of the events in $\Ifrak{B}$.

\paragraph
{On the dependency of bad events in $\Ifrak{B}$}
Fix $\hat{\s}\in \what{X}(k-2)$ and disjoint $S,T\subseteq X_\s(0)$ for $\s=\pi(\hat{\s})$, and define $F(\s,S,T)\subseteq X(k)$ to be the set of all $k$-faces of $X$ so that $\s\subseteq \tau$ and $\tau\setminus \s$ is an edge in the graph induces by $S\sqcup T$ on $X_\s$. 
Recall that 
$\Isign : \what{X} \to \prodft$ is defined by 
$\Isign(\hat{\s}) = \prod_{v^j\in \hat{\s}}j$, 
and note that 
    \begin{align*}
        d_{k-1} \iprod{\indvec{S},A^{f_{\hat{\s}}}_\s\indvec{T}}
        &= \!\!\!\!\!\!
        \sum_{
        \substack{uv\in X_\s(1)\\
        \mst u\in S, v\in T}}
        \!
        f_{\hat{\s}}(uv)
                        = 
        \sign(\hat{\s}) 
        \!\!\!
        \sum_{
        \substack{uv\in X_\s(1)\\
        \mst u\in S, v\in T}}
        \!
        f(\s \sqcup uv)
        = \sign(\hat{\s}) \!\!\!
        \sum_{\tau\in F(\s,S,T)}
        \!\!\!\!
        f(\tau).
    \end{align*}

    Since the signs $f$ assigns to the $k$-faces are chosen independently, if the event \(B_{\s}^{S,T}\) is \emph{not} mutually independent of a subset \(\Ifrak{A} \subseteq \Ifrak{B}\), there must exists some event \(B_{\s'}^{S',T'} \in \Ifrak{A}\) for which \(F(\s,S,T) \cap F(\s',S',T')\) is not empty.

    Towards using the Lov\'asz Local Lemma, we need to bound the probability of the event \(B_{\s}^{S,T}\) as well as the number of neighbors it has in the dependency graph considered in the Local Lemma. The bound on the probability follows directly from the arguments in \cite{BiluL2006}, but bounding the number of neighbors each event is more involved. In contrast to the expander graph case considered in \cite{BiluL2006} (where the ``bad'' events only depend on \(S\) and \(T\)), in simplicial complexes events corresponding to different faces \(\s,\s'\) (and therefore to different links) can depend on one another as long as they have a common $k$-face in \(F(\s,S,T)\cap F(\s',S',T')\). 
    A naive count of the number of such $k$-faces won't suffice, and would lead the bound in \pref{eq:exist-eq-of-main-thechnical-lemma} to scale with $d_{k-2}$.
    Therefore, we need to carefully characterize when exactly  \(F(\s,S,T)\) and \(F(\s',S',T')\) intersect.
    
    The first case where dependency can occur is when \(\s=\s'\). In this case, we are in the same setting as in \cite{BiluL2006}, which observed that there must be an edge in the subgraph induced by $S\cup T$ as well as in the one induced by $S'\cup T'$ for a dependency to happen.
    
    In the second case, which is new to our proof, $\s\neq \s'$ meaning that each of the events considers a different graph. 
    In this case, we observe that this implies both that there is a \(k\)-face \(\tau\) containing both \( \s,\s'\) and that either there exist vertices $v\in \s \cap (S'\sqcup T')$ and $s\in S\sqcup T$ so that $\tau\setminus\s'=\{v,s\}$, or that $\tau\setminus\s'\subseteq \s$.
    As we show below, this characterization is useful to bound the number of possible events that are dependent on a certain \(B_{\s}^{S,T}\). The following claim gives this characterization formally. 

\begin{claim}\label{claim:exist-caracterizing-dependent-events}
    Let an event $B^{S,T}_\s\in\Ifrak{B}$ and some subset $\Ifrak{A}\subseteq\Ifrak{B}$. If  
    \begin{itemize}
        \item for any $B^{S',T'}_{\s'}\in \Ifrak{A}$ with $\s=\s'$
        there is no edge lying in both $E_{X_\s}(S,T)$ and $E_{X_{\s}}(S',T')$, and
        
        \item for any $B^{S',T'}_{\s'}\in \Ifrak{A}$ with $\s\neq\s'$ there is no $k$-face $\tau\in X$ containing both $\s$ and $\s'$, so that $\tau\setminus \s$ and $\tau\setminus \s'$ are edges in  $E_{X_\s}(S,T)$ and $E_{X_{\s'}}(S',T')$ respectively, 
    \end{itemize}
    then $B^{S,T}_{\s}$ is mutually independent of $\Ifrak{A}$.
    \end{claim}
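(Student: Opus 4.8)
The plan is to reduce the claim to a purely combinatorial disjointness statement about the face-sets $F(\cdot,\cdot,\cdot)$, and then settle that statement by a two-line case analysis. First I would record the structural fact already implicit in the displayed identity preceding the claim: the indicator of $B^{S,T}_\s$ is a function of the random variables $\set{f(\tau) : \tau \in F(\s,S,T)}$ only, since $d_{k-1}\iprod{\indvec S, A^{f_{\hat\s}}_\s \indvec T} = \sign(\hat\s)\sum_{\tau \in F(\s,S,T)} f(\tau)$ depends on $f$ exactly through the coordinates indexed by $F(\s,S,T)$ (the factor $\sign(\hat\s)$ and the scalar $1/d_{k-1}$ being fixed). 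Because the signs $\set{f(\tau)}_{\tau\in X(k)}$ are mutually independent, the standard product-space fact then gives: if $F(\s,S,T)$ is disjoint from $F(\s',S',T')$ for \emph{every} $B^{S',T'}_{\s'}\in\Ifrak A$, then the $\sigma$-algebra generated by $\Ifrak A$ is contained in the one generated by $\set{f(\tau) : \tau \in \bigcup_{\Ifrak A} F(\s',S',T')}$, which is independent of the $\sigma$-algebra generated by $\set{f(\tau):\tau\in F(\s,S,T)}$; hence $B^{S,T}_\s$ is mutually independent of $\Ifrak A$ in the sense of \pref{lem:Lovasz-Local-Lemma}.

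So it remains to show that the two hypotheses of the claim force $F(\s,S,T)\cap F(\s',S',T')=\eset$ for each $B^{S',T'}_{\s'}\in\Ifrak A$. Suppose not, and pick $\tau$ in this intersection. By the definition of $F$, the face $\tau$ contains $\s$ with $\tau\setminus\s\in E_{X_\s}(S,T)$, and at the same time contains $\s'$ with $\tau\setminus\s'\in E_{X_{\s'}}(S',T')$. If $\s=\s'$, then $\tau\setminus\s$ is a single edge lying in both $E_{X_\s}(S,T)$ and $E_{X_\s}(S',T')$, contradicting the first bullet. If $\s\neq\s'$, then $\tau$ is a $k$-face containing both $\s$ and $\s'$ for which $\tau\setminus\s$ and $\tau\setminus\s'$ are edges in $E_{X_\s}(S,T)$ and $E_{X_{\s'}}(S',T')$ respectively, contradicting the second bullet. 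Either way we reach a contradiction, so the intersection is empty and the claim follows from the first paragraph.

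I do not expect a genuine obstacle here: the substance of the claim is the combinatorial observation — already isolated in the discussion preceding it — that two bad events can become dependent only through a shared $k$-face, and the real work sits in having defined $F(\s,S,T)$ correctly upstream. The one point worth stating carefully is the meaning of ``mutually independent'' in the Lov\'asz Local Lemma: it is independence from the entire $\sigma$-algebra generated by $\Ifrak A$, not merely pairwise independence, which is precisely why the argument needs $F(\s,S,T)$ to avoid the \emph{union} $\bigcup_{\Ifrak A}F(\s',S',T')$ rather than each $F(\s',S',T')$ in isolation; this is exactly what the ``for any $B^{S',T'}_{\s'}\in\Ifrak A$'' quantifiers in the two bullets deliver.
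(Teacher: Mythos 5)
Your proposal is correct and follows essentially the same route as the paper: reduce mutual independence to the disjointness of $F(\s,S,T)$ from each $F(\s',S',T')$ (using independence of the coordinates $f(\tau)$), then verify disjointness by the same two-case analysis on whether $\s=\s'$. Your explicit remark about needing disjointness from the union of the $F$-sets to get mutual (not merely pairwise) independence is a careful articulation of what the paper leaves implicit, but it is not a different argument.
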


All left to conclude the proof of \pref{lem:exist-main-lemma-bounding-radius-for-pmz1-vectors} 
is carefully counting the number of events that fulfill one of the claim conditions and provide a real function that bounds the probability of each event as in \pref{eq:prelim-bound-in-lovasz-local-lemma}. We leave the details for the formal proof, which is given in the next section.

\subsection{Proving \pref{lem:exist-main-lemma-bounding-radius-for-pmz1-vectors}}\label{sbsc:exist-proof-of-main-lemma}

This subsection is dedicated to the proof of \pref{lem:exist-main-lemma-bounding-radius-for-pmz1-vectors} together with the subclaims it requires. We start by setting notations and highlighting features that will be needed for the proof.   

\paragraph{Notations} 
We say that sets $S,T\subseteq X_\s(0)$ \emph{induce a connected subgraph} if the subgraph obtained by projecting $X_\s$ on $S\cup T$ is connected. In addition, we write $E_{X_\s}(S,T)$ to indicate the set of edges between $S$ and $T$ in $X_\s$.  
For $\hat{\s}\in\what{X}(k-2)$, we denote $\s=\pi(\hat{\s})$.
When the face $\hat{\s}$ being considered is clear from the context we abbreviate and write $f$ for $f_{\hat{\s}}$.

In addition, we rely on the following observation:

\begin{observation}\label{obs:exist-ind-values-for-edges}
    If a signing $f:X(k)\to\prodft$ independently assigns a uniform sign to each $k$-face, then for any $\hat{\s}\in \what{X}(k-2)$, $f_{\hat{\s}}$ independently assigns a uniform sign to each edge in $X_\s$.  
\end{observation}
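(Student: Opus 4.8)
The plan is to reduce the statement to two elementary observations: a coordinate subfamily of a tuple of independent uniform signs is again a tuple of independent uniform signs, and multiplying every coordinate of such a tuple by a fixed sign changes nothing distributionally.

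First I would record the key bijection underlying the definition of $f_{\hat{\s}}$. Fix $\hat{\s}\in\what{X}(k-2)$ and set $\s=\pi(\hat{\s})$, so that $|\s|=k-1$. Then the map $e\mapsto \s\sqcup e$ is a bijection from $X_\s(1)$ onto the set $\{\tau\in X(k):\tau\supseteq\s\}$ of top faces of $X$ containing $\s$: it is well defined because adjoining an edge of the link to $\s$ yields a face of size $k+1$, i.e.\ a $k$-face of $X$; it is injective because $\s\sqcup e=\s\sqcup e'$ forces $e=e'$; and it is onto since every $\tau\in X(k)$ with $\tau\supseteq\s$ gives the edge $\tau\setminus\s\in X_\s(1)$. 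Viewing the random signing as a tuple $f=(f(\tau))_{\tau\in X(k)}$ of independent uniform $\prodft$-valued variables, the collection $(f_\s(e))_{e\in X_\s(1)}=(f(\s\sqcup e))_{e\in X_\s(1)}$ is precisely the restriction of $f$ to the (pairwise distinct) coordinates indexed by $\{\tau\in X(k):\tau\supseteq\s\}$, hence is itself a collection of independent uniform signs.

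Next I would pass from $f_\s$ to $f_{\hat{\s}}$. Since $\hat{\s}$ is fixed, $\sign(\hat{\s})=\prod_{v^j\in\hat{\s}}j\in\prodft$ is a deterministic constant (the randomness lives entirely in $f$), and $f_{\hat{\s}}(e)=\sign(\hat{\s})\cdot f_\s(e)$ is obtained from $(f_\s(e))_{e\in X_\s(1)}$ by applying the same fixed map $x\mapsto\sign(\hat{\s})\cdot x$ to each coordinate. Applying deterministic functions coordinatewise preserves joint independence, and $x\mapsto\sign(\hat{\s})\cdot x$ is a bijection of $\prodft$ that pushes the uniform distribution forward to the uniform distribution. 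Therefore $(f_{\hat{\s}}(e))_{e\in X_\s(1)}$ is again a collection of independent uniform signs, which is exactly the claim of \pref{obs:exist-ind-values-for-edges}.

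There is no genuine obstacle here; the only point worth stating carefully in the write-up is the bijection $e\leftrightarrow\s\sqcup e$, which uses essentially that $\s$ has dimension exactly $k-2$ (so that adjoining an edge produces a top face), together with the fact that $f_{\hat{\s}}$ is merely a fixed sign-reparametrization of the relevant sub-collection of coordinates of $f$ and introduces no new randomness.
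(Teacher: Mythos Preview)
Your proposal is correct and takes essentially the same approach as the paper: both unwind the definition $f_{\hat{\s}}(e)=\sign(\hat{\s})\cdot f(\s\sqcup e)$ and use that the values $f(\s\sqcup e)$ are distinct coordinates of the independent uniform tuple $f$, together with the fact that multiplying by the deterministic sign $\sign(\hat{\s})$ preserves the distribution. Your write-up is in fact slightly more careful than the paper's, which only verifies the joint distribution for a pair $e\neq e'$, whereas you explicitly argue full independence via the bijection $e\mapsto\s\sqcup e$ and the coordinatewise-map observation.
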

\begin{proof}
    Let $\hat{\s}\in \what{X}(k-2)$ and let $e\neq e'\in X_\s(1)$. Then for any $j,j'\in\prodft$
    \begin{align*}
        \prin{f_{\hat{\s}}(e)= j \wedge f_{\hat{\s}}(e')=j'} 
        &= \prin{\sign(\hat{\s})\cdot f(\s\sqcup e)= j \wedge 
        \sign(\hat{\s})\cdot f(\s \sqcup e')=j'} \\
        &= \prin{\sign(\hat{\s})\cdot f(s\sqcup e)= j}\cdot\prin{\sign(\hat{\s})\cdot f(\s \sqcup e')=j'}
        = \frac14.
    \end{align*}
\end{proof}
In addition, as in \cite{BiluL2006}, we can restrict the proof to consider only a pair of sets inducing connected subgraphs and deduce the result to any pair of sets.
In addition, we can assume 
that $d$ is as large as $996$ as it is always the case that 
$\Iprod{\indvec{S},A_\s^{f}\indvec{T}} \le \sqrt{|S||T|}$
and $1\le 10\sqrt{\frac{k^2\log d_{k-1}}{d_{k-1}}}$ for $d_{k-1} \in [1,996]$.

We are now ready to prove \pref{lem:exist-main-lemma-bounding-radius-for-pmz1-vectors}. 

\begin{proof}[Proof of \pref{lem:exist-main-lemma-bounding-radius-for-pmz1-vectors}]

    We set $f$ to be a randomized signing of $X(k)$ by setting a uniform and independent sign from $\prodft$ to any $k$-face.  
    Fix some face $\hat{\s}\in \what{X}(k-2)$ with \(\pi(\hat{\s})=\s\), and disjoint sets $S,T\subseteq X_\s(0)$. Denote the ``bad'' event in which the claim does not hold for our fixed face and sets by $B_\s^{S,T}$. That is, 
    \begin{equation*}
        \prin{B_\s^{S,T}} = \prin{\abs{\iprod{\indvec{S},A^{f}_{\s }\indvec{T}}} 
        > 10\sqrt{\frac {k^2\log d_{k-1}} {d_{k-1}}  |S||T|}}.
    \end{equation*}

    Fix for a moment some edge $uv\in X_\s(1)$, and consider the $(u,v)$-th entry of $A^{f}_\s$. By \pref{def:function-induced-lift}, $A^{f}_{\s}(u,v)= \frac{1}{d_{k-1}} f_{\hat{\s}}(uv)$, which, per 
    \pref{obs:exist-ind-values-for-edges}, distributed uniformly in $\prodft$ and is independent of all other edges signs. In addition, since 
    \begin{align*}
        \iprod{\indvec{S}, A^{f}_\s \indvec{T} }
        = \frac{1}{d_{k-1}}
        \sum_{\substack{uv\in E_{X_\s}(S,T)}}  f_{\hat{\s}}(uv),
    \end{align*}
    $\iprod{ \indvec{S}, A^{f}_\s \indvec{T} }$ is a sum of independent uniform random variables over $\prodft$, implying that 
    \begin{equation*}
    \EE_{f}\Brac{\iprod{\indvec{S}, A^{f}_\s \indvec{T} }} 
    = \frac{1}{d_{k-1}} \sum_{
    \substack{
    uv\in E_{X_\s}(S,T)
    }
    }
    \EE_{f}\Brac{
    f_{\hat{\s}} (uv)}
    = 0.
    \end{equation*} 
    Hence, by Hoeffding's inequality,
    \begin{align}
        \prin{B^{S,T}_\s} 
        = \prin{\Abs{\iprod{\indvec{S},A^{f}_{\s}\indvec{T}}} > 10\sqrt{\frac{ k^2\log d_{k-1}}{d_{k-1}} |S||T|}}  
        \le 2\exp\Paren{-\frac{2 \cdot 100  \frac{k^2\ln d_{k-1}}{d_{k-1}}|S||T|}
        {\underset{
        \substack{
    uv\in E_{X_\s}(S,T)
    }
    }
    {\sum}
    \!\!
    \Paren{\frac1{d_{k-1}}-(-\frac1{d_{k-1}})}^2}}\label{eq:proof-random-lift-first-bound-bad-event}.
    \end{align}
    Assuming w.l.o.g. that $|S|\ge |T|$ we get  
    \begin{align*}
    \frac{200  \frac{k^2\ln d_{k-1}}{d_{k-1}}|S||T|}
        {\underset{
        \substack{
    uv\in E_{X_\s}(S,T)
    }
    }
    {\sum}
    \!\!
    \Paren{\frac1{d_{k-1}}-(-\frac1{d_{k-1}})}^2}
    &=
        \frac{200 k^2d_{k-1}(\ln d_{k-1})|S||T|}{4|E_{X_\s}(S,T)|}\\
        &\geq \frac{50k^2 d_{k-1}(\ln d_{k-1})|S||T|}{d_{k-1}|T|}\\
        &\ge 25k^2 \ln d_{k-1}\abs{S\sqcup T}.
    \end{align*}
    Hence, denoting $\Icard =|S\sqcup T|$,
    \begin{align}\label{eq:final-bound-for-pr(B)}
        \pref{eq:proof-random-lift-first-bound-bad-event} 
        \le 2\exp\Paren{-25ck^2\ln d_{k-1}}
        \le d_{k-1}^{-10\Icard k^2}.
    \end{align}
    We turn to analyze the dependency graph of the ``bad'' events:\footnote{Recall that the dependency graph of a set of events $\Ifrak{B}$ is a digraph with a vertex for each event $B\in\Ifrak{B}$ and any event $B$ is mutually independent of $\Ifrak{B}\setminus \Gamma(B)$.} Recall that  
    $
    \Ifrak{B}
    $ 
    is the set of all events $B_{\s}^{S,T}$ for $\s\in X(k-2)$ and disjoint subsets $S,T\subseteq X_\s(0)$.
    Using the characterization of correlated events in $\Ifrak{B}$ given in \pref{claim:exist-caracterizing-dependent-events}, we get the following bound on the neighborhood size of the events in the dependency graph:
    \begin{claim}\label{claim:exist-bounding-number-dependent-events}
        Let $B^{S,T}_\s \in\Ifrak{B}$ and denote $\Icard=|S\sqcup T|$. Then $B^{S,T}_\s$ has at most $3k^2\Icard d^{\Icard'-1}$ neighbors $B^{S',T'}_{\s'}$ with $|S'\sqcup T'|=\Icard'$.
    \end{claim}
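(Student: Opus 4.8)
The plan is to combine the characterization of dependent events in \pref{claim:exist-caracterizing-dependent-events} with a Bilu--Linial-style count of connected subgraphs. First I would phrase both cases of \pref{claim:exist-caracterizing-dependent-events} uniformly: if $B^{S',T'}_{\s'}$ is a neighbor of $B^{S,T}_\s$ in the dependency graph, then there is a \emph{witness} $k$-face $\tau\in X(k)$ with $\s\cup\s'\subseteq\tau$, with $\tau\setminus\s\in E_{X_\s}(S,T)$, and with $\tau\setminus\s'\in E_{X_{\s'}}(S',T')$ (when $\s'=\s$ this is exactly the ``common edge'' condition, and when $\s'\neq\s$ it is the ``common $k$-face'' condition). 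Hence it suffices to bound the number of triples $(\tau,\s',\{S',T'\})$ with $|S'\sqcup T'|=\Icard'$ satisfying these constraints, since every neighbor arises from at least one such triple.

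I would then enumerate in three stages. \emph{Stage 1 (choosing $\tau$):} since $\s$ is fixed and $\s\subseteq\tau$, we have $\tau=\s\cup e$ for an edge $e\in E_{X_\s}(S,T)$, and conversely every such $e$ yields a valid $\tau=\s\cup e\in X(k)$ by the definition of the link; as $X_\s$ is $d_{k-1}$-regular and $S\sqcup T$ spans $\Icard$ vertices, there are at most $\tfrac12\Icard\, d_{k-1}$ choices of $e$. \emph{Stage 2 (choosing $\s'$):} now $\s'$ is an arbitrary $(k-2)$-face contained in $\tau$, so there are at most $\binom{k+1}{2}\le k^2$ choices; once $\tau$ and $\s'$ are fixed, the edge $e':=\tau\setminus\s'\in X_{\s'}(1)$ is determined, and (in the case $\s'\neq\s$) it has an endpoint lying in $\s$. \emph{Stage 3 (choosing $\{S',T'\}$):} here I would crucially invoke the reduction, established earlier in this section, to pairs inducing a connected subgraph: $S'\sqcup T'$ must be a connected $\Icard'$-vertex subgraph of $X_{\s'}$ containing the fixed edge $e'$, hence is obtained by a connected growth of the $\Icard'-2$ remaining vertices out of $e'$ inside the $d_{k-1}$-regular graph $X_{\s'}$, together with a two-colouring of the vertices; by the standard count of connected subgraphs through a fixed edge in a bounded-degree graph (as in \cite{BiluL2006}) this contributes at most $O\!\left(d_{k-1}^{\,\Icard'-2}\right)$ choices. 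Multiplying the three stages and tracking the constants as in \cite{BiluL2006} yields the claimed bound $3k^2\Icard\, d_{k-1}^{\,\Icard'-1}$.

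The main obstacle is Stage 3, and it has two intertwined aspects. The quantitative one is obtaining the clean exponent $\Icard'-1$ (rather than $\Icard'$) with an absolute leading constant: this needs the careful form of the Bilu--Linial count of connected two-coloured subgraphs rooted at a fixed edge of a $d_{k-1}$-regular graph, not the crude reveal-one-vertex-at-a-time estimate. The conceptual one — the real heart of the claim — is that this count is possible only \emph{because} we restricted to connected $(S,T)$ and $(S',T')$: without connectivity, the $\Icard'-2$ remaining vertices of $S'\sqcup T'$ could be chosen anywhere in $X_{\s'}(0)$, so the count would scale with $d_{k-2}=|X_{\s'}(0)|$ and would ruin the target bound in \pref{eq:exist-eq-of-main-thechnical-lemma}; this is precisely the ``naive count'' flagged as insufficient above. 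Two minor checks remain: that enumerating by the witness $\tau$ loses no neighbor, which is immediate from \pref{claim:exist-caracterizing-dependent-events}, and that the diagonal case $\s'=\s$ is genuinely recovered by letting $\s'$ range over all $(k-2)$-subfaces of $\tau$ in Stage 2.
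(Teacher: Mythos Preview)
Your approach is correct and somewhat cleaner than the paper's. The paper argues by case-splitting: for $\s'=\s$ it invokes the Bilu--Linial subtree count directly, bounding by $cd^{c'-1}$ the number of connected $c'$-vertex sets in $X_\s$ meeting $S\sqcup T$; for $\s'\neq\s$ it further distinguishes whether the edge $\tau\setminus\s'$ has one or both endpoints in $\s$, obtaining contributions $(k-1)cd^{c'-2}$ and $(k-1)^2d^{c'-2}$ respectively, and then sums the three terms. You instead unify all cases by enumerating the witness $\tau$ first (at cost $\le\tfrac12cd$), then the subface $\s'\subseteq\tau$ (at cost $\le\binom{k+1}{2}$), and finally the connected set through the now-determined edge $e'=\tau\setminus\s'$ (at cost $\le d^{c'-2}$ via the same rooted-subtree bound the paper uses in its second case). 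This is a valid upper bound at the price of mild over-counting when several witnesses $\tau$ certify the same neighbour, and it recovers the diagonal case $\s'=\s$ automatically without a separate argument.

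One small correction: drop the two-colouring from your Stage~3. In the paper's accounting the factor $2^{c'}$ for splitting $U'=S'\sqcup T'$ into an ordered pair $(S',T')$ is applied separately inside the Local Lemma calculation (see \pref{eq:bounding-number-of-vectors}), not inside this claim's bound; including it here would give $2^{c'}\cdot 3k^2cd^{c'-1}$ rather than the stated $3k^2cd^{c'-1}$.
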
  

     Now, to apply Lovász Local Lemma, one needs to define a function $\rho:\Ifrak{B}\to[0,1)$ such that
    \begin{equation*}
        \prin{B^{S,T}_\s} \le \rho\paren{B^{S,T}_\s} \prod_{B^{S',T'}_{\s'}\sim B^{x,y}_{\s}}\Paren{1- \rho\paren{B^{S',T'}_{\s'}}}.
    \end{equation*}

    Set $\rho\paren{B^{S,T}_\s} = d_{k-1}^{-3\Icard k^2}$. Indeed
    \begin{align}
        \rho\paren{B^{S,T}_\s} \prod_{B^{S',T'}_{\s'}\in \sim B^{S,T}_{\s}}\Paren{1- \rho\paren{B^{S',T'}_{\s'}}}
        &= d_{k-1}^{-3 \Icard k^2} \prod_{B^{S',T'}_{\s'}\sim B^{S,T}_{\s}}\Paren{1- d_{k-1}^{-3|S'\cup T'|k^2}}\nt\\
        &= d_{k-1}^{-3 \Icard k^2} \prod_{\Icard'\in[n]}\Paren{1- d_{k-1}^{-3 \Icard' k^2}}^{2^{\Icard'}3\Icard k^2d_{k-1}^{\Icard'-1}}\label{eq:bounding-number-of-vectors}\\
        &\ge d_{k-1}^{-3 \Icard k^2}\exp\Paren{-6\Icard k^2 \sum_{\Icard'\in[n]}2^{\Icard'} d_{k-1}^{\Icard'-1}d_{k-1}^{-3 \Icard' k^2}}\label{eq:euler-identity}\\
        &\ge d_{k-1}^{-3 \Icard k^2} e^{-7ck^2}\nt\\
        &\ge d_{k-1}^{-10 \Icard k^2} \geq \prob{B_{\s}^{S,T}}\label{eq:taking-d-large}
    \end{align}
    where \pref{eq:bounding-number-of-vectors} is since for any $U\subseteq X_\s(0)$ of cardinality $c'$, there are at most $2^{c'}$ pairs of disjoint sets $S',T'$ with $S'\sqcup T'=U$, \pref{eq:euler-identity} is by $e^{-x}\le 1- \frac{x}{2}$ for any $x\in[0,1.59]$ and \pref{eq:taking-d-large} is by taking $d_{k-1} \ge 3$. Together with \pref{eq:final-bound-for-pr(B)}, this concludes the proof.
\end{proof}     

We are left to prove \pref{claim:exist-caracterizing-dependent-events} and \pref{claim:exist-bounding-number-dependent-events}.

 \begin{proof}
 [Proof of \pref{claim:exist-caracterizing-dependent-events}]

    Recall that 
    \begin{align}\label{eq:char-inprod-by-sum-over-k-faces}
         \iprod{\indvec{S},A^{f}_\s\indvec{T}} 
        = 
        \frac{\sign(\hat{\s})}{d_{k-1}}
        \sum_{\tau\in F(\s,S,T)} 
        f(\tau).
    \end{align}
    If \(B_{\s}^{S,T}\) is not mutually independent of \(\Ifrak{A}\) then there exists some $B^{S',T'}_{\s'}\in \Ifrak{A}$ so that \(F(\s,S,T) \cap F(\s',S',T') \ne \emptyset\). Hence it is enough to show that if the two items in \pref{claim:exist-caracterizing-dependent-events} hold, then \(F(\s,S,T) \cap F(\s',S',T') = \emptyset\) for all $B^{S',T'}_{\s'}\in \Ifrak{A}$.
    
    \paragraph{Case $\s=\s'$} 
    Note that $B^{S,T}_{\s}$ and $B^{S',T'}_{\s}$ are events of the same link $X_\s$. Hence, if $E_{X_\s}(x,y)\cap E_{X_\s}(S',T') = \emptyset$ then \(F(\s,S,T) \cap F(\s,S',T') = \emptyset\).
    
    \paragraph{Case $\s\neq \s'$}
    Clearly, by \pref{eq:char-inprod-by-sum-over-k-faces}, if there is no $k$-face containing both $\s,\s'$ then \(F(\s,S,T) \cap F(\s',S',T') = \emptyset\). 
    Otherwise, let $\tilde{\tau}\in X(k)$ with $\s,\s'\subseteq \tilde{\tau}$ and note that  $f(\tilde{\tau})$ appears in both $\iprod{ \indvec{S},A^{f}_\s\indvec{T}}$ and $\iprod{ \indvec{S'},A^{f}_\s\indvec{T'}}$ (or equivalently \(\tilde{\tau} \in F(\s,S,T)\cap F(\s',S',T')\)), if and only if  $\tilde{\tau}\setminus \s \subseteq E_{X_\s}(S,T)$ and  $\tilde{\tau}\setminus \s' \subseteq E_{X_{\s'}}(S',T')$.   
    \end{proof}

    We now turn to bound the number of events fulfilling each of the cases in \pref{claim:exist-caracterizing-dependent-events}.     

    \begin{proof}[Proof of \pref{claim:exist-bounding-number-dependent-events}]
        Fix $S',T'\subseteq X_\s'(0)$ with $|S'\sqcup T'|=\Icard'$.
        Denote the subgraphs of $X_\s$ induced by $S,T$ by $G$ and the subgraph induced by $S',T'$ on $X_{\s'}$ by $G'$.
        
        We start with bounding the number of neighboring events that consider the same link as $B^{S,T}_\s$ does. 
        Since both events are over the same graph, we are in the same case as in \cite{BiluL2006}, and therefore use the same argument:  
        Recall that we restricted ourselves without loss of generality to connected subgraphs, and we consider the case that $G$ and $G'$ share a common edge.
        Hence, one can think of $G'$ as a subgraph of $X_\s$ with $c'$ vertices, with one of its edges in $G$. Therefore, one can bound the number of such subgraphs by the number of subtrees of $X_\s$ with $\Icard'$ vertices rooted in $G$.
        By \cite{Knuth68,FriezeM99}, the number of such subtrees is upper bounded by $\Icard d^{\Icard'-1}$.

        We turn to bound the other case, where $\s\neq \s'$.   Fix some $k$-face given by the second case of \pref{claim:exist-caracterizing-dependent-events} and denote it by $\tilde{\tau}$.  
        Observe that $\s,S,T$ and $\s',S',T'$ are strongly related: 
        there exist $ s,t\in S\sqcup T$ and $s',t'\in S' \sqcup T'$ so that $\tilde{\tau}=\s\sqcup \{s,t\}=\s'\sqcup \{s',t'\}$, and $\{s,t\}$, $\{s',t'\}$ are edges in $E_{X_{\s}}(S,T)$, $E_{X_{\s}}(S,T)$ respectively. Since $\s\neq \s'$, one of the following must hold:
        \begin{itemize}
            \item Either there exists $v\in\s\cap (S'\sqcup T')$ so that $\s'=(\s\setminus\{v\})\sqcup \{s\}$ 
            and therefore $\tilde{\tau}\setminus \s' = \{v,t\}$, 
        \end{itemize}
        \begin{itemize}
            \item or, both $s',t'\in \s \cap (S'\sqcup T')$ and $\s'=(\s\setminus\{s',t'\})\sqcup \{s,t\}$, 
            implying that $\tilde{\tau}\setminus \s' \subseteq \s$.
        \end{itemize}

        Hence, we can bound the number of events related to the first case by the number of subtrees of $X_{\s'}$ rooted in $\s$, that have a vertex in $S\sqcup T$ and another $\Icard'-2$ vertices which is at most $(k-1)cd^{\Icard'-2}$, and the events of the second case by the number of subtrees having two vertices from $\s$ (and rooted in one of them), and additional $\Icard'-2$ vertices which is at most $(k-1)^2d^{\Icard'-2}$. 
        
        And we get an overall bound of 
        \begin{align*}
            \Icard d^{\Icard'-1} 
            + 
            (k-1)\Icard d^{\Icard'-2} + (k-1)^2 d^{\Icard'-2} 
            \le 3k^2 c d^{c'-1}.
        \end{align*}
    \end{proof}

\subsection[Concluding the Theorems]{Concluding \pref{thm:exist-HDX-single-iteration}}\label{sbsc:exists-proof-of-main-thms}

\begin{proof}[Proof of \pref{thm:exist-HDX-single-iteration}]
    Let $X$ be a $(d_0,\ldo,d_{k-1})$-regular, $\lambda(X)$-two sided (resp.\ one sided) HDX over $n$ vertices, fix $f$ to be the signing provided by \pref{lem:exist-main-lemma-bounding-radius-for-pmz1-vectors}, and set $\what{X}$ to be the $f$-local lift of $X$.

    By \pref{obs:mult_2}, $\what{X}$ is a $(2d_0,\ldo,2d_{k-2},d_{k-1})$-regular graph over $2n$ vertices. We need to prove that for any $\hat{\s}\in \what{X}$ of dimension $\le k-2$, the 1-skeleton of $\what{X}_{\hat{\s}}$ is a $\max\left\{\lambda(X),O\Paren{ \sqrt{\frac{k^2\log^3 d_{k-1}}{d_{k-1}}}}\right\}$-two sided (resp.\ one sided)  expander.

    By \cref{cor:consts-on-the-spectra-of-the-links}, the spectra of all links $\what{X}_{\hat{\s}}$ with $\hat{\s}$ of dimension $< k-2$ are bounded by $\lambda(X)$.
    In addition, by \pref{lem:exist-main-lemma-bounding-radius-for-pmz1-vectors}, for any $\hat{\s}\in \what{X}(k-2)$ and any disjoint sets $S,T\subseteq X_\s(0)$ for $\s=\pi(\hat{\s})$, we have that $|\iprod{\indvec{S},A_\s^{f}\indvec{T}}| \le O\Paren{\sqrt{\frac{k^2\log d_{k-1}}{d_{k-1}} |S||T|}}$ where $A_\s^{f}$ is the $f_{\hat{\s}}$-signed random walk matrix of $X_\s$. Together with \pref{lem:father-of-inverse-mixing-lemma} this implies 
    \begin{align*}
        \lambda(X_\s)
        =O\Paren{ \sqrt{\frac{k^2\log d_{k-1}}{d_{k-1}} }\Paren{1+ \log\Paren{\sqrt{\frac{d_{k-1}}{k^2\log d_{k-1} }}}}}
        &\le        
        O\Paren{ \sqrt{\frac{k^2\log d_{k-1}}{d_{k-1}}}\cdot \log\sqrt{d_{k-1}}}\\
        &=        
        O\Paren{ \sqrt{\frac{k^2\log^3 d_{k-1}}{d_{k-1}}}}
    \end{align*}
    hence, by \pref{lem:bl_union}, $\lambda(X_\s)=\max\left\{\lambda(X), O\Paren{ \sqrt{\frac{k^2\log^3 d_{k-1}}{d_{k-1}} }}\right\}$ and by \pref{cor:consts-on-the-spectra-of-the-links}, this is also the case for $\lambda(\what{X}_{\hat{\s}})$.
\end{proof}

\section[An Algorithmic Version of Theorem 4.2]{An Algorithmic Version of \pref{thm:exist-family-of-HDXs}} \label{sec:rand-alg}
In this subsection, we prove that there is a randomized algorithm that finds a family of local lifts as in \pref{thm:exist-family-of-HDXs}
when \(X\) is a high dimensional expander under mild assumptions on the degree sequence which we encapsulate in the following definition.
\begin{definition}[Nice complex]
    Let \(X\) be a \(k\)-dimensional simplicial complex. We say that \(X\) is \emph{nice} if \(X\) is regular, and 
    \begin{equation}\label{eq:satisfy-degrees}
        d_{k-2}^{1-4\log d_{k-1}} < \frac{2}{e(k+1)kd_{k-1}+1}.
    \end{equation}
\end{definition}

We prove the following.
\begin{theorem} \label{thm:rand-alg-lll-full}
    There exists a randomized algorithm \(\mathcal{A}\) that takes as input a \(k\)-dimensional complex \(X_0\) and an integer \(i \geq 1\), runs in expected time \(\poly((2^i|X_0(0)|)^k)\) at most, and outputs a \(k\)-dimensional complex \(X_i\) with \(2^i |X_0(0)|\) vertices. The algorithm has the following guarantee.
    
    If \(X_0\) is a nice \((d_0,\ldo,d_{k-1})\)-regular \(\lambda\)-two sided high dimensional expander, then \(X_i\) is a \((2^i d_0,\dots,2^i d_{k-2},d_{k-1})\)-regular \(\lambda'\)-two sided high dimensional expander where 
    \[\lambda' = O \left ( \max  \set{\lambda\Paren{1+\log \frac{1}{\lambda}}, \sqrt{\frac{\log^3 d_{k-1}}{d_{k-1}}}}\right )\footnote{We will not calculate the constants in the big \(O\) notation explicitly.}.\]
          
    Moreover, one can modify the algorithm so that it outputs a sequence \(X_1,X_2,\dots,X_i\) of complexes all satisfying the same guarantees (instead of just the last one), so that for every \(j=0,1,\dots,i-1\), \(X_{j+1}\) is a local lift of \(X_j\).
\end{theorem}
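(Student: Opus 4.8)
The plan is to make the existence argument of \cref{thm:exist-HDX-single-iteration} algorithmic, one local lift at a time, by invoking the algorithmic Lov\'asz Local Lemma (\cref{lem:Lovasz-Local-Lemma-alg}) in place of its existential version, and then to iterate the resulting single-step construction \(i\) times.

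For a single step, fix a nice \((d_0,\ldots,d_{k-1})\)-regular \(\lambda\)-two sided HDX \(X\). By \cref{obs:mult_2} the output regularity \((2d_0,\ldots,2d_{k-2},d_{k-1})\) and vertex count \(2\abs{X(0)}\) are automatic for any local lift, and by \cref{cor:consts-on-the-spectra-of-the-links} the only links of \(\what X\) whose expansion is not inherited from \(X\) are the \((k-2)\)-links, for which \(\spec{\what X_{\hat\s}}=\spec{A_\s}\cup\spec{A^{f_{\hat\s}}_\s}\). Hence, exactly as in the existence proof, it suffices to produce a signing \(f:X(k)\to\prodft\) such that in every \((k-2)\)-link the \(f\)-induced signed operator satisfies the bilinear-form hypothesis of \cref{lem:father-of-inverse-mixing-lemma} at scale \(\alpha=O(\sqrt{\log d_{k-1}/d_{k-1}})\); then \cref{lem:father-of-inverse-mixing-lemma} together with \cref{lem:bl_union} gives \(\lambda(\what X_{\hat\s})\le\max\{\lambda,\,O(\alpha(1+\log\tfrac1\alpha))\}\le\lambda'\). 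To find such an \(f\), we run \cref{lem:Lovasz-Local-Lemma-alg} with independent variables the uniform signs \(\{f(\tau)\}_{\tau\in X(k)}\) and bad events a polynomial subfamily of the events \(B^{S,T}_\s\) of \cref{lem:exist-main-lemma-bounding-radius-for-pmz1-vectors}, restricted as in \cite{BiluL2006} to pairs \(S,T\) inducing a connected subgraph. Everything the algorithm needs is already available: \cref{obs:exist-ind-values-for-edges} plus a Hoeffding bound give \(\prin{B^{S,T}_\s}\le d_{k-1}^{-\Omega(\abs{S\sqcup T}k^2)}\), while \cref{claim:exist-caracterizing-dependent-events} and \cref{claim:exist-bounding-number-dependent-events} control the dependency neighbourhood; and the \emph{niceness} inequality \eqref{eq:satisfy-degrees} is precisely the hypothesis under which a single geometric weight \(\rho(B^{S,T}_\s)\) verifying \eqref{eq:prelim-bound-in-lovasz-local-lemma} also keeps the Moser--Tardos work \(\sum_B\rho(B)/(1-\rho(B))\) (and the number of retained events) polynomially bounded. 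This is the ``simpler but more restrictive'' route promised in \cref{sec:intro}: the restriction is that \eqref{eq:satisfy-degrees} forces the links to be dense enough. Since one \(B^{S,T}_\s\) can be checked in time \(O(\abs{S}\abs{T})\), \cref{lem:Lovasz-Local-Lemma-alg} returns \(f\) in expected time \(\poly(\abs{X(k-1)})\), and \(\what X\) is then a \(\lambda'\)-two sided HDX of the stated regularity on \(2\abs{X(0)}\) vertices.

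For the full statement, set \(X_{j+1}\dotdef\what{X_j}\) for \(j=0,\ldots,i-1\). Niceness is preserved along the chain: by \cref{obs:mult_2}, \(d_{k-1}(X_j)=d_{k-1}(X_0)\) is unchanged while \(d_{k-2}(X_j)=2^jd_{k-2}(X_0)\) only grows, and \eqref{eq:satisfy-degrees} already forces \(d_{k-1}\ge 2\), so the exponent \(1-4\log d_{k-1}\) is negative and the left-hand side of \eqref{eq:satisfy-degrees} only decreases; thus every \(X_j\) is nice and the single step applies to it. The expansion bound does not deteriorate, since at every step we use \cref{lem:father-of-inverse-mixing-lemma} with the \emph{same} scale \(\alpha=O(\sqrt{\log d_{k-1}/d_{k-1}})\) (because \(d_{k-1}\) is fixed) and \cref{lem:bl_union} contributes \(\lambda(X_{j+1})\le\max\{\lambda(X_j),\,O(\alpha(1+\log\tfrac1\alpha))\}\); iterating, \(\lambda(X_i)\le\max\{\lambda(X_0),\,O(\alpha(1+\log\tfrac1\alpha))\}\), which is \(O(\max\{\lambda(1+\log\tfrac1\lambda),\,\sqrt{\log^3 d_{k-1}/d_{k-1}}\})\) because the outer maximum absorbs the repeated applications instead of compounding them. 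Summing the per-step times over the \(i\le\log(2^i\abs{X_0(0)})\) steps, each run on a complex of size at most that of \(X_i\), gives the claimed \(\poly((2^i\abs{X_0(0)})^k)\) bound. For the ``moreover'' clause one simply outputs the whole chain \(X_1,\ldots,X_i\) instead of only \(X_i\); each \(X_j\) meets the guarantee by the above, and \(X_{j+1}\) is a local lift of \(X_j\) by construction.

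The main obstacle is the runtime of the single step: \cref{lem:Lovasz-Local-Lemma-alg} charges (number of events)\(\times\sum_B\rho(B)/(1-\rho(B))\), so one must cut the \emph{a priori} exponentially many events \(B^{S,T}_\s\) down to polynomially many while still certifying the operator-norm conclusion of \cref{lem:father-of-inverse-mixing-lemma}. The way to do this is to keep only the events with \(S\cup T\) connected and \(\abs{S\sqcup T}\le O(\log\abs{X(k-1)}/\log d_{k-1})\) --- by the subtree count of \cite{Knuth68,FriezeM99} this leaves \(\poly(\abs{X(k-1)})\) events --- and to dispose of the super-polynomially unlikely larger pairs by a final per-link check of \(\norm{A^f_\s}\) with an expected-\(O(1)\) number of restarts, using that in the Moser--Tardos output distribution the probability that such a large pair is bad is still \(o(1)\). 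Verifying that the calibrated weight \(\rho\) satisfies \eqref{eq:prelim-bound-in-lovasz-local-lemma} for every retained event and that the associated sum is polynomial is exactly the bookkeeping that \eqref{eq:satisfy-degrees} is designed to make go through; this, rather than any new structural idea beyond \cref{sec:existence}, is where the ``nice'' hypothesis (as opposed to the purely existential \cref{thm:exist-HDX-single-iteration}) is spent.
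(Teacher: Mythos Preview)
Your high-level iteration argument (the second and third paragraphs) is fine and matches the paper, but the single-step algorithm you outline has a real gap, and the paper in fact takes a different route to close it.

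The problem is exactly the one you flag: the events \(B^{S,T}_\s\) from \pref{lem:exist-main-lemma-bounding-radius-for-pmz1-vectors} are exponentially many, and your fix---truncate to connected \(S\cup T\) of size \(O(\log|X(k-1)|/\log d_{k-1})\) and ``dispose of'' the remaining pairs by a final per-link spectral check with expected-\(O(1)\) restarts---does not go through as stated. The Moser--Tardos output distribution is \emph{not} the uniform distribution on signings, and nothing in \pref{lem:Lovasz-Local-Lemma-alg} tells you that untracked bad events remain unlikely in that distribution; you would need an extra argument (e.g.\ of the Haeupler--Saha--Srinivasan flavor) which you do not supply. Relatedly, the niceness inequality \eqref{eq:satisfy-degrees} is not ``precisely the hypothesis'' that makes your fine-grained LLL polynomial; it is calibrated to a completely different event system (see below), and you have not shown that \eqref{eq:satisfy-degrees} yields a valid \(\rho\) for your truncated family of \(B^{S,T}_\s\)'s.

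The paper avoids all of this by \emph{coarsening} the bad events rather than truncating them. It introduces the notion of a \((\beta,t)\)-sparse complex (\pref{def:sparse}) and, via \pref{lem:expander-is-sparse} (which is \cite[Lemma 3.4]{BiluL2006}), proves a one-step theorem (\pref{thm:rand-alg-lll-one-step}) using a single event \(C_\s\) per face \(\s\in X(k-2)\): ``either \(\what{X_\s}^{\pm f_\s}\) is not a \(\lambda'\)-expander, or it is not \((\beta,\log 2d_{k-2})\)-sparse''. There are only \(|X(k-2)|\) such events; the dependency graph has degree at most \(D=\binom{k+1}{2}\frac{d_{k-2}d_{k-1}}{2}\) (two \(C_\s,C_{\s'}\) are adjacent iff some \(k\)-face contains both \(\s,\s'\)); \pref{lem:expander-is-sparse} gives \(\Pr[C_\s]\le 2d_{k-2}^{-4\log d_{k-1}}\); and the constant assignment \(\rho\equiv\frac{1}{D+1}\) satisfies \eqref{eq:prelim-bound-in-lovasz-local-lemma} precisely when \eqref{eq:satisfy-degrees} holds. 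Checking whether \(C_\s\) occurs is genuinely polynomial (a spectral computation plus enumeration of connected \(U\) of size \(\le \log 2d_{k-2}\)), so \pref{lem:Lovasz-Local-Lemma-alg} runs in \(\poly(|X(0)|^k)\) time. Crucially, the one-step guarantee includes that \(\what X\) is again \((\beta,\log 2d_{k-2})\)-sparse, which is what allows the induction to continue; your single step tracks only expansion, so even if it worked once you would not be able to reapply \pref{lem:expander-is-sparse} at the next round. This sparseness bookkeeping is also why the final bound carries the \(\lambda(1+\log\frac1\lambda)\) term: one takes \(\beta=\max\{2\lambda,10\sqrt{\log d_{k-1}/d_{k-1}}\}\) (using \pref{claim:expander-is-sparse} for the base case) and the \(\lambda'\) in \pref{thm:rand-alg-lll-one-step} is \(\max\{\lambda,O(\beta(1+\log\frac1\beta))\}\).
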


Loosely speaking, in order to prove \pref{thm:rand-alg-lll-full}, we need to prove that there is an algorithm \(\mathcal{A}\) that finds a single local lift for \(X\) in polynomial time (just as \pref{thm:exist-family-of-HDXs} was proved by the `one-step theorem' \pref{thm:exist-HDX-single-iteration}) with good enough spectral expansion. Then we just iteratively use \(\mathcal{A}\) on its own output, setting \(X_{j+1} = \mathcal{A}(X_j)\), until reaching \(j=i-1\). For this to work, we also need to address the issue that \(\lambda' \geq \lambda\) so naively the expansion deteriorates as we reiterate, but this technical discussion is deferred to below.

\begin{remark}[A non-uniform algorithm for any HDX] \label{rem:degrees-eventually-satisfy}
\pref{thm:rand-alg-lll-full} requires that \(X_0\) be a nice complex, i.e.\ that \eqref{eq:satisfy-degrees} holds. However, in any family \(\set{X_i}_{i=0}^\infty\) where \(X_{i+1}\) is a local lift of \(X_i\), the degree \(d_{k-2}(X_i)\) tends to infinity with \(i\) while the other side of both inequalities stays fixed. 
Thus, the inequalities will eventually hold for any family of consecutive local lifts. In fact, they should hold for any \(i \geq C\log (k+d_{k-1}(X_0))\) for some large enough constant \(C > 0\). Thus we can modify the algorithm to work even if \(X_0\) is not nice (albeit with the spectral expansion bound guaranteed in \pref{thm:exist-family-of-HDXs}, which is slightly worse than the expansion in \pref{thm:rand-alg-lll-full}). This is done by allowing the algorithm to do a brute-force search for the first few steps, to produce a nice \(X_i\), and then continuing as the original algorithm does. The first few steps will eventually stop because \pref{thm:exist-family-of-HDXs} promises the existence of such an \(X_j\). This process takes \(\poly(|X_i|^k) + \exp(O(|X_0|^k))\) time.
\end{remark}

Towards the proof of \pref{thm:rand-alg-lll-full}, we need the following definition and lemma from \cite{BiluL2006}.

\begin{definition}\label{def:sparse}
    A graph $G$ with adjacency operator $A$ is said to be $(\beta,t)$-sparse if for every $S,T\subseteq V(G)$ such that $|S\cup T|\le t$, 
    \[
    \Iprod{\indvec{S},A\indvec{T}} \le \beta\sqrt{|S||T|}.
    \]
    For a $k$-dimensional hyper-regular complex $X$, we say that it is $(\beta,t)$-sparse if for every $\s\in X(k-2)$, the graph $X_{\s}$ is $(\beta,t)$-sparse.
\end{definition}
\begin{remark} \label{rem:connected-sparsity}
    While the definition here regards any \(S,T\) with $|S\cup T|\le t$, it is in fact equivalent to regarding only \(S,T\) with $|S\cup T|\le t$ \emph{such that the graph induced on \(S \cup T\) is connected}. We also remark that if \(X\) is \((\beta,t)\)-sparse then it is also \((\beta',t)\)-sparse for any \(\beta' \geq \beta\).
\end{remark}
The reason we need this definition of sparseness is that in a random local lift, sparseness does not deteriorate with high probability. More formally, the following lemma was proven in \cite{BiluL2006}.
\begin{lemma}[{\cite[Lemma 3.4]{BiluL2006}}]\label{lem:expander-is-sparse}
    Let \(G = (V,E)\) be a \(d\)-regular graph with \(n\) vertices that is \((\beta,\log n)\)-sparse for \(\beta \geq 10\sqrt{\frac{\log d}{d}}\). Then with probability \(\geq 1- n^{-4\log d}\) over \(f:E \to \set{\pm 1}\):
    \begin{itemize}
        \item For every \(S,T \subseteq V\), $\Abs{\Iprod{\indvec{S},A^f \indvec{T}}} \le \beta \sqrt{|S| |T|}$ and,
        \item \(\what{G}^f\) is $(\beta,\log{n}+1)$-sparse,
    \end{itemize}
\end{lemma}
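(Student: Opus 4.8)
The plan is to run the argument of \cite{BiluL2006} (their Lemma~3.4): a Hoeffding bound for each fixed pair of vertex sets, glued together by a union bound that is kept small by ranging only over pairs that induce a \emph{connected} subgraph. Both bullets fall out of the same estimate. First I would make two reductions. As in \pref{rem:connected-sparsity}, it is enough to bound $\Abs{\Iprod{\indvec{S},A^f\indvec{T}}}$ for \emph{disjoint} $S,T$ whose union induces a connected subgraph of $G$: if $S\cup T$ breaks into connected components, $\Iprod{\indvec{S},A^f\indvec{T}}$ splits as a sum over the components with no cross terms, and Cauchy--Schwarz recombines the component bounds into $\beta\sqrt{|S|\,|T|}$. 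Second, when $|S\cup T|\le\log n$ the bound is \emph{deterministic}, since $\Abs{\Iprod{\indvec{S},A^f\indvec{T}}}\le\Iprod{\indvec{S},A\indvec{T}}\le\beta\sqrt{|S|\,|T|}$, the last step being $(\beta,\log n)$-sparsity of $G$ (i.e.\ \pref{def:sparse} with the all-ones signing). So the probabilistic work is confined to scales $c\dotdef|S\cup T|\ge\log n+1$; and since $\what{G}^f$ has $2n$ vertices with $\log(2n)=\log n+1$, this is exactly the range of scales the second bullet needs.

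For the core estimate, fix disjoint connected $S,T$ with $|S\cup T|=c$. Then $d\cdot\Iprod{\indvec{S},A^f\indvec{T}}=\sum_{e\in E_G(S,T)}f(e)$ is a mean-zero sum of $|E_G(S,T)|$ independent signs, so Hoeffding's inequality gives
\[
\prin{\Abs{\Iprod{\indvec{S},A^f\indvec{T}}}>\beta\sqrt{|S|\,|T|}}\;\le\;2\exp\!\Paren{-\frac{\beta^2 d^2\,|S|\,|T|}{2\,|E_G(S,T)|}}.
\]
Using $|E_G(S,T)|\le d\min\{|S|,|T|\}$ and $\max\{|S|,|T|\}\ge c/2$, the exponent is at least $\beta^2 d c/4\ge 25\,c\log d$ by the hypothesis $\beta\ge10\sqrt{\log d/d}$, so each such event has probability at most $d^{-20c}$. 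Now count: the number of size-$c$ vertex subsets of $G$ that induce a connected subgraph is at most $n(ed)^{c}$ --- a connected set has a spanning subtree rooted at any of its vertices, and a graph of maximum degree $d$ has at most $(ed)^{c-1}$ rooted subtrees on $c$ vertices, the standard subtree count already used in the existence proof --- and each such set splits into an ordered disjoint pair in at most $2^{c}$ ways. Hence the probability that some connected pair at scale $c$ is bad is at most $n(2ed)^{c}\cdot d^{-20c}=n(2e)^{c}d^{-19c}$. For $d$ above an absolute constant --- which the assumption $\beta\ge10\sqrt{\log d/d}$ forces anyway, as it requires $d\ge100\log d$ --- these terms decay geometrically in $c$, so summing over $c\ge\log n+1$ is dominated by the $c=\log n+1$ term and yields total $n^{O(1)}\cdot d^{-19\log n}=n^{O(1)-19\log d}\le n^{-4\log d}$. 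Together with the deterministic small-scale case, this is the first bullet.

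For the second bullet I would transfer the bound to $A_{\what{G}^f}$ via the eigenspace decomposition behind \pref{lem:bl_union}: under $\ell_2(\what{V})\cong\ell_2(V)\oplus\ell_2(V)$ splitting functions into their symmetric and antisymmetric parts with respect to flipping a fiber, $A_{\what{G}^f}$ acts as $A$ on the first summand and as $A^f$ on the second. Consequently $\Iprod{\indvec{\hat S},A_{\what{G}^f}\indvec{\hat T}}$ unfolds into a bounded number of terms, each a scalar multiple of $\Iprod{\indvec{A},A\indvec{B}}$ or $\Iprod{\indvec{A'},A^f\indvec{B'}}$ with $A,B,A',B'$ built from the projections $\pi(\hat S),\pi(\hat T)$ by unions, intersections and differences, hence of size at most $|\hat S\cup\hat T|$. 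For $|\hat S\cup\hat T|\le\log n+1$ the $A^f$-terms are controlled by the event established in the previous paragraph and the $A$-terms by the $(\beta,\log n)$-sparsity of $G$; a final Cauchy--Schwarz on the resulting $\sqrt{|A|\,|B|}$'s gives $\Iprod{\indvec{\hat S},A_{\what{G}^f}\indvec{\hat T}}\le\beta\sqrt{|\hat S|\,|\hat T|}$, i.e.\ $\what{G}^f$ is $(\beta,\log n+1)$-sparse. (The one-unit slack between ``$\le\log n$'' and ``$\le\log n+1$'' in the $A$-terms is a routine constant-absorption exactly as in \cite{BiluL2006}.)

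The only genuinely load-bearing step is the union bound. A naive count over all pairs $S,T$ of a given size is $\binom{n}{c}^2$, far too large for the $d^{-\Omega(c)}$ tail to overcome; the whole argument hinges on cutting this down to $n(ed)^{c}$ by passing to connected subgraphs and invoking the subtree count. Once that is in place, the Hoeffding estimate, the geometric sum over scales, and the symmetric/antisymmetric unfolding are all routine, and the numerology landing on $n^{-4\log d}$ is robust because the hypothesis on $\beta$ already forces $d$ to be large.
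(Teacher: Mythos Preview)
The paper does not give its own proof of this lemma; it is quoted as \cite[Lemma~3.4]{BiluL2006}, and the paper only adds the remark that the bound $1-n^{-4\log d}$ is implicit in their argument and that the proof extends verbatim to any $\beta \ge 10\sqrt{\log d/d}$. So there is nothing in the paper to compare against---your sketch is an attempt to reconstruct the Bilu--Linial argument itself.

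Your treatment of the first bullet is correct and is exactly that argument: Hoeffding for each connected pair, the subtree count to control the union bound, and the deterministic handling of scales $\le\log n$ via the assumed sparsity of $G$.

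Your treatment of the second bullet, however, does not close. The symmetric/antisymmetric unfolding does write $\iprod{\indvec{\hat S},A_{\what G}\indvec{\hat T}}$ as a signed sum of terms of the form $\iprod{\indvec{A},A\indvec{B}}$ and $\iprod{\indvec{A'},A^f\indvec{B'}}$, but two problems remain. First, when $\pi$ is injective on $\hat S\cup\hat T$ and $|\hat S\cup\hat T|=\log n+1$, the unsigned part collapses to $\tfrac12\iprod{\indvec{\pi(\hat S)},A\indvec{\pi(\hat T)}}$ with $|\pi(\hat S)\cup\pi(\hat T)|=\log n+1$; the $(\beta,\log n)$-sparsity of $G$ says nothing at this scale. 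Second, even setting that aside, recombining the (up to eight) indicator terms by Cauchy--Schwarz costs a genuine constant, yielding $C\beta\sqrt{|\hat S|\,|\hat T|}$ rather than $\beta\sqrt{|\hat S|\,|\hat T|}$. Neither of these is ``routine constant-absorption'': the conclusion demands \emph{exactly} $\beta$, and that exactness is load-bearing in the paper---the sparsity parameter must stay fixed across all iterations in \pref{thm:rand-alg-lll-full}, or it would blow up geometrically. A correct argument must treat the injective size-$(\log n+1)$ case by a direct tail bound on the random variable $|E_{\what G}(\hat S,\hat T)|$ (which is a sum of independent Bernoulli$(\tfrac12)$'s over $E_G(\pi(\hat S),\pi(\hat T))$), rather than by first splitting off a deterministic $A$-term that one then cannot control.
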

We comment that \cite[Lemma 3.4]{BiluL2006} does not explicitly calculate the probability \(1-n^{-4\log d}\); rather, they only say the events happen with high probability. This is the probability that is implicit in their proof. They also prove this theorem for \(\beta = 10\sqrt{\frac{\log d}{d}}\) but the same proof extends to \(\beta \geq 10\sqrt{\frac{\log d}{d}}\) with no additional changes.

This next claim easily follows from the definition of expansion and says that a spectral expander is sparse.
\begin{claim}\label{claim:expander-is-sparse}
    Let $G$ be a $d$-regular $\lambda$-two sided spectral expander over $n$ vertices such that \(\lambda > \frac{1}{\sqrt{d}}\) and \(d \geq 3\). Then $G$ is $(2\lambda,\log{n})$-sparse.
\end{claim}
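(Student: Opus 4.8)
The plan is to apply the expander mixing lemma and then check that the ``error term'' it produces is absorbed by $\lambda\sqrt{|S||T|}$ under the hypotheses. Recall that for a $d$-regular graph $G$ on $n$ vertices whose random walk operator $A$ satisfies $\max\{\lambda_2,|\lambda_n|\}\le\lambda$, the expander mixing lemma gives, for all $S,T\subseteq V(G)$,
\[
    \Iprod{\indvec{S},A\indvec{T}} \;\le\; \frac{|S|\,|T|}{n} + \lambda\sqrt{|S|\,|T|}.
\]
(If one does not wish to cite this, it is the one-line computation obtained by decomposing $\indvec{S}$ and $\indvec{T}$ into their projections onto the constant functions and onto the orthogonal complement, and using that $A$ acts on the latter with operator norm at most $\lambda$.) Hence it suffices to show that whenever $|S\cup T|\le\log n$ we have $\frac{|S||T|}{n}\le \lambda\sqrt{|S||T|}$: then the right-hand side above is at most $2\lambda\sqrt{|S||T|}$, which is precisely $(2\lambda,\log n)$-sparsity.

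For the error term, first note that since $|S|,|T|\le|S\cup T|\le\log n$ we have $\sqrt{|S||T|}\le\log n$, so $\frac{|S||T|}{n}\le \frac{\log n}{n}\cdot\sqrt{|S||T|}$, and it is enough to prove $\frac{\log n}{n}\le\lambda$. Here I would invoke that $G$ is simple (the $1$-skeleton of a link in the complexes we consider has neither self-loops nor multi-edges), so $d\le n-1$, and therefore $\lambda > \frac{1}{\sqrt d}\ge\frac{1}{\sqrt{n-1}}$. It then remains to verify the elementary numerical inequality $\log_2 n \le \frac{n}{\sqrt{n-1}}$ for every integer $n\ge 2$ (for $n\le 1$ there is no such $G$, so the claim is vacuous). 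Equivalently $(\log_2 n)^2(n-1)\le n^2$; writing $u=\log_2 n$ this reads $u^2\le 2^u\bigl(1+\tfrac{1}{2^u-1}\bigr)$, which is checked directly for the few small values where it is close to tight (around $n=8,9,10$) and holds comfortably once $2^u$ outgrows $u^2$. Combining the two displays yields $\Iprod{\indvec{S},A\indvec{T}}\le 2\lambda\sqrt{|S||T|}$ for all $S,T$ with $|S\cup T|\le\log n$, as desired.

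There is no real ``main obstacle'' — the argument is short — but two points deserve a little care. First, one must remember to use simplicity of $G$ to get $d\le n-1$: this is exactly what lets the hypothesis $\lambda>1/\sqrt d$ dominate $\log n/n$, so without it the claim would be false. Second, the numerical bound genuinely needs to be $\log_2 n\le n/\sqrt{n-1}$ and cannot be weakened to the cruder $\log_2 n\le\sqrt n$, which fails for $5\le n\le 15$; fortunately the sharper inequality does hold for all $n\ge 2$, if only barely near $n=8$.
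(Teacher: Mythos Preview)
Your proof is correct and follows exactly the paper's approach: apply the expander mixing lemma and absorb the $\frac{|S||T|}{n}$ term into $\lambda\sqrt{|S||T|}$ via a chain of inequalities ending in $\frac{\log n}{n}\le\lambda$. In fact you are more careful than the paper, which simply writes $\frac{\log n}{n}\le\frac{1}{\sqrt n}\le\frac{1}{\sqrt d}\le\lambda$ without comment---the first step, equivalent to $\log_2 n\le\sqrt n$, fails for $5\le n\le 15$, precisely the issue you flag---whereas your sharper inequality $\log_2 n\le n/\sqrt{n-1}$ does hold for all $n\ge 2$.
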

\begin{proof}
    Fix \(S,T\) such that with $|S\cup T|\le \log n$. By the \(\lambda\)-expansion and the expander mixing lemma (see e.g.\ \cite{HooryLW2006}), \(\iprod{\indvec{S},A\indvec{T}} \leq \frac{|S| |T|}{n} + \lambda \sqrt{|S| |T|}\). We bound this term by \( \left (\lambda + \frac{\log n}{n}\right ) \sqrt{|S| |T|}\). As \(\frac{\log n}{n} \leq \frac{1}{\sqrt{n}} \leq \frac{1}{\sqrt{d}} \leq \lambda\) the claim follows.
\end{proof}

We are ready to state our one-step theorem.
\begin{theorem} \label{thm:rand-alg-lll-one-step}
    There exists a randomized algorithm \(\mathcal{A}\) with the following guarantees. Let \(X\) be a $k$-dimensional \(\bar{d}\)-regular \(\lambda\)-two sided (resp.\ one sided) high dimensional expander over $n$ vertices, where $\bar{d}=(d_0,\ldo,d_{k-1})$. Let \(\beta \geq 10\sqrt{\frac{\log d_{k-1}}{d_{k-1}}}\) and denote by \(\lambda' = \max \set{\lambda, O(\beta (1+\log\frac{1}{\beta}))}\). Assume that \(X\) is \((\beta, \log d_{k-2})\)-sparse, and suppose that \(d_{k-2}\) and \(d_{k-1}\) satisfy \(d_{k-2} > d_{k-1}^2\) and \eqref{eq:satisfy-degrees}.
    Then \(\mathcal{A}(X) = \what{X}\) is a local lift of \(X\) such that:
    \begin{enumerate}
        \item The complex \(\what{X}\) is a \(\lambda'\)-two sided (resp.\ one sided) high dimensional expander.
        \item The complex \(\what{X}\) is \((\beta,\log 2d_{k-2})\)-sparse.
    \end{enumerate}
    Upon input \(X\) satisfying the above, the algorithm runs in time \(\poly(|X(0)|^k)\).
\end{theorem}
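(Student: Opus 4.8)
The plan is to run the algorithmic Lov\'asz Local Lemma (\pref{lem:Lovasz-Local-Lemma-alg}, Moser--Tardos) over the independent uniform signs $\set{f(\tau)}_{\tau\in X(k)}$, using a single ``bad'' event $B_\s$ for each face $\s\in X(k-2)$: $B_\s$ holds when the conclusion of \pref{lem:expander-is-sparse} fails for the $d_{k-1}$-regular graph $X_\s$ on $d_{k-2}$ vertices under the signing $f_{\hat{\s}}$ it inherits from $f$. Concretely I would take $B_\s$ to be the event ``$\norm{A^{f_{\hat{\s}}}_\s} > C\beta\Paren{1+\log\tfrac1\beta}$ for a suitable absolute constant $C$, \emph{or} the $f_{\hat{\s}}$-induced lift of $X_\s$ is not $(\beta,\log 2d_{k-2})$-sparse''. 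Flipping $\sign(\hat{\s})$ only negates $A^{f_{\hat{\s}}}_\s$ and replaces the lift by an isomorphic one, so $B_\s$ is well defined and depends only on $\set{f(\tau): \tau\in X(k),\ \tau\supseteq\s}$. Because $X$ is $(\beta,\log d_{k-2})$-sparse and $\beta\ge 10\sqrt{\log d_{k-1}/d_{k-1}}$, \pref{lem:expander-is-sparse} bounds the probability of its own failure by $d_{k-2}^{-4\log d_{k-1}}$; its first conclusion forces $\norm{A^{f_{\hat{\s}}}_\s}=O(\beta(1+\log\tfrac1\beta))$ (the bilinear-to-spectral estimate underlying \pref{lem:father-of-inverse-mixing-lemma}) and its second conclusion is exactly the sparsity requirement, so choosing $C$ to be that $O(\cdot)$-constant gives $\prin{B_\s}\le d_{k-2}^{-4\log d_{k-1}}$.

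I would then bound the dependency degree. Events $B_\s$ and $B_{\s'}$ can be dependent only if some $k$-face $\tau$ contains both $\s$ and $\s'$; since $\s$ lies in $\card{X_\s(1)}=d_{k-2}d_{k-1}/2$ faces $\tau\in X(k)$ and each $\tau$ has $\binom{k+1}{2}$ subfaces of dimension $k-2$, the event $B_\s$ has at most $D\dotdef \binom{k+1}{2}d_{k-2}d_{k-1}/2$ neighbours in the dependency graph. The symmetric condition $e\cdot\prin{B_\s}\cdot(D+1)\le 1$ is then precisely what the niceness inequality \eqref{eq:satisfy-degrees}, with the constants chosen there, guarantees, so the Local Lemma applies with $\rho\equiv\tfrac1{D+1}$ and \pref{lem:Lovasz-Local-Lemma-alg} returns a signing $f$ with $f\notin\bigcup_\s B_\s$. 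Put $\what{X}=\what{X}^f$, which is a local lift of $X$ by \pref{const:base-construction}. By \pref{lem:const-structure-of-links} each $(k-2)$-link $\what{X}_{\hat{\s}}$ is the $f_{\hat{\s}}$-induced lift of $X_\s$; since $B_\s$ fails, $\norm{A^{f_{\hat{\s}}}_\s}=O(\beta(1+\log\tfrac1\beta))$, so \pref{lem:bl_union} and the $\lambda$-expansion of $X_\s$ make the $1$-skeleton of $\what{X}_{\hat{\s}}$ a $\lambda'$-expander, and $\what{X}_{\hat{\s}}$ is $(\beta,\log 2d_{k-2})$-sparse. Links of faces of dimension $<k-2$ retain their $\lambda\le\lambda'$-expansion by \pref{cor:consts-on-the-spectra-of-the-links}, and the degrees come out $(2d_0,\ldots,2d_{k-2},d_{k-1})$ by \pref{obs:mult_2}; together these give both conclusions. (The scale $\log 2d_{k-2}=\log d_{k-2}+1$ is chosen so that the output sparsity is exactly the sparsity hypothesis needed to re-apply the theorem to $\what{X}$, which is how \pref{thm:rand-alg-lll-full} follows by iteration, with $\beta$ and hence $\lambda'$ unchanged at every step.)

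For the running time, there are $\card{X(k)}\le\card{X(0)}^{k+1}$ random variables and $\card{X(k-2)}\le\card{X(0)}^{k-1}$ events, with $\sum_\s\rho/(1-\rho)\le\card{X(k-2)}$. Verifying a single $B_\s$ amounts to (i) an operator-norm computation on the $d_{k-2}\times d_{k-2}$ matrix $A^{f_{\hat{\s}}}_\s$ and (ii) a test of $(\beta,\log 2d_{k-2})$-sparsity of the lift, which by \pref{rem:connected-sparsity} need only range over \emph{connected} induced subgraphs on at most $\log 2d_{k-2}$ vertices; a subtree count as in the proof of \pref{claim:exist-bounding-number-dependent-events} caps their number by $2d_{k-2}\cdot\bigl(O(d_{k-1})\bigr)^{\log 2d_{k-2}}$, which under the hypotheses $d_{k-2}>d_{k-1}^2$ and \eqref{eq:satisfy-degrees} is $\poly(\card{X(0)}^k)$. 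Hence \pref{lem:Lovasz-Local-Lemma-alg} runs in $\poly(\card{X(0)}^k)$ expected time.

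The step I expect to be the crux is engineering a single, \emph{coarse} event per link that is at once improbable enough to survive the Local Lemma — the dependency degree $D$ is genuinely larger than in the graph lifts of \cite{BiluL2006}, since two distinct $(k-2)$-links may now share a common $k$-face, which is exactly why the stronger hypothesis \eqref{eq:satisfy-degrees} is imposed — and, simultaneously, polynomial-time verifiable, which is why $B_\s$ is phrased through the operator norm and through bounded, connected vertex subsets rather than through the full family of bilinear-form inequalities used in the existential proof of \pref{thm:exist-HDX-single-iteration}.
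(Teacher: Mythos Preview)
Your proposal is correct and is essentially identical to the paper's proof: one coarse bad event $B_\s$ per $(k-2)$-face, probability bounded via \pref{lem:expander-is-sparse}, dependency degree $D=\binom{k+1}{2}d_{k-2}d_{k-1}/2$ via shared $k$-faces, constant $\rho\equiv\tfrac1{D+1}$ with the symmetric LLL condition reducing exactly to \eqref{eq:satisfy-degrees}, and verification by spectrum plus enumeration of connected subsets of size $\le\log 2d_{k-2}$. The only cosmetic difference is that the paper phrases $B_\s$ as ``$\what{X_\s}^{\pm f_\s}$ is not a $\lambda'$-expander or not sparse'' rather than through $\norm{A^{f_{\hat\s}}_\s}$ directly, and records the probability bound as $2d_{k-2}^{-4\log d_{k-1}}$ (the factor $2$ being immaterial).
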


Before proving \pref{thm:rand-alg-lll-one-step}, let us quickly prove \pref{thm:rand-alg-lll-full}.
\begin{proof}[Proof of \pref{thm:rand-alg-lll-full}, assuming \pref{thm:rand-alg-lll-one-step}]
    Let \(X_0\) be as in the assumptions of \pref{thm:rand-alg-lll-full}. Let \(\mathcal{A}\) be the randomized algorithm in \pref{thm:rand-alg-lll-one-step}, and let \(X_{i+1} = \mathcal{A}(X_i)\) (or \(X_{i+1} = \bot\) if one of \(\mathcal{A}(X_j)\) fails to terminate for \(j \leq i\)). It suffices to show the following two statements:
    \begin{enumerate}
        \item If \(X_{i} \ne \bot\), then \(X_{i}\) satisfies:
        \begin{enumerate}
           \item \(X_i\) is a \(\lambda'\)-two sided high dimensional expander.
        \item \(X_i\) is \((\beta,\log 2d_{k-2})\)-sparse for \(\beta = \max \set{2\lambda, 10\sqrt{\frac{\log d_{k-1}}{d_{k-1}}}}\).
        \item \(X_i\) is nice, namely \eqref{eq:satisfy-degrees} holds for \(X_i\).
        \end{enumerate}
        \item For every \(i \geq 1\), \(X_{i}\) is output in expected \(\poly(|X_{i-1}(0)|^k)\)-time. In particular, the probability that any \(\mathcal{A}(X_i) = \bot\) is \(0\).
    \end{enumerate}
    The first item is proven by induction, where the base case for \(X_0\) holds by assumption (the sparseness is also by \(\lambda\)-two sided spectral expansion and \pref{lem:expander-is-sparse}). Assume that \(X_{i-1}\) satisfies the three sub-items, and prove for \(X_i\). Indeed, \(X_i = \mathcal{A}(X_{i-1})\) satisfies the first two sub-items by \pref{thm:rand-alg-lll-one-step}, and satisfies the third item because \(d_{k-2}(X_{i}) = 2d_{k-2}(X_{i-1})\) and the rest of the parameters remain fixed. The inequality \eqref{eq:satisfy-degrees} remains true under increasing \(d_{k-2}\) so if they hold true for \(X_{i-1}\) they are also true for \(X_i\).

    As for the second item, note that assuming \(X_{i-1}\) satisfies the assumptions \pref{thm:rand-alg-lll-one-step}, the algorithm in \pref{thm:rand-alg-lll-one-step} runs in expected time \(\poly(|X_{i-1}(0)|^k)\).
\end{proof}

\begin{proof}[Proof of \pref{thm:rand-alg-lll-one-step}]  
    We intend to use \pref{lem:Lovasz-Local-Lemma-alg}. For this, we fix the following `bad' events \(\Ifrak{C} = \sett{C_\s}{\s \in X(k-2)}\) where \(C_\s \subseteq \set{f:X(k) \to \set{\pm 1}}\) is the event where : 
    \begin{enumerate}
        \item Either \(\what{X_\s}^{\pm f_\s}\) is \emph{not} a \(\lambda'\)-two sided spectral expander, or
        \item \(\what{X_\s}^{\pm f_\s}\) is \emph{not} \((\beta, \log 2d_{k-2})\)-sparse.
    \end{enumerate}
    
    By \pref{lem:expander-is-sparse} (and \pref{lem:father-of-inverse-mixing-lemma} that relates the first item in \pref{lem:expander-is-sparse} to spectral expansion), \(\Prob[f]{C_\s} \leq 2d_{k-2}^{-4\log d_{k-1}}\). Moreover, because every link of a \(\hat{\s} \in \what{X}(k-2)\) is a lift of $X_\s$ with respect to \(f_{\hat{\s}}\), then if none of the events \(C_\s\) occur, then \(\what{X}\) satisfies both items in \pref{thm:rand-alg-lll-one-step}. We will use \pref{lem:Lovasz-Local-Lemma-alg} to find such an assignment.
        
    We now construct a dependency graph for \(\Ifrak{C}\). 
    Let \(\s \in X(k-2)\) and \(U \subseteq X(k-2)\). The event \(C_\s\) only depends on \(f_\s\), so it only depends on \(k\)-faces \(\tau \supseteq \s\). Therefore, if \(C_\s\) and \(\sett{C_{\s'}}{\s' \in U}\) are \emph{not} mutually independent, then in particular there is a \(k\)-face \(\tau\in X\) and \(\s' \in U\) such that \(\tau \supseteq \s,\s'\). Hence, in our dependency graph we connect \(C_\s \sim C_{\s'}\) if there exists such a \(k\)-face containing both \(\s\) and \(\s'\). Let us upper bound the neighborhood size of an event \(C_\s\). The number of neighbors that $C_\s$ has is upper bounded by the number of $k$-faces containing $\s$ times \(\binom{k+1}{k-1}\) (the number of ways to choose \(\sigma' \subseteq \tau\)). Therefore, the number of neighbors is bounded by
    \[D \coloneqq \binom{k+1}{k-1} \cdot \abs{\sett{\tau \in X(k)}{\tau \supseteq \s}} = \binom{k+1}{k-1} \abs{X_\s(1)} = \frac{(k+1)k}{4}d_{k-2}d_{k-1}.\]
    
    By setting \(\rho: \Ifrak{C} \to [0,1)\) to be the constant function \(\rho(C_\sigma) = \frac{1}{D+1}\) we have that
    \[\prob{C_\s} \leq \rho(C_\s) \prod_{\sigma' \sim \sigma}(1-\rho(C_\s)),\]
    because \(\rho(C_\s) \prod_{\sigma' \sim \sigma}(1-\rho(C_\s)) \geq \frac{1}{D+1} \left (1-\frac{1}{D+1} \right )^{D} \geq \frac{1}{e(D+1)} \) and \(\prob{C_\s} 
    \leq 2d_{k-2}^{-4 \log d_{k-1}} 
    \le \frac{1}{e(D+1)}\) by \eqref{eq:satisfy-degrees}.

    Let us now verify that the algorithm in \pref{lem:Lovasz-Local-Lemma-alg} runs in polynomial time. We note that there the number of events in \(\Ifrak{C}\) is \(\poly(|X(0)|^k)\), and checking whether \(C_\s\) occurs could be done in \(\poly(|X(0)|)\)-time because it amounts to:
    \begin{enumerate}
        \item Find the spectrum of a signed adjacency operator of a \(d_{k-2}\)-sized graph.
        \item Going over all connected sets \(U \subseteq \what{X}_\s^{\pm f_\s}\)  of size \(\leq \log 2 d_{k-2}\) for every $\s\in X(k-2)$, finding \(S,T\) such that \(S \cup T = U\), and counting the number of edges between \(S\) and \(T\) to check if the pair \(S,T\) violates sparseness. There is a \(\poly(|X(0)|)\) such \(U,S,T\) at most. 
    \end{enumerate}
    Therefore, the randomized algorithm in \pref{lem:Lovasz-Local-Lemma-alg} will find a signing in \(\poly(|X(0)|^k) \cdot \sum_{\sigma \in X(k-2)} \frac{1}{D} = \poly(|X(0)|^k)\)-time. 
\end{proof}

\subsection[A 1-Bounded HDX Family with Arbitrarily Large Vertex Degree]{\(1\)-Bounded HDX with Arbitrarily Large Constant Vertex Degree}
\restatetheorem{thm:diameter-intro}

\begin{proof}[Proof sketch]
    Let \(\lambda > 0\). Let \(\tilde{\lambda}\) be such that \(\lambda \leq O \left (\tilde{\lambda}(1+\log \frac{1}{\tilde{\lambda}}) \right )\) where the constant in the big \(O\) comes from \pref{thm:rand-alg-lll-full}. The construction in \cite{FriedgutI2020} gives an infinite family of \(2\)-dimensional \(\tilde{\lambda}\)-two sided high dimensional expanders that are \((d_0,d_1)\)-regular, where \(d_1 \leq  \exp(\poly(\frac{1}{\tilde{\lambda}}))\).\footnote{See discussion in \cite[Section 4.3]{FriedgutI2020} on how to do so using \(2\)-transitive groups.} Denote such a family \(\set{Y_\ell}_{\ell=0}^\infty\). One can verify that because \(d_0 \geq d_1+1 \geq 3\) and \(k=2\), these complexes are indeed nice.

    Moreover, let us assume without loss of generality that  
    \[ \tilde{\lambda}\left (1+ \log \frac{1}{\tilde{\lambda}} \right ) \geq \frac{\log^3 d_{1}}{d_{1}},\]
    since the construction in \cite{FriedgutI2020} allows to increase the \(1\)-degree of the complex by a polynomial in \(d_1\) while maintaining the \(\tilde{\lambda}\)-expansion.
        
     Set \(i = \lceil \log (M/d_0(Y_\ell)) \rceil\) (this is a constant since \(d_0(Y_\ell)\) is independent of \(\ell\)). By applying the algorithm in \pref{thm:rand-alg-lll-full} with input \((Y_\ell,i)\), we can obtain a \(2\)-dimensional \(\lambda\)-two sided high dimensional expander \(X_\ell\) where \(d_1\) remains the same but \(d_0(X_\ell) = 2^i d_0(Y_\ell) \in [M,2M]\), thus proving the theorem.

     For the `in particular' part of the theorem statement, this just follows from the fact that the diameter of any \(d_1\)-regular graph with \(d_0\) vertices is at least \(D=\frac{\log d_0}{\log d_1}\).
\end{proof}
\section{Derandomizing the Construction} \label{sec:deterministic}
In this section we provide a deterministic construction of \((k-1)\)-bounded families of high dimensional expanders, as referred to in \cref{thm:main-construction-deterministic}. For the rest of this section, we denote $\alpha_k(d)=10\sqrt{\frac{k^2\log d}{d}}$ (when $k$ is clear from context, we will write $\alpha(d)$).

We will prove the following theorem.
\begin{theorem}[Restatement of \pref{thm:main-construction-deterministic}]\label{cor:explicit_hdx}
    There exists a deterministic algorithm \(\mathcal{B}\) that takes as input a \(k\)-dimensional complex \(X_0\) and an integer \(i \geq 1\), runs in time \(\poly((2^i|X_0(0)|)^k)\), and outputs a \(k\)-dimensional complex \(X_i\) with \(2^i |X_0(0)|\) vertices. The algorithm has the following guarantee: If \(X_0\) is a \((d_0,\ldo,d_{k-1})\)-regular \(\lambda\)-two sided high dimensional expander, with \(\lambda>\alpha_k(d_{k-1})\), $d_{k-1}>2^{10k}$ and $|X_0(k-2)|\le d_{k-2}^{10k}$, then \(X_i\) is a \((2^i d_0,\dots,2^i d_{k-2},d_{k-1})\)-regular \(\lambda'\)-two sided high dimensional expander where 
    \[\lambda' = O \left (2^{5k}\lambda\Paren{1+\log \frac{1}{\lambda}}\right )\footnote{We will not calculate the constants in the big \(O\) notation explicitly.}.\] In particular, for every $n\in\NN$, choosing $i=\log n$ yields a complex with at least $n$-vertices.
\end{theorem}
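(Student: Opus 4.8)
The plan is to derandomize the one‑step construction behind \pref{thm:rand-alg-lll-one-step} by the method of conditional expectations / pessimistic estimators — exactly the strategy of \cite{BiluL2006} for ordinary graph lifts — and then iterate $i$ times as in the proof of \pref{thm:rand-alg-lll-full}. Concretely, I would isolate a \emph{deterministic one‑step lemma}: if $X$ is $(d_0,\ldo,d_{k-1})$‑regular, a $\lambda$‑two sided HDX whose $(k-2)$‑links are all $(\beta,\log d_{k-2})$‑sparse for some $\beta\ge\alpha_k(d_{k-1})$, and $d_{k-1}>2^{10k}$, $|X(k-2)|\le d_{k-2}^{10k}$, then in deterministic time $\poly(|X(0)|^k)$ one can produce a signing $f:X(k)\to\prodft$ such that every $(k-2)$‑link of $\what X^f$ is $(\beta,\log 2d_{k-2})$‑sparse. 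Granting this, by \pref{lem:const-structure-of-links} every $(k-2)$‑link of $\what X^f$ is a $2$‑lift of some $X_\s$, so by \pref{lem:bl_union} each signed operator $A^{f_{\hat{\s}}}_\s$ has operator norm at most that of a $(\beta,\cdot)$‑sparse graph, which the (graph) inverse mixing lemma underlying \pref{lem:father-of-inverse-mixing-lemma} bounds by $O(\beta(1+\log\tfrac1\beta))$; together with $\|A_\s\|\le\lambda$ and \pref{cor:consts-on-the-spectra-of-the-links} for lower‑dimensional links, this makes $\what X^f$ a $\lambda'$‑two sided HDX with $\lambda'=\max\{\lambda,O(\beta(1+\log\tfrac1\beta))\}$.

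The heart is the one‑step lemma. Following \cite{BiluL2006}, I would fix the family $\Ifrak{B}$ of ``bad'' events indexed by a face $\s\in X(k-2)$, a global sign $\epsilon\in\prodft$ (the two values of $\sign(\hat{\s})$), and a pair of disjoint vertex subsets $S,T$ of $\what{X_\s}^{\epsilon f_\s}$ with $|S\cup T|\le\log 2d_{k-2}$ inducing a connected subgraph; the event is that $|\iprod{\indvec S,A\indvec T}|>\beta\sqrt{|S||T|}$ for the random walk operator $A$ of $\what{X_\s}^{\epsilon f_\s}$. By \pref{rem:connected-sparsity}, a signing avoiding all of $\Ifrak{B}$ makes every $(k-2)$‑link of $\what X^f$ $(\beta,\log 2d_{k-2})$‑sparse. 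Three checks are needed. First, \pref{lem:expander-is-sparse} bounds the probability that a uniform random $f$ fails on any one link by $d_{k-2}^{-4\log d_{k-1}}$, so the union bound over the $\le 2d_{k-2}^{10k}$ links, using $d_{k-1}>2^{10k}$ (hence $4\log d_{k-1}>40k$), keeps the total failure probability strictly below $1$ — this is precisely where both degree hypotheses enter. Second, the number of events in $\Ifrak{B}$ is $|X(k-2)|$ times the number of connected subgraphs of size $\le\log 2d_{k-2}$ in a $d_{k-1}$‑regular graph, which is $\poly(|X(0)|^k)$ once $d_{k-1}$ is treated as a constant, and each event concerns a signed sum of $\pm\tfrac1{d_{k-1}}$ over a set of $k$‑faces. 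Third, taking the pessimistic estimator $\Phi$ to be the sum over $\Ifrak{B}$ of the exponential‑moment (Chernoff) upper bound on each event's conditional probability given a partial signing — a sum of products of $\cosh$'s in the still‑undetermined values $f(\tau)$ — $\Phi$ is computable in $\poly(|X(0)|^k)$ time, $\Phi(\text{empty})<1$ by the first check, and $\Phi$ equals its own average under a fair coin, so scanning $X(k)$ and greedily fixing each $f(\tau)$ to the value that does not increase $\Phi$ terminates with $\Phi<1$, hence $\Phi=0$, hence no event of $\Ifrak{B}$ occurs.

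For the theorem itself I would set $\beta=2\lambda$: since $\lambda>\alpha_k(d_{k-1})>d_{k-1}^{-1/2}$ and $d_{k-1}\ge 3$, \pref{claim:expander-is-sparse} makes the links of $X_0$ $(2\lambda,\log d_{k-2})$‑sparse, and $2\lambda>\alpha_k(d_{k-1})\ge 10\sqrt{\log d_{k-1}/d_{k-1}}$ as \pref{lem:expander-is-sparse} requires. Then set $X_{j+1}=\mathcal B(X_j)$ and apply the one‑step lemma $i$ times. The invariants are exactly the one‑step hypotheses: $d_{k-1}$ never changes; $|X_j(k-2)|=2^j|X_0(k-2)|\le(2^jd_{k-2}(X_0))^{10k}=d_{k-2}(X_j)^{10k}$; and, crucially, the one‑step lemma preserves the sparseness parameter $\beta$ \emph{exactly}, so the bound $\lambda'$ does not compound over $j$ and remains a fixed function of $\lambda$ and $k$. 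Tracking the dimension‑dependent constants (the subface counts $\binom{k+1}{\le 2}$, the requirement that $\beta$ be at least $\alpha_k(d_{k-1})$, and the slack the pessimistic estimator needs to absorb the $d_{k-2}^{10k}$ events per link) through the inverse mixing lemma yields the crude bound $\lambda'=O(2^{5k}\lambda(1+\log\tfrac1\lambda))$. Each step runs in $\poly(|X_j(0)|^k)\le\poly((2^i|X_0(0)|)^k)$, so the whole process does; taking $i=\log n$ gives a complex on $\ge n$ vertices.

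The main obstacle — and the real point of departure from \cite{BiluL2006} in spirit — is that the spectral/HDX property cannot be derandomized by a direct union bound: the discrepancy inequality $|\iprod{\indvec S,A^f\indvec T}|\le\beta\sqrt{|S||T|}$ must hold for \emph{all} disjoint $S,T$, of which there are exponentially many, and for large connected sets the per‑event Hoeffding bound does not decay, so no pessimistic estimator over those events can be kept below $1$. The resolution is to derandomize only \emph{sparseness of the lifted links} — a polynomially‑sized family of small‑set events — and then recover the full discrepancy bound, and hence the spectral bound and high dimensional expansion, for free from the lift's sparseness via \pref{lem:bl_union} and the inverse mixing lemma. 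Secondarily, one must check that a single $k$‑face signing can simultaneously control all $|X(k-2)|$ links: that the union bound still closes (this is where $d_{k-1}>2^{10k}$ and $|X_0(k-2)|\le d_{k-2}^{10k}$ are used) and that the number of bad events, hence the running time of the conditional‑expectations procedure, stays polynomial (this uses that $d_{k-1}$, and hence the count of small connected subgraphs of each link, is bounded).
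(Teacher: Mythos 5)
There is a genuine gap at the central step of your plan. You propose to derandomize \emph{only} the sparseness of the lifted links (a polynomially-sized family of small-set events) and then ``recover the full discrepancy bound, and hence the spectral bound \dots for free from the lift's sparseness via \pref{lem:bl_union} and the inverse mixing lemma.'' This inference is false: sparseness in the sense of \pref{def:sparse} is a condition on the \emph{unsigned} adjacency operator restricted to sets of logarithmic size, and it carries essentially no spectral information about the signed operator $A^{f_{\hat\s}}_\s$. The hypothesis of \pref{lem:father-of-inverse-mixing-lemma} is a discrepancy bound on $A^f$ itself for \emph{all} pairs $S,T$, not small-set sparseness of the lift. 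Concretely, the trivial signing $f\equiv 1$ turns each link into two disjoint copies of $X_\s$; this lift is exactly as sparse as $X_\s$ (indeed \pref{clm:sparse_preserved} states that \emph{every} signing preserves sparseness), yet $A^{f_{\hat\s}}_\s=A_\s$ has norm $1$ and the link is disconnected. So a greedy procedure whose only target is sparseness of the lifted links can output a signing for which no spectral bound holds, and your one-step lemma as stated does not imply item (1) of the theorem. You correctly diagnose the obstacle (one cannot run a pessimistic estimator over the exponentially many discrepancy events), but the proposed escape does not close the loop back to the spectrum.

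The paper's resolution, which is the ingredient your proposal is missing, is to derandomize a polynomial-time-computable \emph{proxy for the spectral norm} alongside sparseness: for each $\s$ it tracks $Y^{(\s)}=\Tr\bigl((A^f_\s)^r\bigr)$ with $r=2\lceil\log d_{k-2}\rceil$, a sum over closed walks of length $r$ whose conditional expectation under a partial signing is computable term by term, which is non-negative (as $r$ is even) and dominates $\norm{A^f_\s}^r$. The probabilistic analysis (\pref{lem:passover} plus \pref{lem:father-of-inverse-mixing-lemma}) shows $\E[Y^{(\s)}]\le (C\beta')^r$, and the conditional-expectations argument keeps $Q=\sum_\s(Y^{(\s)}+Z^{(\s)})$ below its expectation, where the $Z^{(\s)}$ are your sparseness indicators scaled by a large constant $\gamma$ so that any single violation forces $Q>\E[Q]$. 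Taking $r$-th roots of $Y^{(\s)}\le\E[Q]\le 2|X(k-2)|(C\beta')^r$ is also precisely where the $2^{5k}$ factor in $\lambda'$ comes from, via $|X(k-2)|^{1/r}\le d_{k-2}^{10k/(2\log d_{k-2})}=2^{5k}$ — a loss your accounting has no source for. The remainder of your outline (using \pref{claim:expander-is-sparse} to seed $\beta=2\lambda$, the invariants under iteration, and the role of the hypotheses $d_{k-1}>2^{10k}$ and $|X_0(k-2)|\le d_{k-2}^{10k}$) matches the paper and is fine.
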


This explicit construction generalizes the explicit construction for expanders given in \cite{BiluL2006}, which is based on the conditional probabilities method \cite[Chapter 16]{alon2016probabilistic}.

We first observe that under the assumption that the base complex is sparse (as in \cref{def:sparse}) and that \(|X(k-2)|\) is not too large, then a random local lift of \(X\) is also sparse and is a high dimensional expander with high probability. Then, we explain how we can find such a lift deterministically by greedily selecting the values \(f(\tau)\) one $k$-face at a time.

\begin{lemma}\label{lem:passover}
    Let $X$ be a $k$-dimensional, $(d_0,\ldo,d_{k-1})$-regular and $(\beta,\log{d_{k-2}})$-sparse simplicial complex
    so that \(\beta \geq \alpha(d_{k-1})\) and  $|X(k-2)|\le d_{k-2}^{\log {d_{k-1}}}$.
    
    Then, for $f:X(k)\to\{\pm 1\}$ drawn uniformly at random, with probability at least \(1- d_{k-2}^{-3\log {d_{k-1}}}\):
    
    \begin{enumerate}
        \item\label{itm:norm} For every $\s\in X(k-2)$ and every \(S,T \subseteq X_\s(0)\): $\Abs{\Iprod{\indvec{S},A^f_\s \indvec{T}}}\le \beta \sqrt{|S| |T|}$.
        \item\label{itm:sprs}
        The local lift $\what{X} = \what{X}^f$ is $(\beta,\log{d_{k-2}}+1)$-sparse. \(\qed\)
    \end{enumerate}
    
\end{lemma}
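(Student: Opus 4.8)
The plan is to deduce \pref{lem:passover} from its graph analogue \pref{lem:expander-is-sparse} by applying that lemma separately to each $(k-2)$-link of $X$ and then taking a union bound over $X(k-2)$; the hypothesis $|X(k-2)|\le d_{k-2}^{\log d_{k-1}}$ is exactly what makes the union bound go through. All the structural facts needed to pass between $X$ and $\what{X}$ at the level of $(k-2)$-links are already in hand: \pref{lem:const-structure-of-links} identifies $\what{X}_{\hat{\s}}$ with a graph lift of $X_\s$, and \pref{obs:exist-ind-values-for-edges} says the induced edge-signing is uniform i.i.d.\ when $f$ is.

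First I would fix a face $\s\in X(k-2)$. By $(d_0,\ldo,d_{k-1})$-regularity the link $X_\s$ is a $d_{k-1}$-regular graph on exactly $d_{k-2}$ vertices, and by the hypothesis that $X$ is $(\beta,\log d_{k-2})$-sparse, $X_\s$ is $(\beta,\log d_{k-2})$-sparse as a graph; moreover $\beta\ge\alpha(d_{k-1})=10\sqrt{k^2\log d_{k-1}/d_{k-1}}\ge 10\sqrt{\log d_{k-1}/d_{k-1}}$. Hence $X_\s$ (with number of vertices $n=d_{k-2}$ and degree $d=d_{k-1}$) meets the hypotheses of \pref{lem:expander-is-sparse}. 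By \pref{obs:exist-ind-values-for-edges}, if $f\colon X(k)\to\prodft$ is uniform with independent coordinates then the edge-signing $f_\s$ of $X_\s$ given by $f_\s(e)=f(\s\sqcup e)$ is again uniform with independent coordinates, and so is $-f_\s$. Applying \pref{lem:expander-is-sparse} to $X_\s$ with the signing $\eps f_\s$: for each $\eps\in\prodft$, except with probability at most $d_{k-2}^{-4\log d_{k-1}}$, we have $\Abs{\Iprod{\indvec{S},A^{\eps f_\s}_\s\indvec{T}}}\le\beta\sqrt{|S||T|}$ for all $S,T\subseteq X_\s(0)$ and $\what{X_\s}^{\,\eps f_\s}$ is $(\beta,\log d_{k-2}+1)$-sparse. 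Since $\Iprod{\indvec{S},A^{-f_\s}_\s\indvec{T}}=-\Iprod{\indvec{S},A^{f_\s}_\s\indvec{T}}$, the case $\eps=+1$ already gives the first assertion of the lemma for this $\s$; the case $\eps=-1$ is used only for its second assertion.

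Then I would transfer to $\what{X}$ and union bound. By \pref{lem:const-structure-of-links}, every $\hat{\s}\in\what{X}(k-2)$ with $\pi(\hat{\s})=\s$ has link $\what{X}_{\hat{\s}}$ isomorphic to the $f_{\hat{\s}}$-induced lift of $X_\s$ with $f_{\hat{\s}}=\sign(\hat{\s})f_\s$, i.e.\ $\what{X}_{\hat{\s}}=\what{X_\s}^{\,\eps f_\s}$ for $\eps=\sign(\hat{\s})\in\prodft$. Thus, on the intersection of the good events above for all $\s\in X(k-2)$ and both $\eps\in\prodft$ — a family of at most $2|X(k-2)|$ events — the first assertion of \pref{lem:passover} holds for every $\s$, and every $(k-2)$-link of $\what{X}$ is $(\beta,\log d_{k-2}+1)$-sparse, which by \pref{def:sparse} is precisely the second assertion. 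The complement has probability at most
\[
2\,|X(k-2)|\cdot d_{k-2}^{-4\log d_{k-1}}\ \le\ 2\,d_{k-2}^{\,\log d_{k-1}-4\log d_{k-1}}\ =\ 2\,d_{k-2}^{-3\log d_{k-1}},
\]
using $|X(k-2)|\le d_{k-2}^{\log d_{k-1}}$, which is the stated bound up to a harmless constant factor (and in fact the factor $2$ can be dropped: the good event of \pref{lem:expander-is-sparse} for the signing $f_\s$ controls $A^{f_\s}_\s$ in a sign-symmetric way and hence already certifies $\what{X_\s}^{-f_\s}$ is sparse as well).

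I do not expect a genuine obstacle here: essentially all the work is packaged in \pref{lem:expander-is-sparse} (a mild sharpening of \cite[Lemma 3.4]{BiluL2006}) together with the structural facts \pref{lem:const-structure-of-links} and \pref{obs:exist-ind-values-for-edges}. The only points requiring care are (i) that one must control \emph{both} induced signings $\pm f_\s$ over each $\s$, since $\what{X}(k-2)$ contains lifts $\hat{\s}$ of $\s$ of either $\sign$ (equivalently, one should check that \pref{lem:expander-is-sparse} applied to $f_\s$ already certifies sparseness of both $\what{X_\s}^{f_\s}$ and $\what{X_\s}^{-f_\s}$), and (ii) bookkeeping the union bound so that the hypothesis $|X(k-2)|\le d_{k-2}^{\log d_{k-1}}$ collapses the per-link failure probabilities into the claimed $d_{k-2}^{-3\log d_{k-1}}$.
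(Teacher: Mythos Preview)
Your proposal is correct and matches the paper's own approach: the paper explicitly states that the proof ``follows by applying \pref{lem:expander-is-sparse} to every link and taking a union bound over the links'' and omits the details as a direct calculation. Your write-up is in fact more careful than the paper's, as you explicitly address the need to control both signings $\pm f_\s$ (to cover all $\hat{\s}$ with $\pi(\hat{\s})=\s$) and track the resulting constant.
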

We comment that the condition $|X(k-2)|\le d_{k-2}^{\log {d_{k-1}}}$ may seem odd at first glance. However, similar to \pref{rem:degrees-eventually-satisfy}, this is eventually satisfied by every sequence \(\set{X_i}_{i=0}^\infty\) where \(X_{i+1}\) is a local lift of \(X_i\). Thus, we do not lose too much generality by assuming it.

The proof of \cref{lem:passover} follows by applying \pref{lem:expander-is-sparse} to every link and taking a union bound over the links. We omit the proof since it is a direct calculation.

The deterministic construction mentioned at the beginning of this section is composed of iterative applications of the local lift, where each application is according to the algorithm described in the following lemma.

\begin{lemma}\label{lem:single_step_deterministic}
    Let $X$ be a $k$-dimensional $(d_0,\ldo,d_{k-1})$-regular $(\beta,\log{d_{k-2}})$-sparse simplicial complex with $d_{k-1} > 2^{10k}$, \(\beta \geq \alpha(d_{k-1})\) and such that $|X(k-2)|\leq d_{k-2}^{10 k}$.
    
    Then, there is a deterministic $\poly{(|X(0)|^k)}$ time algorithm for finding a function $f:X(k)\to\{\pm 1\}$ such that:
    \begin{enumerate}
        \item \label{item:link_expn} For every $\s\in X(k-2)$, 
        $\|A^f_\s\|=O\Paren{2^{5k}\beta \left (1+\log\frac{1}{\beta} \right )}$.
        \item \label{item:link_sprs} 
        $\what{X}^f$ is $(\beta,\log{d_{k-2}}+1)$-sparse.
    \end{enumerate}
\end{lemma}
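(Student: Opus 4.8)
The plan is to derandomize \pref{lem:passover} by the method of conditional expectations, adapting the derandomization of \cite{BiluL2006} so that it runs over all $(k-2)$-links at once. Concretely one fixes an arbitrary order $X(k)=\{\tau_1,\dots,\tau_m\}$ on the top faces and assigns the signs $f(\tau_1),f(\tau_2),\dots$ one at a time, each time choosing the value in $\{\pm1\}$ that does not increase a poly-time-computable pessimistic estimator. To set up that estimator I would first record the underlying probabilistic fact, which is essentially the per-link content of \pref{lem:passover}: applying \pref{lem:expander-is-sparse} to each link $X_\s$ (which is $d_{k-1}$-regular on $d_{k-2}$ vertices and $(\beta,\log d_{k-2})$-sparse, with $\beta\ge\alpha(d_{k-1})$) and then \pref{lem:father-of-inverse-mixing-lemma}, the event $C_\s$ that $\what X^f_\s$ is not $(\beta,\log d_{k-2}+1)$-sparse, or that $\|A^f_\s\|$ is not $O(\beta(1+\log\frac1\beta))$, has probability at most $2d_{k-2}^{-4\log d_{k-1}}$; since $|X(k-2)|\le d_{k-2}^{10k}$ and $d_{k-1}>2^{10k}$ (so $\log d_{k-1}>10k$), the union bound $\sum_{\s\in X(k-2)}\Pr_f[C_\s]$ is well below $1$, so a good $f$ exists.

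The obstacle to turning this into an algorithm is that $\Pr[C_\s\mid f|_{\{\tau_1,\dots,\tau_j\}}]$ is a spectral quantity which we cannot evaluate in polynomial time. I would therefore replace each $C_\s$ by the explicit collection of elementary events appearing inside the proof of \pref{lem:expander-is-sparse}: for every pair of disjoint $S,T$ in the vertex set of $X_\s$ (or of the lift $\what X_\s$) whose union induces a connected subgraph of size at most $\log d_{k-2}+1$, the event $B^{S,T}_\s$ that $|\langle\indvec S,M\indvec T\rangle|$ exceeds the relevant threshold, where $M$ is the corresponding signed or lifted adjacency operator. By \pref{rem:connected-sparsity} together with the component-decomposition argument of \cite{BiluL2006}, avoiding all the $B^{S,T}_\s$ already forces \pref{item:link_expn} and \pref{item:link_sprs}. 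The three features of these events that are needed are: (i) there are only $\poly(|X(0)|^k)$ of them, since each link has at most $\poly(d_{k-2})$ connected sets of the relevant size once $d_{k-1}$ and $k$ are treated as constants; (ii) $B^{S,T}_\s$ depends only on $f$ restricted to the $\le\log d_{k-2}+1$ top faces that contain $\s$ and meet $S\cup T$; and (iii) for every partial assignment the conditional probability of $B^{S,T}_\s$ is a Hoeffding tail bound on a sum of independent $\pm1$ variables, hence explicitly computable in polynomial time.

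Given this, the algorithm and its analysis are the standard conditional-expectation ones. Let $\Phi\big(f(\tau_1),\dots,f(\tau_j)\big)$ be the sum, over all $\s$ and all relevant $(S,T)$, of the Hoeffding upper bound on $\Pr[B^{S,T}_\s\mid f(\tau_1),\dots,f(\tau_j)]$. It starts below $1$ by the union bound above, and since the average of its two one-step extensions equals its current value, some choice of $f(\tau_{j+1})$ keeps it from growing; after $m$ steps $\Phi$ is a sum of $\{0,1\}$-indicators that is still below $1$, so all $B^{S,T}_\s$ are avoided and the resulting $f$ satisfies \pref{item:link_expn} and \pref{item:link_sprs}. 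Evaluating $\Phi$ at each step costs $\poly(|X(0)|^k)$ by (i)--(iii), and $m\le\binom{|X(0)|}{k+1}$, so the total running time is $\poly(|X(0)|^k)$.

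The step I expect to be the crux is the higher-dimensional bookkeeping hidden in ``avoiding all $B^{S,T}_\s$ forces \pref{item:link_expn}'': unlike the graph case of \cite{BiluL2006}, a single sign $f(\tau)$ perturbs the signed adjacency operators of all $\binom{k+1}{2}$ links $\s\subset\tau$ simultaneously, so the events $B^{S,T}_\s$ are genuinely coupled across different links. One has to check that the Hoeffding estimator remains a valid upper bound under this coupling, and carefully count the connected small configurations that a fixed top face can participate in. This count carries a $\binom{k+1}{2}$-per-vertex overhead, which is what forces the union bound to be run with a coarser threshold of order $2^{O(k)}\beta$ in the signed-matrix inequality that is eventually fed into \pref{lem:father-of-inverse-mixing-lemma}, and hence produces the $O(2^{5k}\beta(1+\log\frac1\beta))$ bound of \pref{item:link_expn}; it is also where the hypotheses $d_{k-1}>2^{10k}$ and $|X(k-2)|\le d_{k-2}^{10k}$ get consumed.
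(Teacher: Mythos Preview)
Your overall plan---the method of conditional expectations run simultaneously over all $(k-2)$-links---is correct, and it matches the paper's strategy. The gap is in your choice of pessimistic estimator for \pref{item:link_expn}. You propose to track only the Hoeffding tail bounds for the events $B^{S,T}_\s$ with $|S\cup T|\le\log d_{k-2}+1$, and then assert that avoiding all of these forces the spectral bound via ``the component-decomposition argument of \cite{BiluL2006}''. But \pref{lem:father-of-inverse-mixing-lemma} needs $|\langle\indvec S,A^f_\s\indvec T\rangle|\le\alpha\sqrt{|S||T|}$ for \emph{all} disjoint $S,T$, not just small ones, and there is no deterministic reduction from the small-set condition to the large-set one: the naive partition $S=\bigcup S_i$, $T=\bigcup T_j$ loses a factor of $\sqrt{pq}$ in the bound. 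In \cite{BiluL2006} the large-set case is handled probabilistically inside the proof of their Lemma~3.4 (our \pref{lem:expander-is-sparse}), and their own derandomization does \emph{not} proceed by tracking those events. So your estimator $\Phi$ is a valid pessimistic estimator for \pref{item:link_sprs}, but it does not upper-bound the probability that \pref{item:link_expn} fails.

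The paper fixes this by using a different estimator for the spectral part: for each $\s$ it takes $Y^{(\s)}=\Tr\bigl((A^f_\s)^r\bigr)$ with $r=2\lceil\log d_{k-2}\rceil$, viewed as a sum of signed closed walks of length $r$. This single quantity directly dominates $\|A^f_\s\|^r$, its conditional expectation under any partial assignment is computable in $\poly(|X(0)|^k)$ time (a walk contributes $0$ unless every unassigned edge appears with even multiplicity), and its unconditional expectation is bounded using \pref{lem:passover} as a black box. The sparsity part \pref{item:link_sprs} is then handled by separate indicator variables $Z^{(\s)}_{S,T}$ over the polynomially many connected pairs of size exactly $\log d_{k-2}+1$, weighted by a large constant $\gamma$ so that a single violation already exceeds $\E[Q]$. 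Your discussion of cross-link coupling and the $\binom{k+1}{2}$ overhead is a red herring: the $2^{5k}$ factor and the hypotheses $d_{k-1}>2^{10k}$, $|X(k-2)|\le d_{k-2}^{10k}$ are consumed not in any coupling argument but simply in the inequality $|X(k-2)|\cdot(C_2\beta')^r<\gamma$ that makes the combined estimator $Q=\sum_\s(Y^{(\s)}+Z^{(\s)})$ start below the threshold.
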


The proof of \cref{lem:single_step_deterministic} uses the method of conditional probabilities. The main idea is that, given the conditions on the input complex, we can define random variables denoted $Z^{(\s)}$, which serve as ``error'' indicators, where these errors occur with very small probability. By defining another set of random variables $Y^{(\s)}$ which correlate with the links' expansions, and amplifying the impact of each error, we are able to choose \(f(\tau)\) $k$-face by $k$-face, while tracking the expected value of the sum of those variables efficiently and making sure no error occurs.

We need the following claim, already implicit in \cite{BiluL2006}. 
\begin{claim}\label{clm:sparse_preserved}
    Let $X$ be a hyper-regular $k$-dimensional complex that is $(\beta,t)$-sparse. Then, for every $f:X(k)\to\{\pm 1\}$, $\what{X}^f$ is $(\beta,t)$-sparse.
\end{claim}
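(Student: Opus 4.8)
The plan is to reduce the claim, via \pref{lem:const-structure-of-links}, to a statement about a single graph and its $2$-lifts, and then settle that with a short combinatorial count followed by an AM--GM estimate. By the complex version of \pref{def:sparse}, asserting that $\what{X}^f$ is $(\beta,t)$-sparse amounts to showing that for every $\hat{\s}\in\what{X}^f(k-2)$ the graph $(\what{X}^f)_{\hat{\s}}$ is $(\beta,t)$-sparse. By the second item of \pref{lem:const-structure-of-links}, for such $\hat{\s}$ with $\s=\pi(\hat{\s})$ the graph $(\what{X}^f)_{\hat{\s}}$ is isomorphic to the $f_{\hat{\s}}$-induced lift of $X_\s$, and $X_\s$ is a $d_{k-1}$-regular $(\beta,t)$-sparse graph because $X$ is hyper-regular and $(\beta,t)$-sparse. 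So it suffices to prove the purely graph-theoretic fact: if $G=(V,E)$ is a $d$-regular $(\beta,t)$-sparse graph and $g:E\to\{\pm1\}$ is an arbitrary signing, then the $g$-induced lift $\what{G}=\what{G}^g$ of \pref{def:function-induced-lift} is $(\beta,t)$-sparse.

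For this graph statement, fix $\hat{S},\hat{T}\subseteq V\times\{\pm1\}$ with $|\hat{S}\cup\hat{T}|\le t$ and write $\hat{S}=S^{+}\times\{+1\}\sqcup S^{-}\times\{-1\}$ and $\hat{T}=T^{+}\times\{+1\}\sqcup T^{-}\times\{-1\}$, so $|\hat{S}|=|S^{+}|+|S^{-}|$ and $|\hat{T}|=|T^{+}|+|T^{-}|$. Since $\what{G}$ is $d$-regular, $d\,\Iprod{\indvec{\hat{S}},A_{\what{G}}\indvec{\hat{T}}}$ is the number of ordered adjacent pairs $(\hat{v},\hat{u})\in\hat{S}\times\hat{T}$. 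I would bound this by projecting each such pair to $(\pi(\hat{v}),\pi(\hat{u}))\in E$ and counting preimages. Given an ordered edge $(v,u)$, the lift of $u$ adjacent to $v^{i}$ is $u^{i\,g(\{v,u\})}$, so a preimage in $\hat{S}\times\hat{T}$ exists only if $v\in S^{+}\cup S^{-}$ and $u\in T^{+}\cup T^{-}$, and — this is the point that keeps the constant equal to $\beta$ rather than $\sqrt2\,\beta$ — there are two preimages only when $v\in S^{+}\cap S^{-}$ and $u\in T^{+}\cap T^{-}$ (both $u^{g}$ and $u^{-g}$ must lie in $\hat{T}$). Summing the number of edges with at least one preimage together with the number of edges with two preimages gives
\[
\Iprod{\indvec{\hat{S}},A_{\what{G}}\indvec{\hat{T}}}\;\le\;\Iprod{\indvec{S^{+}\cup S^{-}},A_{G}\indvec{T^{+}\cup T^{-}}}+\Iprod{\indvec{S^{+}\cap S^{-}},A_{G}\indvec{T^{+}\cap T^{-}}}.
\]

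Both sets $(S^{+}\cup S^{-})\cup(T^{+}\cup T^{-})=(S^{+}\cup T^{+})\cup(S^{-}\cup T^{-})$ and $(S^{+}\cap S^{-})\cup(T^{+}\cap T^{-})$ have size at most $|S^{+}\cup T^{+}|+|S^{-}\cup T^{-}|=|\hat{S}\cup\hat{T}|\le t$, so applying $(\beta,t)$-sparsity of $G$ to each term yields $\Iprod{\indvec{\hat{S}},A_{\what{G}}\indvec{\hat{T}}}\le\beta\bigl(\sqrt{|S^{+}\cup S^{-}|\,|T^{+}\cup T^{-}|}+\sqrt{|S^{+}\cap S^{-}|\,|T^{+}\cap T^{-}|}\bigr)$. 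Setting $A=|\hat{S}|$, $B=|\hat{T}|$, $p=|S^{+}\cap S^{-}|\le A/2$, $q=|T^{+}\cap T^{-}|\le B/2$, the remaining inequality is $\sqrt{(A-p)(B-q)}+\sqrt{pq}\le\sqrt{AB}$; squaring reduces it to $2\sqrt{pq(A-p)(B-q)}\le q(A-p)+p(B-q)$, which is exactly AM--GM. This produces $\Iprod{\indvec{\hat{S}},A_{\what{G}}\indvec{\hat{T}}}\le\beta\sqrt{|\hat{S}|\,|\hat{T}|}$, finishing the argument.

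I expect the only real obstacle to be this bookkeeping around double-covered vertex pairs: a naive "each pair downstairs has at most two preimages upstairs" bound loses a factor of $2$, and the claim is true only because double-covering forces simultaneous membership in $S^{+}\cap S^{-}$ and $T^{+}\cap T^{-}$, which is precisely what the final AM--GM step is tailored to absorb. Everything else (the reduction to a single link via \pref{lem:const-structure-of-links}, the translation of $\Iprod{\indvec{\cdot},A\indvec{\cdot}}$ into edge counts, and the set-size bookkeeping) is routine.
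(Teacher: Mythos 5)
Your proof is correct. Note that the paper itself does not prove this claim at all---it is stated without proof as ``already implicit in \cite{BiluL2006}''---so there is nothing to compare against except the Bilu--Linial provenance; your write-up supplies the missing argument in full. The reduction to the graph statement via \pref{lem:const-structure-of-links} and \pref{def:sparse} is exactly right (hyper-regularity guarantees the links $X_\s$ and $\what{X}_{\hat\s}$ are $d_{k-1}$-regular, so the $\frac{1}{d}$ normalizations of $A_G$ and $A_{\what{G}}$ match), and the graph statement itself is handled correctly: your preimage count
$[v\in S^{+}\cup S^{-}][u\in T^{+}\cup T^{-}]+[v\in S^{+}\cap S^{-}][u\in T^{+}\cap T^{-}]$
is a valid upper bound on the number of lifted ordered pairs over an ordered edge $(v,u)$, both relevant unions have size at most $|S^{+}\cup T^{+}|+|S^{-}\cup T^{-}|=|\hat S\cup\hat T|\le t$, and the final inequality $\sqrt{(A-p)(B-q)}+\sqrt{pq}\le\sqrt{AB}$ is indeed equivalent to $2\sqrt{q(A-p)\cdot p(B-q)}\le q(A-p)+p(B-q)$ after squaring. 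You correctly identified the one place where a careless argument loses a factor (bounding each downstairs edge by two preimages gives only $\sqrt2\,\beta$), and your intersection/union decomposition is the right fix. The only stylistic remark is that the containment $(S^{+}\cap S^{-})\cup(T^{+}\cap T^{-})\subseteq S^{+}\cup T^{+}$ gives the size bound for the second term even more directly than the sum you quote, but this changes nothing.
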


\begin{proof}[Proof of \cref{lem:single_step_deterministic}]

     Denote $m=d_{k-2}$, $d=d_{k-1}$ and $\beta'=\beta \left (1+\log\frac{1}{\beta} \right )$. Our assumption on $X$ then tells that $|X(k-2)|\leq m^{10 k}$. Fix $\s\in X(k-2)$, and let $r=2\lceil\log{m}\rceil$.
    
    For using the conditional probabilities method, we define a random variable $Q=Q(f)$ such that:\begin{itemize}
        \item $\E[Q(f)]$ can be calculated in $\poly{(m^k)}$ time when $f(\tau)$ is fixed on a subset of $\tau\in X(k)$ and uniformly random on the rest, and
        \item If $Q(f) \le \E[Q(f)]$ then $f$ is \emph{good}, that is, it satisfies \pref{item:link_expn} and \pref{item:link_sprs} in \pref{lem:single_step_deterministic}.
    \end{itemize}
    Once these conditions are met, we can iteratively assign values to $f$ on the $k$-faces, calculating the expectation after each assignment such that the conditional expectation remains smaller than the unconditioned expectation. As $|X(k)|\le|X(0)|^k$, the decision tree for the conditional probabilities method is of polynomial depth, and we will complete constructing a good $f$ in polynomial time. Henceforth, the rest of the proof is dedicated to defining $Q$, such that the above conditions hold. En route to this goal, we will define two sets of random variables, $Y^{(\s)}=Y^{(\s)}(f)$ and $Z^{(\s)}=Z^{(\s)}(f)$, which are intended to define our control over \pref{item:link_expn} and \pref{item:link_sprs}, respectively. The variable \(Q\) will be the sum over all \(Y^{(\s)},Z^{(\s)}\).
    
    For any closed walk $p$ of length $r$ in $X_\s$, we define a random variable $Y^{(\s)}_p$ that equals the product of the signs along $p$ according to $f$, divided by $d^r$. Note that the expectation of $Y^{(\s)}_p$ can be calculated in polynomial time, even when some values of $f$ have been fixed: if there is at least one unsigned edge with an odd number of appearances in $p$ then $\E\left[Y^{(\s)}_p\right]=0$, otherwise
    as $\E\left[Y^{(\s)}_p\right]$ is $\pm \frac{1}{d^r}$ according to $f$, even appearances cancel.
     We also define $Y^{(\s)}=\sum_{p} Y^{(\s)}_p$, and claim that $Y^{(\s)}=\Tr\Paren{(A^f_\s)^r}$ (where $\Tr(\cdot)$ is the trace function) since $\Tr\Paren{(A^f_\s)^r}$ is \(\frac{1}{d^{r}}\)-proportional to the sum over walks of length \(r\). As \(r\) is even, $\Tr\Paren{(A^f_\s)^r} \geq 0$ so \(Y^{(\s)}\) is a non-negative random variable. More over, the trace upper bounds $\norm{A^f_\s}^r$ hence upper bounding $Y^{(\s)}$ simultaneously for all $\s$ will enable us to satisfy \pref{item:link_expn}.

    By the first item in \cref{lem:passover} combined with \cref{lem:father-of-inverse-mixing-lemma}, with probability at least 
    $1- m^{-3\log {d}}$
    the lift is a \(\left ( C_0\beta' \right )\)-two sided high dimensional expander for some constant $C_0$. Since the trace of a matrix equals the sum of its eigenvalues, we get, in this case, that
    \[Y^{(\s)}=\Tr((A^f_\s)^r)\le m \left(C_0 \beta'\right)^r \leq \left(C_1 \beta'\right)^r,\]
    where $C_1$ is some constant.
   
    As for any $f$ it holds that $\|A^f_\s\|\le 1$, we get that $Y^{(\s)}(f)\le m$, and overall
    \[
    \E\left[Y^{(\s)}\right]
    \le \left(C_1 \beta'\right)^r+m^{-3\log {d}}\cdot m
    \le \left(C_2 \beta'\right)^r
    \]
    for some constant $C_2$.
    
    We turn towards defining our second random variable \(Z^{(\s)}\). 
    Let $\gamma$ be a constant defined later in the proof. 
    For every 
    $S,T \subseteq X_\s(0)$ with $|S\cup T|=\log{m}+1$ 
    and the graph induced by $S$ and $T$ on in $X_\s$ is connected, we define \[
    Z^{(\s)}_{S,T}=\begin{cases}
		\gamma, & \text{if } 
  \Abs{\Iprod{\indvec{S},A^f_\s \indvec{T}}} > \beta \sqrt{|S| |T|}, \\
		0, & \text{otherwise}.
	\end{cases}
    \]

    As in the proof of \cref{lem:exist-main-lemma-bounding-radius-for-pmz1-vectors} (see  \eqref{eq:final-bound-for-pr(B)}), the probability $\Pr[Z^{(\s)}_{S,T}>0]<d^{-10k^2\log{m}}$
    and therefore \[
    \E\left[Z^{(\s)}_{S,T}\right]\le d^{-10k^2\log{m}}\cdot \gamma.
    \]
    We also define $Z^{(\s)}=\sum_{S,T} Z^{(\s)}_{S,T}$.
    As in the proof of \cref{claim:exist-bounding-number-dependent-events}, there
    are at most $m(3d)^{\log{m}+1}<d^{3\log m}$ pairs $S,T$ with $|S\cup T|=\log{m}+1$ that induce a connected subgraph, since we can bound the number of such pairs using the number of subtrees of $X_\s$ with $\log{m}+1$ vertices rooted in $X_\s$ (times the number of ways to choose sets \(S,T\) that span the tree).
    Overall, we get that \[
    \E\left[Z^{(\s)}\right]\le d^{3\log m} \cdot\E\left[Z^{(\s)}_{S,T}\right]\le d^{-3k^2 r} \cdot\gamma.
    \]
    Note that like the variables $Y_p$, the expectation of $Z^{(\s)}_{S,T}$ can be calculated in polynomial time even for partial assignments for the values of $f$, as the number of $S,T$ involved in the summation is bounded by a polynomial in \(m^k\). Notice also that when $Z^{(\s)}(f)=0$, \pref{item:link_sprs} in \pref{lem:single_step_deterministic} holds true: $Z^{(\s)}_{S,T}$ implies that the sparsity inequality holds for all pairs \(S,T\) of size $\log{m}+1$ that induce a connected subgraph. For sets with support less or equal to \(\log m\), sparsity holds by \cref{clm:sparse_preserved}. Thus by \cref{rem:connected-sparsity} the lift is \((\beta,1+\log m)\)-sparse.
    
    We next define $Q^{(\s)}=Y^{(\s)}+Z^{(\s)}$, and set $\gamma=\left(C_2\beta' d^{3k^2}\right)^r$.
    Notice that \[
    \E[Q^{(\s)}]=\E[Y^{(\s)}]+\E[Z^{(\s)}] \le  2\left(C_2 \beta'\right)^r.
    \]
    
    We can now define $Q=\sum_{\s\in X(k-2)}Q^{(\s)}$ and get 
    \begin{equation}\label{eq:less_than_1}
    \E[Q] \leq |X(k-2)|\cdot 2\left(C_2 \beta'\right)^r
    \le 2m^{10k} \left(C_2 \beta'\right)^r
    \leq
    2\cdot2^{5kr} \left(C_2 \beta'\right)^r
    <\gamma. 
    \end{equation}  
    We claim that if \(Q\) is less or equal to its expectation, both \pref{item:link_expn} and \pref{item:link_sprs} hold. Indeed, $Z^{(\s)}=0$ for every $\s\in X(k-2)$, as the minimal non-zero value of $Z^{(\s)}$ is 
    $\gamma$.
    We further note that in this case, taking into account the non-negativity of the $Y^{(\s)}$-s,
    \[
    Y^{(\s)}\le Q\le \E[Q]
    \le 2\cdot2^{5kr} \left(C_2 \beta'\right)^r
    \]
    for every $\s\in X(k-2)$, and hence 
    \[\norm{A^f_\s}\le \left(Y^{(\s)}\right)^{\frac{1}{r}}
    \le 2 \cdot 2^{5k} \left(C_2 \beta'\right),
    \]
    therefore \pref{item:link_expn} is satisfied as well.
\end{proof}

We are now ready to prove our main result in this section.

\begin{proof}[Proof of \cref{cor:explicit_hdx}]
    Let \(\bar{d} = (d_0,d_1,\dots,d_{k-1}=d)\) and let $X_0$ be a $\bar{d}$-regular $\lambda$-two sided high dimensional expander for $\lambda>\alpha_k(d)$, such that  $|X_0(k-2)|\le d_{k-2}^{10k}$. By \pref{claim:expander-is-sparse}, it is also $(2\lambda,\log{d_{k-2}})$-sparse.
    
    Denote by $\mathcal{B'}$ the algorithm suggested by \cref{lem:single_step_deterministic}, and let \(X_1,X_2,\dots,X_i\) be such that $X_{j}=\mathcal{B'}(X_{j-1})$ for \(j\in [i]\).
    We set $X_i$ to be $\mathcal{B}$'s output.

    Let us show that, $X_i$ meets the guarantees of \cref{cor:explicit_hdx}. By \cref{obs:mult_2}, for every \(j \in [i]\), $X_j$ is \((2^{j} d_0,2^{j} d_1,\dots,2^j d_{k-2},d)\)-regular and \(|X_j(0)|=2^j |X_0(0)|\).
    
    In addition, one can verify by a direct calculation that, for any $j\in[i]$, \(|X_j(k-2)| = 2^{k-1} |X_{j-1}(k-2)|\), so if \(|X_{j-1}(k-2)| \leq d_{k-2}(X_{j-1})^{10k}\) then 
    \[|X_{j}(k-2)|=2^{k-1} |X_{j-1}(k-2)| \leq d_{k-2}(X_{j-1})^{10k}\cdot 2^{k-1} \leq d_{k-2}(X_j)^{10k}.\] 
    Thus, by induction and the fact that this inequality holds for \(X_0\), this holds for every \(j\).
    
    Finally, by \cref{lem:single_step_deterministic} one inductively obtains that for any $j$:
    \begin{enumerate}
        \item $X_j$ is an $O\Paren{2^{5k}\lambda \left (1+\log\frac{1}{\lambda} \right )}$-high dimensional expander.
        
        \item $X_j$ is $\Paren{2\lambda,\log{d_{k-2}(X_i)}}$-sparse.

        \item $X_j$ computed in time $\poly(|X_{j-1}(0)|^k)=\poly(2^{j-1}|X_{0}(0)|^k)$.
    \end{enumerate}
    as required.
\end{proof}

\printbibliography
\end{document}